\title{Strategy Synthesis for Global Window PCTL} %
\author{Benjamin Bordais}{Université Paris-Saclay, CNRS, ENS Paris-Saclay, LMF, 91190 Gif-sur-Yvette, France}{bordais@lsv.fr}{}{}
\author{Damien {Busatto-Gaston}}{Universit\'e libre de Bruxelles, Brussels, Belgium \and \url{https://di.ulb.ac.be/verif/dbusatto/} }{damien.busatto-gaston@ulb.be}{https://orcid.org/0000-0002-7266-0927}{} %
\author{Shibashis Guha}{Tata Institute of Fundamental Research, Mumbai, India \and
\url{https://www.tifr.res.in/~shibashis.guha/} }{shibashis.guha@tifr.res.in}{https://orcid.org/0000-0002-9814-6651}{}
\author{Jean-Fran\c{c}ois Raskin}{Universit\'e libre de Bruxelles, Brussels, Belgium}{jean-francois.raskin@ulb.be}{}{}
\authorrunning{B. Bordais, D. Busatto-Gaston, S. Guha, J.-F. Raskin} %
\keywords{Markov decision processes, synthesis, PCTL} %
\tikzset{>=latex,initial text=}
\tikzstyle{player0}=[state,draw,rounded rectangle,align=center,minimum size=6mm]
\tikzstyle{player1}=[state,draw,rectangle,align=center] %
\tikzstyle{player2}=[state,draw,rounded rectangle,align=center] %
\tikzstyle{action0}=[->]
\tikzstyle{action1}=[->,color=blue]
\tikzstyle{action2}=[->,color=red!80!black]
\tikzstyle{action3}=[->,color=green!80!black]
\tikzstyle{action4}=[->,color=orange!80!black]
\tikzstyle{action5}=[->,color=purple!80!black]
\tikzset{every loop/.style={looseness=7}} %
\renewclass\P{PTIME}
\renewclass\EXP{EXPTIME}
\renewclass\NEXP{NEXPTIME}
\newclass\APSPACE{APSPACE}
\newcommand\newmath[2]{\newcommand#1{\ensuremath{#2}\xspace}}
\newcommand\renewmath[2]{\renewcommand#1{\ensuremath{#2}\xspace}}
\newcommand\newmathope[2]{\newcommand#1{\ensuremath{\operatorname{#2}}\xspace}}
\newcommand\newmathopelimits[2]{\newcommand#1{\ensuremath{\operatornamewithlimits{#2}}\xspace}}
\newmath{\N}{\mathbb{N}}
\newmath{\Z}{\mathbb{Z}}
\newmath{\Q}{\mathbb{Q}}
\renewmath{\R}{\mathbb{R}}
\newmath{\Nspos}{\mathbb{N}_{>0}}
\newmath{\Qspos}{\mathbb{Q}_{>0}}
\newmath{\Rspos}{\mathbb{R}_{>0}}
\newmath{\Qpos}{\mathbb{Q}_{\geq 0}}
\newmath{\Rpos}{\mathbb{R}_{\geq 0}}
\newmathopelimits{\argmin}{\arg\min}
\newmathopelimits{\argmax}{\arg\max}
\newmath{\Dist}{\mathsf{Dist}}
\newmath{\Supp}{\mathsf{Supp}}
\newmath{\MC}{M}
\newmath{\MDP}{\mathcal M}
\newmath{\FPaths}{\mathsf{FPaths}}
\newmath{\IPaths}{\mathsf{Paths}}
\newmath{\first}{\mathsf{first}}
\newmath{\last}{\mathsf{last}}
\newmath{\Cyl}{\mathcal{C}}
\renewmath{\AP}{\mathsf{AP}}
\newmath{\f}{\mathsf{f}}
\newmath{\proba}{\mathbb{P}}
\newmath{\initState}{s_{{\sf init}}}
\newmath{\MDPstrat}{\MDPtoMC{\MDP}{\sigma}}
\newmath{\probastrat}{\mathbb{P}_{\sigma}}
\newmath{\ellmax}{\ell_{\max}}
\newmath{\strat}{\sigma}
\newmath{\stratW}{\partial}
\newmath{\MD}{\mathsf{MD}}
\newmath{\HD}{\mathsf{HD}}
\newmath{\MR}{\mathsf{MR}}
\newmath{\HR}{\mathsf{HR}}
\newmathope{\XX}{X}
\newmathope{\UU}{U}
\newmathope{\WW}{W}
\newmathope{\GG}{G}
\newmathope{\FF}{F}
\newmathope{\Forall}{A}
\newmathope{\Exists}{E}
\newmath{\Ell}{\mathcal L}
\newmath{\FO}{\text{FO}}
\newmath{\POLY}{\text{POLY}}
\newmath{\STATER}{\text{STATE0}}
\newmath{\PATH}{\text{PATH}}
\newmath{\STATE}{\text{STATE1}}
\newmath{\wmp}{\mathsf{WMP}}
\newmath{\sizeMin}{\ell}
\begin{document}
\maketitle

\begin{abstract}
  Given a Markov decision process (MDP) $M$ and a formula $\Phi$,
  the strategy synthesis problem asks if there exists a strategy $\sigma$ s.t.
  the resulting Markov chain $M[\sigma]$ satisfies $\Phi$.
  This problem is
  known to be undecidable for the probabilistic temporal logic PCTL.
  We study a class of formulae that can be seen as a
  fragment of PCTL where a local, bounded horizon property
  is enforced all along an execution.
  Moreover, we allow for linear expressions in the probabilistic inequalities.
  This logic is at the frontier of
  decidability, depending on the type of strategies considered.
  In particular, strategy synthesis is decidable when strategies
  are deterministic while the general problem is undecidable.
\end{abstract}

\section{Introduction}

  Given an MDP $M$ and a probabilistic temporal logic formula $\Phi$,
  the strategy synthesis problem is to determine if there exists
  a strategy $\sigma$ to resolve the nondeterminism in $M$ such that
  the resulting Markov chain (MC) $M[\sigma]$ satisfies $\Phi$, and if so,
  to construct one such strategy.
  The probabilistic temporal logic that we study in this paper allows us to express
  rich probabilistic global temporal constraints over a
  \emph{bounded} horizon that must be enforced along all computations.
  Let us illustrate our logic with a few examples.
  The formula
  $\Forall \GG ( \proba(\FF^{5} {\sf Good}) \geq 0.95)$
  expresses that it must always be the case, under the strategy $\sigma$,
  that along all computations, the probability to reach a good state
  within $5$ steps is at least $0.95$. This is a quantitative bounded horizon
  B\"uchi property. In addition, our logic allows for comparing the probability
  of different events:
  $\Forall \GG \proba(\FF^{5} {\sf Good}) \geq 2 \times  \proba(\FF^{10} {\sf Bad}))$
  expresses that under the strategy $\sigma$, along all computations,
  it is always the case that the probability to reach a good state within $5$ steps
  is at least twice the probability of reaching a bad state within $10$ steps.
  The ability to compare probabilities of different events, while not present
  in classical logics like PCTL, is necessary to express properties
  like probabilistic noninterference~\cite{Gray92}.
  This feature has been introduced and studied in probabilistic hyperlogics, e.g.~\cite{AB18},  where the ability to compare probabilities plays a central role in describing applications. While hyperlogics are very expressive and highly undecidable, we study here the ability to compare probabilities in a weaker logical setting in order to understand more finely the decidability border that probabilistic comparisons, and more generally linear expressions, induce.

  While the model-checking problem for PCTL
  and MDPs
  is decidable~\cite{BK08}, the synthesis problem is in general
  undecidable~\cite{10.1109/LICS.2006.48}.\footnote{%
  The difference between the two problems is essentially as follows:
  in the model-checking problem, each probabilistic operator in the formula
  is associated with one strategy (or scheduler) while in the synthesis problem,
  a \emph{unique} strategy is fixed and used for all the probabilistic operators.}
  Synthesis for PCTL~\cite{10.1109/LICS.2006.48}, HyperPCTL~\cite{AB18}, and
  as well as for our logic (as shown in Theorem~\ref{thm:gwpctl_co_re_hard})
  is undecidable. To recover decidability, we explore two options.
  First, we consider subclasses of strategies:
  memoryless deterministic strategies (\MD), memoryless randomized strategies (\MR),
  and history-dependant deterministic strategies (\HD) are important classes to be
  considered.
  Second, we identify syntactically defined sublogics
  with better decidability properties.
  For instance, while for PCTL objectives the synthesis problem for
  \HD strategies is highly undecidable
  ($\Sigma_1^1$-complete)~\cite{10.1109/LICS.2006.48}, it has been shown
  that the problem is decidable for the cases
  of \MD and \MR strategies~\cite{10.1109/LICS.2006.48}. The synthesis problem
  for the qualitative fragment of PCTL, where probabilistic operator
  can only be compared to constant $0$ and $1$, is decidable for \HD strategies.
  An important contribution of this paper is to show that the synthesis problem for
  our sublogics is decidable for \HD strategies. To the best of our knowledge,
  this is the first decidability result for a class of unbounded memory strategies
  (here \HD) and quantitative probabilistic temporal properties.

\subparagraph*{{\bf Main technical contributions}}
  We introduce the logic $L$-PCTL and two sublogics. $L$-PCTL extends PCTL
  with linear constraints over probability subformulae.
  We first study the window $L$-PCTL fragment that only allows bounded until
  or bounded weak until operators in the path formulae. The results for this
  fragment are presented in Table~\ref{tbl:global-window}(\subref{tbl:W-LPCTL})
  where columns distinguish between memoryless ({\sf M}) and history-dependent
  strategies ({\sf H}), and rows between deterministic ({\sf D})
  and randomized strategies ({\sf R}).
  Second, we study the \emph{global} window $L$-PCTL extension of this logic in which
  window formulae appear in the scope of an $\Forall \GG$ operator that imposes
  the window formula to hold on every state of every computation.
  The results for this fragment are presented
  in Table~\ref{tbl:global-window}(\subref{tbl:GW-LPCTL}).
  Third, we adapt results from the literature to the full logic as summarized
  in Table~\ref{tbl:global-window}(\subref{tbl:LPCTL}).
  An $L$-PCTL formula is \emph{flat} if it does not have nested probabilistic
  operators, while it is \emph{non-strict} if it does not contain strict
  comparison operators ($>$ or $<$)
  for comparing probability expressions. %

\begin{table}[t]

\caption{A summary of our results for the synthesis problem
  on MDPs for $L$-PCTL formulae.}
\centering
\begin{subfigure}[t]{0.5\textwidth}
\caption{synthesis for window $L$-PCTL}
\label{tbl:W-LPCTL}
\centering
\begin{tabular}{r|c|c}
   & {\sf M} & {\sf H} \\ \hline
  \begin{tabular}[c]{@{}c@{}}~\\{\sf D}\\~\end{tabular} &
    \begin{tabular}[c]{@{}c@{}}$\NP$-complete\\\cite{10.1007/1-4020-8141-3_38}\end{tabular} &
    \begin{tabular}[c]{@{}c@{}} $\PSPACE$-complete\\Prop.~\ref{prop:HD-window}\end{tabular} \\ \hline
  {\sf R} &
    \begin{tabular}[c]{@{}c@{}} $\PSPACE$\\\textsc{Sqrt-Sum}-hard\\Prop.~\ref{prop:memoryless-window-synthesis}, \cite{10.1109/LICS.2006.48}\end{tabular} &
    \begin{tabular}[c]{@{}c@{}} $\EXPSPACE$\\$\PSPACE$-hard\\Thm.~\ref{thm:window-synthesis}\end{tabular}
\end{tabular}
\end{subfigure}\hfill%
\begin{subfigure}[t]{0.5\textwidth}
\caption{synthesis for global window $L$-PCTL}
\label{tbl:GW-LPCTL}
\centering
\begin{tabular}{r|c|c}
   & {\sf M} & {\sf H} \\ \hline
  {\sf D} &
    $\NP$-complete &
    \begin{tabular}[c]{@{}c@{}} $2\EXP$\\$\EXP$-hard\\Prop.~\ref{prop:HD-global-window}\end{tabular} \\ \hline
  {\sf R} &
    \begin{tabular}[c]{@{}c@{}} $\PSPACE$\\\textsc{Sqrt-Sum}-hard\\Prop.~\ref{prop:MR-global-window}\end{tabular} &
    \begin{tabular}[c]{@{}c@{}} $\co\RE$-complete\footnote{if the formula is flat and non-strict}\\$\Sigma_1^1$-hard\\Thm.~\ref{thm:coRE-easy}, \ref{thm:gwpctl_co_re_hard}\end{tabular}
\end{tabular}
\end{subfigure}\hfill
\begin{subfigure}[t]{0.6\textwidth}%
\caption{synthesis for $L$-PCTL}
\label{tbl:LPCTL}
\centering
\begin{tabular}{r|c|c}
  strategies: & Memoryless & History-dependent \\ \hline
  \begin{tabular}[c]{@{}c@{}}~\\Deterministic\\~\end{tabular} &
    \begin{tabular}[c]{@{}c@{}}$\NP$-complete\\\cite{10.1007/1-4020-8141-3_38}\end{tabular} &
    \begin{tabular}[c]{@{}c@{}}$\Sigma_1^1$-complete\\\cite{10.1109/LICS.2006.48}\end{tabular} \\ \hline
  Randomized &
    \begin{tabular}[c]{@{}c@{}} $\EXP$\\\textsc{Sqrt-Sum}-hard\\\cite{10.1007/11590156_44}, \cite{10.1109/LICS.2006.48}\end{tabular} &
    \begin{tabular}[c]{@{}c@{}}$\Sigma_1^1$-hard\\\cite{10.1109/LICS.2006.48}\end{tabular}
\end{tabular}
\end{subfigure}

\label{tbl:global-window}
\end{table}

  Our two main technical contributions are focused on the synthesis problem
  for the global window $L$-PCTL logic. First, we introduce
  a fixpoint characterization of the set of strategies %
  that enforces an $L$-PCTL window property globally. This
  characterization is effective for \HD strategies, leads to a $2\EXP$ algorithm,
  and we provide an $\EXP$ lower bound. Furthermore, the fixpoint characterization
  allows us to prove that the synthesis problem is in $\co\RE$ for the class of
  history-dependent randomized (\HR) strategies for the flat and non-strict fragment
  of global window $L$-PCTL. Second, we prove that the synthesis problem
  for \HR strategies is undecidable with an original technique that reduces
  the halting problem of 2-counter Minsky machines (2CM) to our synthesis problem.
  We believe that the fixpoint characterization and the 2CM encoding are
  of independent interest.

  Finally, the satisfiability problem~\cite{DBLP:conf/lics/BrazdilFKK08} for PCTL
  (and its variants) can be reduced to the synthesis problem.
  The decidability of the satisfiability problem for PCTL is a long standing
  open problem. Our decidability result for the synthesis problem for \HD strategies
  and global window PCTL formulae can be transferred to the following version of the
  satisfiability problem: given a granularity $g$ for the probabilities, and
  a global window PCTL formula $\Phi$, does there exist an MC with granularity $g$
  that satisfies $\Phi$? (Theorem~\ref{thm:satisfiability}).
  This gives a new positive decidability result for the satisfiability problem with an unbounded horizon fragment of PCTL and unbounded MCs.

\subparagraph*{{\bf Related work}}
  The model-checking problem for PCTL is decidable~\cite{BK08} and should not
  be confused with the synthesis problem. In \cite{10.1109/LICS.2006.48}, the authors
  study the synthesis problem for PCTL on MDPs and stochastic games.
  In~\cite{10.1007/1-4020-8141-3_38} it is shown that both randomization
  and memory
  in strategies are necessary even for flat window PCTL formulae.
  Further, \cite{10.1007/1-4020-8141-3_38} shows that the synthesis of \MD strategies
  for PCTL objectives is $\NP$-complete, and \cite{10.1007/11590156_44} shows that \MR synthesis is in $\EXP$.\footnote{For the existence of \MR strategies
  for PCTL objectives, in the introduction of~\cite{10.1109/LICS.2006.48},
  it is claimed that the problem is in $\PSPACE$, with a reference
  to~\cite{10.1007/11590156_44}.
  However, in \cite{10.1007/11590156_44} only an $\EXP$ upper bound is proven,
  for the more general problem of stochastic games.
  The proof encodes the problem as a polynomial-size formula in the first-order
  theory of the reals with a fixed alternation of quantifiers
  so that deciding it is in $\EXP$.~%
  The claim seems to be that the complexity of their approach drops to $\PSPACE$
  when all states are controllable.
  There is no convincing argument there for that claim, in particular their formula
  still contains universal quantifiers.
  }
  For the qualitative fragment of PCTL, deciding the existence of \MR and \HD
  strategies have been shown to be $\NP$-complete and $\EXP$-complete,
  respectively~\cite{10.1109/LICS.2006.48}.
  As previously mentioned, the synthesis problems for \HD and \HR strategies
  in the general case of (quantitative) PCTL objectives are highly undecidable~\cite{10.1109/LICS.2006.48}.

  In~\cite{DFT20}, a probabilistic hyperlogic (PHL) has been introduced
  to study hyperproperties of MDPs.
  PHL allows quantification over strategies, and includes PCTL$^*$ and temporal
  logics for hyperproperties such as {\sc HyperCTL$^*$}~\cite{CFKMRS14}.
  Hence the model-checking problem in PHL can ask for the existence of a strategy
  for hyperproperties, and
  has been shown to be undecidable~\cite{DFT20}.
  Another related work is~\cite{ABBD20}, where {\sc HyperPCTL}~\cite{AB18} has been
  extended with strategy quantifiers, and studies hyperproperties over MDPs.
  The model-checking problem for this logic is also undecidable.
  In both of these undecidability proofs,
  the constructed formula contains unbounded finally ($\FF$) properties
  that cannot be expressed in the global window fragment of PCTL that we study here.
  In both \cite{DFT20} and \cite{ABBD20}, the model-checking problem is decidable
  when restricted to \MD strategies but is undecidable for \HD strategies.

  The PCTL satisfiability problem is open for decades.
  In~\cite{KR18}, decidability of finite and infinite satisfiability has been considered for several fragments of PCTL using unbounded finally ($\FF$) and unbounded always ($\GG$) operators.
  In~\cite{CK16}, satisfiability for bounded PCTL has been considered where the number of steps or horizon used in the operators is restricted by a bound.
  In \cite{ACJK21}, a problem related to the satisfiability problem called the feasibility problem has been studied. Given a PCTL formula $\varphi$, and a family of Markov chains defined using a set of parameters and with a fixed number of states, the feasibility problem is to identify a valuation for the parameters such that the realized Markov chain satisfies $\varphi$. In the satisfiability problem that we study here, the number of states is however not fixed a priori and can be arbitrarily large.

\section{Preliminaries}\label{sec:prelims}
  A \emph{probability distribution} on a finite set $S$ is
  a function $d:S\to [0,1]$ such that $\sum_{s\in S}d(s)=1$.
  We denote the set of all probability distributions on set $S$ by $\Dist(S)$.

\begin{definition}
  A \emph{Markov chain} (MC) is a tuple
  $\MC = \zug{S,\initState, \proba, \AP, L}$
  where $S$ is a countable set of states,
  $\initState \in S$ is an initial state,
  $\proba:S\to\Dist(S)$ is a transition function,
  $\AP$ is a non-empty finite set of atomic propositions,
  and $L:S \rightarrow 2^{\AP}$ is a labelling function.
\end{definition}
  If $\proba$ maps a state $s$ to a distribution $d$ so that $d(s')>0$,
  we write $s\xrightarrow{d(s')}s'$ or simply $s\rightarrow s'$,
  and we denote $\proba(s,s')$ the probability $d(s')$.
  We say that the atomic proposition $p$ holds on a state $s$ if $p\in L(s)$.

  A \emph{finite path} $\rho = s_0 s_1 \cdots s_i$ in an MC $\MC$
  is a sequence of consecutive states, so that for all $j\in[0,i-1]$,
  $s_j \rightarrow s_{j+1}$. %
  We denote $|\rho|=i$ the length of $\rho$, $\last(\rho) = s_i$
  and $\first(\rho) = s_0$.
  We also consider states to be paths of length $0$.
  Similarly, an \emph{infinite path} is an infinite sequence
  $\rho = s_0 s_1 \cdots$ so that for all $j\in\N$,
  $s_j \rightarrow s_{j+1}$. %
  If $\rho$ is a finite (resp. infinite) path $s_0 s_1 \cdots$,
  we let $\rho[i]$ denote $s_i$, $\rho[{:}i]$ denote the finite prefix
  $s_0 \cdots  s_{i}$, and $\rho[i{:}]$ denote the finite (resp. infinite)
  suffix $s_i s_{i+1} \cdots$.

  We denote the set of all finite paths in $\MC$ by $\FPaths_{\MC}$.
  We introduce notations for the subsets $\FPaths^i_{\MC}$
  (resp. $\FPaths^{\leq i}_{\MC}$, $\FPaths^{< i}_{\MC}$)
  of paths of length $i$ (resp. of length at most or less than $i$).
  Let $\FPaths_{\MC}(s)$ denote the set of paths $\rho$ in $\FPaths_{\MC}$
  such that $\first(\rho)=s$.
  More generally, $\FPaths_{\MC}(\rho)$ denotes the set of paths
  which admit $\rho$ as a prefix.
  Similarly, we let $\IPaths_{\MC}$ be the set of infinite paths of $\MC$,
  and extend the previous notations
  for fixing an initial state or a shared prefix.
  In particular, $\IPaths_{\MC}(\rho)$ is called the cylinder of $\rho$.

  If $\rho=s_0\dots s_i$ is a finite path and $\rho'=s_is_{i+1}\dots$ is a finite or infinite path so that
  $\first(\rho')=\last(\rho)$, let $\rho\cdot \rho'=s_0\dots s_is_{i+1}\dots$ denote their concatenation.

\begin{definition}
  Let $s$ be a state of an MC $\MC$.
  The MC $\MC$ naturally defines
  a \emph{probability measure} $\mu_{\MC}^s$ on
  $(\IPaths_{\MC}(s),\Omega_{\MC}^s)$,
  where $\Omega_{\MC}^s$ is the $\sigma$-algebra of cylinders,
  \textit{i.e.}~the sets $\IPaths_{\MC}(\rho)$ with
  $\rho\in\FPaths_{\MC}(s)$, their complements and countable unions.
\end{definition}
  The measure of a cylinder $\IPaths_{\MC}(\rho)$ is the product
  of the probabilities of each transition in the finite path $\rho$,
  and by Carathéodory's extension theorem we get a measure $\mu_{\MC}^s$
  over $\Omega_{\MC}^s$. As $s$ is always obvious from context
  (first state of the paths being considered),
  we omit it from the measure notation, in favour of $\mu_{\MC}$.
  We note that $\FPaths_{\MC}(s)$ is a set of finite words
  over a countable alphabet, and as such is countable.
  In particular, if $\Cyl\subseteq \FPaths_{\MC}(s)$ is a set of
  prefixes forming disjoint cylinders,\footnote{
  $\IPaths_{\MC}(\rho)$ and $\IPaths_{\MC}(\rho')$ share a path
  if and only if either $\rho$ is a prefix of $\rho'$
  or $\rho'$ is a prefix of $\rho$.}
  then $\mu_{\MC}(\bigcup_{\rho\in \Cyl} \IPaths_{\MC}(\rho)) =
  \sum_{\rho\in\Cyl} \mu_{\MC}(\IPaths_{\MC}(\rho))$.
  Moreover, if $\Pi\subseteq \IPaths_{\MC}(s)$ is the complement
  of a measurable set $\Pi'$, $\mu_{\MC}(\Pi) = 1-\mu_\MC(\Pi')$.

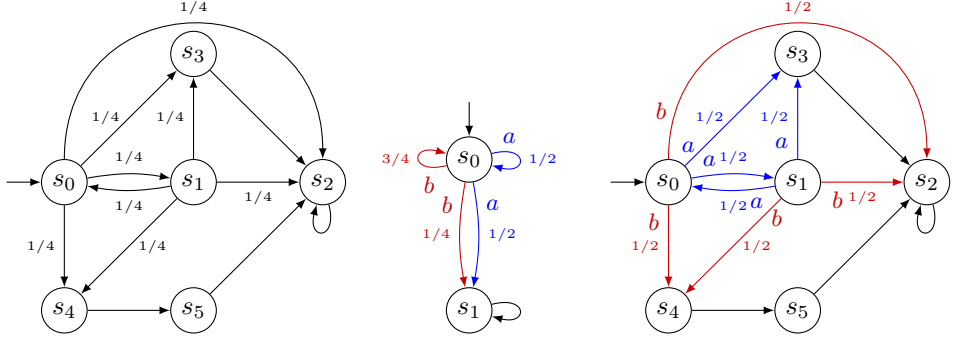
\begin{figure}[t]
\centering
\scalebox{1}{
\begin{tikzpicture}[node distance=1.7cm,auto]

\node[player0, initial](0){$s_0$};
\node[player0](1)[right of=0]{$s_1$};
\node[player0](2)[right of=1]{$s_2$};
\node[player0](3)[above of=1]{$s_3$};
\node[player0](4)[below of=0]{$s_4$};
\node[player0](5)[below of=1]{$s_5$};

\path (0) edge[action0, bend left=10] node[above]{\tiny$1/4$} (1);
\path (0) edge[action0] node[left]{\tiny$1/4$} (3);
\path (0) edge[action0, bend left=90, looseness=1.8]  node[above]{\tiny$1/4$} (2);
\path (0) edge[action0] node[left]{\tiny$1/4$} (4);

\path (1) edge[action0, bend left=10] node[below]{\tiny$1/4$} (0);
\path (1) edge[action0] node[left]{\tiny$1/4$} (3);
\path (1) edge[action0] node[below]{\tiny$1/4$} (2);
\path (1) edge[action0] node[right]{\tiny$1/4$} (4);

\path (2) edge[action0, loop below] (2);
\path (3) edge[action0] (2);
\path (4) edge[action0] (5);
\path (5) edge[action0] (2);

\end{tikzpicture}
}
\scalebox{1}{
\begin{tikzpicture}[node distance=2cm,auto]

\node[player0, initial, initial where=above](0){$s_0$};
\node[player0](1)[below of=0]{$s_1$};

\path (0) edge[action1, bend left=10] node[pos=0.2,right]{\small$a$} node[right]{\tiny$1/2$} (1);
\path (0) edge[action1, loop right] node[pos=0.2,above]{\small$a$} node[right]{\tiny$1/2$} (0);

\path (0) edge[action2, bend right=10] node[pos=0.2,left]{\small$b$} node[left]{\tiny$1/4$} (1);
\path (0) edge[action2, loop left] node[pos=0.2,below]{\small$b$} node[left]{\tiny$3/4$} (0);

\path (1) edge[action0, loop right] (1);

\end{tikzpicture}
}
\scalebox{1}{
\begin{tikzpicture}[node distance=1.7cm,auto]

\node[player0, initial](0){$s_0$};
\node[player0](1)[right of=0]{$s_1$};
\node[player0](2)[right of=1]{$s_2$};
\node[player0](3)[above of=1]{$s_3$};
\node[player0](4)[below of=0]{$s_4$};
\node[player0](5)[below of=1]{$s_5$};

\path (0) edge[action1, bend left=10] node[pos=0.2,above]{\small$a$} node[above]{\tiny$1/2$} (1);
\path (0) edge[action1] node[pos=0.2,left]{\small$a$} node[left]{\tiny$1/2$} (3);

\path (0) edge[action2, bend left=90, looseness=1.8] node[pos=0.1,left]{\small$b$} node[above]{\tiny$1/2$} (2);
\path (0) edge[action2] node[pos=0.2,left]{\small$b$} node[left]{\tiny$1/2$} (4);

\path (1) edge[action1, bend left=10] node[pos=0.2,below]{\small$a$} node[below]{\tiny$1/2$} (0);
\path (1) edge[action1] node[pos=0.2,left]{\small$a$} node[left]{\tiny$1/2$} (3);

\path (1) edge[action2] node[pos=0.2,below]{\small$b$} node[below]{\tiny$1/2$} (2);
\path (1) edge[action2] node[pos=0.2,right]{\small$b$} node[right]{\tiny$1/2$} (4);

\path (2) edge[action0, loop below] (2);

\path (3) edge[action0] (2);

\path (4) edge[action0] (5);

\path (5) edge[action0] (2);

\end{tikzpicture}
}

\caption{An MC and two MDPs, with states $s_i$ and actions $\{a,b\}$.
  In MC{s}, transitions are labelled by their probability,
  and the probability is $1$ if unspecified.
  In MDP{s}, transitions are labelled by their action and probability.
  If the action is unspecified (black transitions), then every action
  allows this transition.
  The initial state is $s_0$.
  In our examples, the set of atomic propositions is the set of states,
  so that the proposition $s_i$ holds on state $s_i$ only.
}
\label{fig:MDP}
\end{figure}

\begin{definition}
  A \emph{Markov decision process} (MDP) is a tuple
  $\MDP=\zug{ S,A,\initState, \proba, \AP, L}$,
  where $S$ is a finite set of states, $A$ is a finite set of actions,
  $\initState \in S$ is an initial state,
  $\proba : S \times A \rightarrow \Dist(S)$
  is a transition function\footnote{%
  This formalism implies that every action is available from every state.
  This is w.l.o.g., as one can model illegal actions by sending them
  to a special state.}, %
  $\AP$ is a non-empty finite set of atomic propositions,
  and $L:S \rightarrow 2^{\AP}$ is a labelling function.
\end{definition}
  If $\proba$ maps a state $s$ and an action $a$ to a distribution $d$
  so that $d(s')>0$, we write $s\xrightarrow{a,d(s')}s'$ or simply
  $s\xrightarrow{a} s'$, and we denote $\proba(s,a,s')$ the probability $d(s')$.
  We extend from MCs to MDPs the definitions and notations of finite
  and infinite paths, now labelled by actions and denoted
  $\rho = s_0\xrightarrow{a_0} s_1 \xrightarrow{a_1}\cdots$.
  Moreover, for a finite path $\rho$, we denote by $\rho\cdot a s$
  (resp. $s a\cdot\rho$) the concatenation of $\rho$ with
  $\last(\rho)\xrightarrow{a}s$
  (resp. of $s\xrightarrow{a}\first(\rho)$ with $\rho$).

  We say that $\MDP$ is stored in size $|\MDP|$ if the number of states $|S|$,
  the number of actions $|A|$ and the number of transitions
  $s\xrightarrow{a} s'$ in $\MDP$ are bounded by $|\MDP|$.
  Then, $|\FPaths_\MDP^{\leq i}|$, the number of paths of horizon at most $i$,
  is in $|\MDP|^{\mathcal O(i)}$.
  Moreover, the probabilities in $\proba$ are assumed to be rational numbers stored as
  pairs of integers $\frac{a}{b}$ in binary, so that $a,b\leq 2^{|\MDP|}$.

  A (probabilistic) \emph{strategy} is a function
  $\strat : \FPaths_{\MDP} \to \Dist(A)$
  that maps finite paths $\rho$ to distributions on actions.
  A strategy $\strat$ is \emph{deterministic} if the support of
  the distribution $\strat(\rho)$ has size $1$ for every $\rho$,
  it is \emph{memoryless} if $\strat(\rho)$ depends only on the last state
  of $\rho$, \textit{i.e.}~if $\strat$ satisfies that
  for all $\rho,\rho'\in\FPaths_{\MDP}$,
  $\last(\rho)=\last(\rho')$ implies $\strat(\rho)=\strat(\rho')$.
  We denote by $\strat(\rho,a)$ the probability of the action $a$
  in the distribution $\strat(\rho)$.

  An MDP $\MDP=\zug{S, A, \initState, \proba, \AP, L}$ equipped with
  a strategy $\strat$ defines an MC,
  denoted $\MDPstrat$, obtained intuitively by unfolding $\MDP$ and
  using $\strat$ to define the transition probabilities.
  Formally $\MDPstrat=\zug{\FPaths_{\MDP}, \initState, \probastrat, \AP, L'}$,
  with finite paths of $\MDP$ as states, transitions defined
  for all $\rho\in\FPaths_{\MDP}$, $a\in A$ and $s\in S$ by
  $\probastrat(\rho,\rho \cdot a s)=\strat(\rho,a)\proba(\last(\rho),a,s')$,
  and atomic propositions assigned by $L'(\rho)=L(\last(\rho))$.
  In particular, note that since $S$ is finite
  $\FPaths_{\MDP}$ is infinite but countable.
  We say that a finite path $\rho$ in $\MDP$ \emph{matches} a finite path $\rho'$
  in $\MDPstrat$ if $\last(\rho')=\rho$,
  so that they follow the same sequence of states and actions.
  We say that a path $\rho\in\FPaths_\MDP$ has probability $m$ in $\MDPstrat$
  if $\rho$ matches $\rho'\in\FPaths_{\MDPstrat}$ and $m$ is the measure
  of $\IPaths_{\MDPstrat}(\rho')$.
  It corresponds to the likelihood of having $\rho$ as a prefix when
  following $\strat$ and starting from $\first(\rho)$.

  We may omit $\MC$ or $\MDP$ from all previous notations
  when they are clear from the context. MC notations may use $\strat$
  as shorthand for $\MDPstrat$, \textit{e.g.}~$\mu_{\strat}$ is the probability
  measure induced by $\MDPstrat$, and $\FPaths_{\strat}$ refers to finite paths
  of non-zero probability under $\strat$.

\begin{example}\label{ex:MDP}
  Consider the MC on the left of \figurename~\ref{fig:MDP},
  and the property asking to reach the state $s_2$ in at most two steps.
  Consider the set of paths of length at most two from $s_0$ to $s_2$.
  Let $\Pi=\IPaths(s_0s_2) \uplus \IPaths(s_0s_1s_2) \uplus \IPaths(s_0s_3s_2)$
  be the infinite paths obtained from their cylinders.
  Then, the probability of reaching $s_2$ in two steps
  when starting from $s_0$ is
  $\mu(\Pi)=\frac{1}{4}+\frac{1}{16}+\frac{1}{4}=\frac{9}{16}$.
  Note that the probability of reaching $s_2$ in two steps
  when starting from $s_1$ is also
  $\frac{9}{16}$. Every other state reaches $s_2$ with probability $1$
  in two steps. Consider now the MDP in the middle of \figurename~\ref{fig:MDP},
  and the property asking that the state reached after the first transition
  is $s_1$. For every strategy $\strat$, the probability that this property
  holds in $\MDPstrat$ is equal to $\strat(s_0,a)\frac{1}{2}
  +\strat(s_0,b)\frac{1}{4}=\strat(s_0,a)\frac{1}{2}
  +(1-\strat(s_0,a))\frac{1}{4}=\strat(s_0,a)\frac{1}{4}+\frac{1}{4}$.
\end{example}

\subparagraph*{{\bf Probabilistic CTL with Linear expressions}}
  A formula of $L$-PCTL is generated by the nonterminal $\Phi$
  in the following grammar:
\begin{definition}[$L$-PCTL in normal form, syntax]\label{def:normal_form}
  \begin{align*}
    \Phi &:= p \mid \neg p \mid \Phi_1 \wedge \Phi_2 \mid \Phi_1 \vee \Phi_2
    \mid \sum_{i=1}^n c_i \Pr{\varphi_i} \succcurlyeq c_0 \\
    \varphi &:= \XX^{\ell} \Phi \mid \Phi_1 \UU^{\ell} \Phi_2 \mid
      \Phi_1 \WW^{\ell} \Phi_2 \mid \Phi_1 \UU^{\infty} \Phi_2 \mid
      \Phi_1 \WW^{\infty} \Phi_2
  \end{align*}
  where $p$ ranges over the atomic propositions in $\AP$,
  $\ell$ ranges over $\N$,
  and $n\in\Nspos$, $(c_0,\cdots,c_n)\in\Z^n$, %
  ${\succcurlyeq}\in\{{\geq},{>}\}$
  define linear inequalities.
\end{definition}

  We call a formula generated by $\Phi$ a \emph{state formula},
  and a formula generated by $\varphi$ a \emph{path formula}.
  The \emph{horizon label} of a path formula is the label of its root operator,
  \textit{i.e.}~either $\ell$ or $\infty$.
  Intuitively, the Next operator $\XX^{\ell} \Phi$ means that $\Phi$ holds
  in exactly $\ell$ steps, the (unbounded) Until and Weak until operators
  $\UU^{\infty}$ and $\WW^{\infty}$ are defined as usual in CTL,
  and their bounded version $\Phi_1 \UU^{\ell} \Phi_2$ and
  $\Phi_1 \WW^{\ell} \Phi_2$ impose a horizon on the reachability of $\Phi_2$.
  We will use the standard notations $\XX$, $\UU$ and $\WW$, defined by
  $\XX^1$, $\UU^{\infty}$ and $\WW^{\infty}$, respectively.

\begin{definition}[$L$-PCTL in normal form, semantics]\label{def:semantics}
  For a fixed MC $\MC$ of states~$S$,
  we inductively define $\sem{\Phi}_\MC$ as a set of states, and
  for each state $s$ we define $\sem{\varphi}_\MC^s$ as a measurable set
  of infinite paths starting from $s$:
  \begin{align*}
    \sem{p}_\MC &= \{s\in S \mid p \in L(s)\}\qquad\quad
    \sem{\neg p}_\MC = \{s\in S \mid p \not\in L(s)\}\\
    \sem{\Phi_1 \wedge \Phi_2}_\MC &= \sem{\Phi_1}_\MC\cap\sem{\Phi_2}_\MC\qquad
    \sem{\Phi_1 \vee \Phi_2}_\MC = \sem{\Phi_1}_\MC\cup\sem{\Phi_2}_\MC\\
    \sem{\sum_{i=1}^n c_i \Pr{\varphi_i} \succcurlyeq c_0}_\MC &= \{s\in S \mid
      \sum_{i=1}^n c_i~\mu_\MC(\sem{\varphi_i}_\MC^s) \succcurlyeq c_0\}\\
    \sem{\XX^{\ell} \Phi}_\MC^s &= \{\rho\in\IPaths(s) \mid
      \rho[\ell] \in \sem{\Phi}_\MC\}\\
    \sem{\Phi_1 \UU^{\ell} \Phi_2}_\MC^s &= \{\rho\in\IPaths(s) \mid
      \exists j\leq \ell, \rho[j] \in \sem{\Phi_2}_\MC \wedge
      \forall i<j, \rho[i] \in \sem{\Phi_1}_\MC\}\\
    \sem{\Phi_1 \WW^{\ell} \Phi_2}_\MC^s &= \{\rho\in\IPaths(s) \mid
      \forall j\leq \ell, (\rho[j] \in \sem{\Phi_1}_\MC
      \vee \rho[j] \in \sem{\Phi_2}_\MC) \\
      &\qquad\qquad\qquad\vee \exists i<j, \rho[i] \in \sem{\Phi_2}_\MC\}\\
    \sem{\Phi_1 \UU \Phi_2}_\MC^s &= \{\rho\in\IPaths(s) \mid
      \exists j\in\N, \rho[j] \in \sem{\Phi_2}_\MC \wedge
      \forall i<j, \rho[i] \in \sem{\Phi_1}_\MC\}\\
    \sem{\Phi_1 \WW \Phi_2}_\MC^s &= \{\rho\in\IPaths(s) \mid
      \forall j\in\N, (\rho[j] \in \sem{\Phi_1}_\MC
      \vee \rho[j] \in \sem{\Phi_2}_\MC) \\
      &\qquad\qquad\qquad\vee \exists i<j, \rho[i] \in \sem{\Phi_2}_\MC\}
  \end{align*}
\end{definition}
  Then, we write $s\models_\MC \Phi$ (resp. $\rho\models_\MC\varphi$)
  if $s\in\sem{\Phi}_\MC$ (resp. $\rho\in\sem{\varphi}_\MC^{\first(\rho)}$),
  and say that $s$ satisfies $\Phi$ (resp. $\rho$ satisfies $\varphi$).
  We denote by $\equiv$ the semantic equivalence of state or path formulae
  (that holds on all MCs).
  Finally, we write $\MC \models \Phi$ if $\initState\models_\MC \Phi$.
  Note that by restricting the linear inequalities to $n=1$
  and
$\ell=1$ in $\XX^\ell$,
  we recover the standard definition of PCTL
  (see \textit{e.g.}~\cite{BK08}).

  We define usual notions as syntactic sugar,
  so that state formulae allow for
  $\bot := p \wedge \neg p$ and $\top := p \vee \neg p$ (for any $p\in \AP$).
  We allow rational constants $c_1$ and all comparison symbols
  in $\{\leq,<, =, \neq,>,\geq\}$ in linear expressions, with
  $\sum_{i=1}^n c_i \Pr{\varphi_i} \preccurlyeq c_0 :=
  \sum_{i=1}^n (-c_i) \Pr{\varphi_i} \succcurlyeq -c_0$
  and $=,\neq$ defined as conjunctions or disjunctions.
  Moreover, path formulae allow for
  $\FF^\ell \Phi := \top \UU^\ell \Phi$ and
  $\GG^\ell \Phi := \Phi \WW^\ell \bot$.
  We allow the negation operation $\neg$ in state and path formulae,
  and recover a formula in normal form using De Morgan's laws, the negation
  of inequalities ($\geq$ becomes $<$ and $>$ becomes $\leq$),
  and the duality rule $\neg(\Phi_1 \WW^{\ell} \Phi_2)
  \equiv (\neg\Phi_1 \wedge \neg \Phi_2) \UU^{\ell} \neg\Phi_1$.
  Finally, boolean implication and equivalence are defined as usual.
  A notable property is $\Phi_1 \WW^{\ell} \Phi_2
  \equiv (\Phi_1 \UU^{\ell} \Phi_2) \vee \GG^\ell \Phi_1$.

  We encode $L$-PCTL formulae as trees,
  whose internal nodes are labelled by state or path operators
  and whose leaves are labelled by atomic propositions.
  Let $\ellmax\geq 1$ denote an upper bound on horizon labels $\ell$
  of subformula of $\Phi$ where $\ell$ is finite.
  The constants $c_i$ in linear expressions are encoded in binary,
  and \emph{the horizon labels $\ell$ are encoded in unary},
  so that if the overall encoding of $\Phi$ is of size $|\Phi|$,
  we shall have $\ellmax\leq |\Phi|$.
  We argue that this choice is justified from a larger point of view that extends
  PCTL to PCTL$^*$ by allowing boolean operations in path formulae,
  as the bounded horizon operators $\XX^\ell, \UU^\ell, \WW^\ell$ can be seen
  as syntactic sugar for a disjunction of nested $\XX$ operators
  of size $\mathcal O(\ell)$.

\begin{definition}
  An $L$-PCTL formula $\Phi$ (in normal form) is a \emph{window formula} if
  the horizon label $\ell$ of every path operator in $\Phi$
  is finite, so that the unbounded $\UU$ and $\WW$ are not used.
  It is a \emph{non-strict} formula if $\succcurlyeq$ is always $\geq$
  in its linear inequalities.
  It is a \emph{flat} formula if the measure operator $\proba$ is never nested,
  so that if $\Phi$ is seen as a tree, every branch has at most one node
  labelled by a linear inequality $\sum_{i=1}^n c_i \Pr{\varphi_i} \succcurlyeq c_0$.
\end{definition}

\begin{definition}
  A \emph{global window} formula is
  a formula of the shape $\Forall \GG \Phi$,
  with $\Phi$ a window $L$-PCTL formula.
  It is satisfied by a state $s$ of $\MC$ if
  every infinite path in $\IPaths_\MC(s)$ satisfies
  the path formula $\GG \Phi$, or equivalently if every state reachable from $s$
  satisfies $\Phi$.
\end{definition}

\begin{lemma}\label{lm:global-window-def}
  The \emph{global window} formula $\Forall \GG \Phi$ is satisfied
  on a state $s$ of $\MC$ if and only if $s$ satisfies
  the $L$-PCTL formula $\Pr{\GG \Phi} = 1$.
\end{lemma}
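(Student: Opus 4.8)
The statement is an equivalence between a branching-time property ($\Forall\GG\Phi$) and a measure-one statement ($\Pr{\GG\Phi}=1$). The plan is to unfold both sides using the semantics from Definition~\ref{def:semantics} and reduce the claim to a standard fact about probability measures on Markov chains, namely that a safety-type set of paths has measure one exactly when it contains \emph{all} paths. Concretely, fix an MC $\MC$ and a state $s$, and recall that by definition $\Forall\GG\Phi$ holds at $s$ iff every $\rho\in\IPaths_\MC(s)$ satisfies $\GG\Phi$, i.e.\ iff $\sem{\GG\Phi}_\MC^s=\IPaths_\MC(s)$; whereas $s$ satisfies $\Pr{\GG\Phi}=1$ iff $\mu_\MC(\sem{\GG\Phi}_\MC^s)=1$. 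So the only implication with content is: if $\mu_\MC(\sem{\GG\Phi}_\MC^s)=1$ then in fact $\sem{\GG\Phi}_\MC^s=\IPaths_\MC(s)$.

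First I would establish the combinatorial description of the complement of $\sem{\GG\Phi}_\MC^s$. Since $\GG\Phi=\Phi\WW\bot$, the semantics unfolds to $\rho\models\GG\Phi$ iff $\rho[j]\in\sem{\Phi}_\MC$ for every $j\in\N$. Hence a path $\rho$ fails $\GG\Phi$ iff there is some index $j$ with $\rho[j]\notin\sem{\Phi}_\MC$, i.e.\ iff the finite prefix $\rho[{:}j]$ ends in a ``bad'' state (one not in $\sem{\Phi}_\MC$). Therefore $\IPaths_\MC(s)\setminus\sem{\GG\Phi}_\MC^s = \bigcup_{\rho'} \IPaths_\MC(\rho')$, where $\rho'$ ranges over the (countably many) finite paths from $s$ whose last state is not in $\sem{\Phi}_\MC$. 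This exhibits the complement as a countable union of cylinders, hence measurable, and shows $\sem{\GG\Phi}_\MC^s$ itself is measurable (consistent with the semantics being well defined).

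Next I would argue both directions. For the forward direction, if $\Forall\GG\Phi$ holds at $s$ then $\sem{\GG\Phi}_\MC^s=\IPaths_\MC(s)$, so its measure is $\mu_\MC(\IPaths_\MC(s))=1$, giving $s\models\Pr{\GG\Phi}=1$. For the converse, suppose $\Forall\GG\Phi$ fails at $s$, i.e.\ some $\rho$ fails $\GG\Phi$; then by the above there is a finite prefix $\rho'$ of $\rho$, with $\first(\rho')=s$, ending in a state not in $\sem{\Phi}_\MC$, and since each transition along $\rho'$ has strictly positive probability, the cylinder $\IPaths_\MC(\rho')$ has measure $\mu_\MC(\IPaths_\MC(\rho'))>0$. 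As $\IPaths_\MC(\rho')\subseteq \IPaths_\MC(s)\setminus\sem{\GG\Phi}_\MC^s$, we get $\mu_\MC(\sem{\GG\Phi}_\MC^s)\leq 1-\mu_\MC(\IPaths_\MC(\rho'))<1$, so $s\not\models\Pr{\GG\Phi}=1$. Contraposing yields the missing implication, and applying this at $s=\initState$ gives the lemma.

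The only mild subtlety — and the step I expect to need the most care — is the positivity argument for the cylinder measure: one must invoke that in an MC the transition function maps each state to a genuine distribution, so every edge $s_j\to s_{j+1}$ used in a path has probability $\proba(s_j,s_{j+1})>0$ (this is exactly the convention ``$s\to s'$'' recorded after the definition of MC), and the measure of $\IPaths_\MC(\rho')$ is the finite product of these strictly positive rationals, hence strictly positive. Everything else is a direct unfolding of Definition~\ref{def:semantics} together with the already-stated properties of $\mu_\MC$ (finite additivity on disjoint cylinders and $\mu_\MC(\Pi)=1-\mu_\MC(\Pi')$ for complements).
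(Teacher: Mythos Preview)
Your proposal is correct and follows essentially the same approach as the paper: the forward direction is immediate since $\sem{\GG\Phi}^s_\MC=\IPaths_\MC(s)$ has measure one, and the converse is proved by contraposition, exhibiting a finite path to a state violating $\Phi$ whose cylinder has strictly positive measure and lies in the complement $\sem{\FF\neg\Phi}^s_\MC$. Your write-up is simply more explicit than the paper's about why the cylinder measure is strictly positive and about the description of the complement as a union of cylinders, but the argument is the same.
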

\begin{proof}%
  If $\Forall \GG \Phi$ holds on $s$, then
  $\mu_\MC(\sem{\GG \Phi}^s_\MC)=\mu_\MC(\IPaths_\MC(s))=1$.
  If $\Forall \GG \Phi$ is not satisfied on $s$, then
  there exists a finite path $\rho$ leading to a state that violates $\Phi$,
  so that the entire cylinder $\IPaths_\MC(\rho)$ satisfies
  the path formula $\FF \neg\Phi$. It follows that
  $\mu_\MC(\sem{\GG \Phi}^s_\MC)=
  1-\mu_\MC(\sem{\FF \neg\Phi}^s_\MC)\leq
  1-\mu_\MC(\IPaths_\MC(\rho))<1$.
\end{proof}

\begin{example}\label{ex:PCTL}
  Consider the MC $\MC$ to the left of \figurename~\ref{fig:MDP}.
  Let $\Phi$ be the $L$-PCTL formula $\Pr{\FF^2 s_2}\geq\frac{9}{16}$.
  It is a window formula, that is flat and non-strict. As detailed
  in Example~\ref{ex:MDP}, every state of $\MC$ satisfies $\Phi$.
  Therefore, $\MC$ satisfies the global window formula $\Forall \GG \Phi$.
  Consider now the MDP $\MDP$ to the right of \figurename~\ref{fig:MDP}.
  Let $\strat$ denote the memoryless strategy that chooses,
  in $s_0$ and $s_1$, action $a$
  with probability $\frac{1}{2}$ and action $b$ with probability $\frac{1}{2}$.
  While $\MDPstrat$ is an infinite MC by definition,
  it is bisimilar to the MC on the left of \figurename~\ref{fig:MDP}
  and must satisfy the same PCTL formulae,
  so that $\MDPstrat \models \Forall \GG \Phi$.
  In Section~\ref{sec:fixpoint}, we will show that
  $\strat$ is the only strategy on $\MDP$ that satisfies $\Forall \GG \Phi$.
\end{example}

\subparagraph*{{\bf Model checking and synthesis problems}}
  The \emph{model-checking problem} of an $L$-PCTL formula $\Phi$ and of
  a finite MC $\MC$ is the decision problem
  asking if $\MC \models \Phi$.
  The \emph{synthesis problem} of an $L$-PCTL formula $\Phi$ and of
  an MDP $\MDP$
  asks if there exists a strategy $\strat$ so that $\MDPstrat \models \Phi$.
  We also consider the sub-problems that restrict the set of strategies
  to subsets defined by constraints on the memory or on determinism.
  For example, the memoryless (resp.~deterministic) synthesis problem asks for
  a memoryless (resp.~deterministic) strategy satisfying the formula.
  They are indeed distinct problems:
\begin{example} \label{example:memory}
  Consider the MDP in the middle of \figurename~\ref{fig:MDP}.
  Let $\Phi$ be the window formula
  $(\Pr{\FF^2 s_1}=\frac{5}{8})\wedge (\Pr{\XX s_1}\geq\frac{1}{2}
  \vee \Pr{\XX s_1}\leq\frac{1}{4})$.
  First, $s_0\models \Pr{\XX s_1}\geq\frac{1}{2} \Leftrightarrow
  \strat(s_0, a)=1$ and $s_0\models \Pr{\XX s_1}\leq\frac{1}{4} \Leftrightarrow
  \strat(s_0, a)=0$, so that the first move must be deterministic.
  If the first action is $a$, and the transition $s_0\xrightarrow{a} s_0$
  is chosen, then the next choice must be $b$ to ensure
  $\Pr{\FF^2 s_1}=\frac{5}{8}$. Similarly, if the first action is $b$,
  the next choice on $s_0$ must be~$a$.
  Moreover, $s_1 \models \Phi$ under any strategy.
  Thus, the only strategies that satisfy $\Forall \GG \Phi$ are the
  strategies that alternate between $a$ and $b$ as long as we are in $s_0$,
  while no memoryless strategy satisfies $\Forall \GG \Phi$.
  On the other hand, in Example~\ref{ex:PCTL} randomisation
  is needed. An example that require both randomisation and memory can be constructed
  by combining both examples.
\end{example}

\begin{proposition}\label{prop:model-checking}
  The \emph{model-checking} problem for $L$-PCTL formulae
  and finite MCs
  can be solved in $\P$.
  This comes at no extra cost when compared to standard PCTL~\cite{BK08}.
\end{proposition}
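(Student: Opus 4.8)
The plan is to proceed by structural induction on the formula $\Phi$, following the recursive definition of the semantics in Definition~\ref{def:semantics}, and computing for each state subformula the set of states satisfying it, while for each path subformula computing, at every state, the probability measure of the paths satisfying it. Since $\MC$ is finite with state set $S$, each $\sem{\Phi}_\MC$ is a subset of $S$ that we store explicitly, and each quantity $\mu_\MC(\sem{\varphi}_\MC^s)$ is a rational number; we must argue that all these rationals have polynomially-bounded bit-size and are computable in polynomial time. I would process the formula tree bottom-up: atomic propositions and their negations are read off directly from $L$; boolean combinations are handled by set intersection and union; and the linear-inequality node $\sum_{i=1}^n c_i \Pr{\varphi_i} \succcurlyeq c_0$ is handled, once all the $\mu_\MC(\sem{\varphi_i}_\MC^s)$ are known, by evaluating the linear expression at each state $s$ and testing the inequality --- a polynomial-time arithmetic computation on rationals of polynomial size.

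The core of the argument is the evaluation of the path operators. For the bounded operators $\XX^\ell \Phi$, $\Phi_1 \UU^\ell \Phi_2$, and $\Phi_1 \WW^\ell \Phi_2$, I would set up a straightforward backward recurrence over the horizon: writing $f_k(s)$ for the probability that the relevant bounded path property holds from $s$ with $k$ steps remaining, one has $f_0$ determined directly by membership of $s$ in the appropriate $\sem{\cdot}_\MC$ sets, and $f_{k}(s) = \sum_{s'} \proba(s,s') f_{k-1}(s')$ suitably combined with the absorbing/terminal cases dictated by $\Phi$, $\Phi_1$, $\Phi_2$. Since $\ell \leq \ellmax \leq |\Phi|$ and the horizon labels are encoded in unary, this recurrence has polynomially many stages, each a matrix--vector product over $S$, so it runs in polynomial time; the bit-size of the entries grows by at most $O(|\MC|)$ bits per stage (one multiplication by a transition probability with numerator and denominator at most $2^{|\MC|}$ and one summation over $|S|$ terms), hence stays polynomial. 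For the unbounded operators $\Phi_1 \UU \Phi_2$ and $\Phi_1 \WW \Phi_2$, this is exactly classical PCTL until/weak-until model checking on a finite MC: identify the set $T = \sem{\Phi_2}_\MC$ of target states and the set $\sem{\Phi_1}_\MC \setminus \sem{\Phi_2}_\MC$ of states from which one may continue, collapse states outside these, first compute by graph reachability the states that reach $T$ with probability $0$, and then solve the resulting linear system $x = P x + b$ (uniquely determined after removing the probability-$0$ states) by Gaussian elimination, which is in $\P$ with polynomial bit-size bounds; for $\WW$ one uses the equivalence $\Phi_1 \WW \Phi_2 \equiv (\Phi_1 \UU \Phi_2) \vee \GG \Phi_1$ noted in the preliminaries, or dually solves the analogous system, again in polynomial time. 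This is precisely the content of the standard PCTL model-checking algorithm, so no extra cost is incurred beyond that of~\cite{BK08}.

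Summing over the at most $|\Phi|$ nodes of the formula tree, each contributing a polynomial-time computation on polynomially-sized rationals, yields an overall polynomial-time procedure, establishing membership in $\P$. The only point requiring a little care --- and hence the main obstacle --- is the explicit polynomial bound on the bit-size of the intermediate probabilities: for the bounded operators one must check the per-stage growth is additive rather than multiplicative (which it is, since each stage is one layer of linear combinations, not a power), and for the unbounded operators one must invoke the standard fact that the solution of a rational linear system has entries whose bit-size is polynomial in the input size (via Cramer's rule and Hadamard's inequality). Apart from the generalization to $n$-ary linear inequalities --- which is a trivial extension since evaluating a fixed linear form over known rational probabilities adds only polynomial arithmetic --- every ingredient is exactly as in classical PCTL model checking, which justifies the ``no extra cost'' claim.
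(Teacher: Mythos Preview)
Your proposal is correct and follows essentially the same approach as the paper: structural induction computing the measure of each path subformula via $\mathcal O(\ell)$ vector--matrix products for the bounded operators and via linear equation systems for the unbounded ones, with the observation that extending from single probability comparisons to linear combinations $\sum_{i=1}^n c_i \Pr{\varphi_i} \succcurlyeq c_0$ is immediate once the individual measures are known. The paper's proof is a brief pointer to~\cite[Thm.~10.40]{BK08} plus this remark, whereas you spell out the recurrence and the bit-size bookkeeping explicitly; the content is the same.
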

\begin{proof}%
  This problem is detailed in~\cite[Thm. 10.40]{BK08} for a PCTL definition
  that only allows expressions of the shape
  $c_1 \Pr{\varphi} \succcurlyeq c_0$ to quantify over path formulae.
  Extending to $\sum_{i=1}^n c_i \Pr{\varphi_i} \succcurlyeq c_0$ is
  straight-forward, as their algorithm computes the measure of $\Pr{\varphi}$,
  and then checks if the comparison holds.
  The intuition is that the measure of bounded operators $\XX^\ell$, $\UU^\ell$
  and $\WW^\ell$ are obtained by $\mathcal O(\ell)$
  vector-matrix multiplications, while unbounded $\UU$
  and $\WW$ are seen as linear equation systems.
  Overall, the complexity is in $|\MC|^{\mathcal O(1)} |\Phi| \ellmax$.
\end{proof}

\section{Synthesis for global window PCTL}\label{sec:fixpoint}

  In this section, we detail complexity results
  on the synthesis problem for global window $L$-PCTL formulae.
  We fix a Markov decision process $\MDP$, a formula $\Forall \GG \Phi$ where $\Phi$ is a window $L$-PCTL formula,
  and ask if there exists
  a strategy $\strat$ so that $\MDPstrat\models \Forall \GG \Phi$.
  We also address the sub-problems concerning deterministic
  or memoryless strategies.

\subparagraph*{{\bf Solving window formulae}}
  We start by constructing a strategy $\strat$ so that $\MDPstrat\models \Phi$.
  The formula $\Phi$ can be seen syntactically
  as a tree with state or path operators
  on internal nodes and atomic propositions on leaves.
  The window length of a branch of this tree is the sum
  of the horizon labels $\ell$ of path operators in the branch.
  The \emph{window length} of the formula $\Phi$ is
  an integer $\Ell$ obtained as the maximum over every branch of $\Phi$
  of their respective window lengths. In particular, $\Ell\leq |\Phi|\ellmax$.
  For example, if $\Phi=\Pr{\XX \Pr{\XX^2 p_1}\geq \frac{1}{2}}\leq \frac{1}{2}
  \vee \Pr{\FF^2 p_3}>0$, then $\ellmax=2$ and $\Ell=\max(1+2,2)=3$.

\begin{definition} \label{def:winstrat}
  Let $s$ be a state of $\MDP$
  and $\Phi$ be a window $L$-PCTL formula of window length $\Ell$.
  A \emph{window strategy} for $s$ of horizon $\Ell$
  is a mapping $\stratW: \FPaths_{\MDP}^{<\Ell}(s) \to \Dist(A)$.
\end{definition}

  A window strategy $\stratW$ for state $s$ can be seen as a partial strategy,
  only defined on paths of length under $\Ell$ that start from $s$.
  Formally, $\stratW$ defines
  a set of strategies $\strat: \FPaths_{\MDP} \to \Dist(A)$,
  where the first $\Ell$ steps from $s$
  are specified by $\stratW$, and the subsequent steps are not.
  This set of strategies is called the \emph{cylinder} of
  the window strategy $\stratW$.
  In particular, if two strategies $\strat$ and $\strat'$ are in the cylinder of
  the window strategy $\stratW$, then the MCs $\MDPstrat$
  and $\MDPtoMC{\MDP}{\strat'}$ coincide for the first $\Ell$ steps from $s$,
  in the sense that every path $\rho\in\FPaths_\MDP^{<\Ell}(s)$
  has the same probability $m$
  in $\MDPstrat$ and in $\MDPtoMC{\MDP}{\strat'}$.
  In this case, we say that $m$ is the probability of $\rho$ under
  $\stratW$.

  We may conflate a window strategy $\stratW$ with an arbitrary
  strategy $\strat$ in its cylinder, so that
  $\FPaths^{<\Ell}_{\stratW}(s)$ is a set of paths in $\MDPstrat$.
  Then, we say that the window strategy $\stratW$ for state $s$
  satisfies $\Phi$,
  noted $s\models_{\stratW} \Phi$, if $s\models_{\strat} \Phi$
  for all $\strat$ in the cylinder of $\stratW$.
  Conversely, a strategy $\strat: \FPaths_{\MDP} \to \Dist(A)$ naturally defines
  a window strategy $\stratW_\rho$ for every fixed prefix $\rho$, so that
  for all $\rho'\in\FPaths_{\MDP}^{<\Ell}(\last(\rho))$,
  $\stratW_\rho(\rho')=\strat(\rho\cdot\rho')$.

\begin{lemma}\label{lm:window-length}
  Let $\Phi$ be a window $L$-PCTL formula of window length
  $\Ell$, $\strat$ be a strategy for $\MDP$,
  and let $\stratW_s$ be the window strategy defined by $\strat$ on
  state $s$ and horizon $\Ell$ (the fixed prefix is $s$). Then, it holds that
  $s\models_{\strat}\Phi \Leftrightarrow
  s\models_{\stratW_s}\Phi$.
\end{lemma}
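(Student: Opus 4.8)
The plan is to prove Lemma~\ref{lm:window-length} by structural induction on the window $L$-PCTL formula $\Phi$, using the observation that the satisfaction of $\Phi$ at state $s$ only depends on the behaviour of the Markov chain up to depth $\Ell$ from $s$, which is exactly what $\stratW_s$ pins down. The key technical fact I would isolate first is the following locality property: for any path formula $\varphi$ whose window length is at most $k$, and for any infinite path $\rho$ from a state $t$, membership $\rho\in\sem{\varphi}^t_\MC$ depends only on $\rho[{:}k]$ and on which states among $\rho[0],\dots,\rho[k]$ satisfy the immediate state subformulae of $\varphi$; moreover those nested state subformulae have strictly smaller window length, so the induction hypothesis applies to them. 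Consequently $\mu_\strat(\sem{\varphi}^t_\MC)$ is a finite sum over paths $\rho'\in\FPaths^{\le k}_\strat(t)$ of their probabilities weighted by a $\{0,1\}$ coefficient that is itself determined only by the truth values of the nested state subformulae along $\rho'$.

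Concretely, I would set up the induction so that the statement proved is slightly more general than the lemma: for every window $L$-PCTL state formula $\Psi$ with window length $\Ell_\Psi$, every state $t$, and every strategy $\strat$, we have $t\models_\strat \Psi \iff t\models_{\stratW^{\Ell_\Psi}_t}\Psi$, where $\stratW^{\Ell_\Psi}_t$ is the horizon-$\Ell_\Psi$ window strategy induced by $\strat$ on $t$ (and symmetrically we also need the path-formula version at depth $k$). The base cases $p$ and $\neg p$ are immediate since they depend only on $L(t)$. For $\Psi_1\wedge\Psi_2$ and $\Psi_1\vee\Psi_2$, the window length is the max of the two, each conjunct is checked by its own window strategy of no larger horizon, and since $\stratW^{\Ell_\Psi}_t$ refines $\stratW^{\Ell_{\Psi_i}}_t$ (any strategy in its cylinder is in the cylinder of the smaller one), the claim follows from the induction hypothesis together with the well-definedness of $\models_{\stratW}$ discussed before the lemma. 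The interesting case is $\Psi=\sum_{i=1}^n c_i\Pr{\varphi_i}\succcurlyeq c_0$: here $\Ell_\Psi=\max_i \ell_i + (\text{max window length of a nested state subformula of }\varphi_i)$ by definition of window length, and I use the locality fact to write each $\mu_\strat(\sem{\varphi_i}^t_\MC)$ as a finite sum over $\FPaths^{\le \ell_i}_\strat(t)$; every term in that sum is a product of a transition probability (which depends only on the first $\ell_i\le \Ell_\Psi$ steps, hence is fixed by $\stratW^{\Ell_\Psi}_t$) and an indicator whose value, by the induction hypothesis applied to the nested state subformulae at the endpoints, is already determined by the window strategies those states inherit from $\strat$ — and those inherited window strategies are themselves determined by $\stratW^{\Ell_\Psi}_t$ because a path of length $j\le\ell_i$ followed by the window length of the nested formula stays within horizon $\Ell_\Psi$. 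Hence the whole linear expression, and therefore the truth of the inequality, is identical in $\MDPstrat$ and in any $\MDPtoMC{\MDP}{\strat'}$ with $\strat'$ in the cylinder of $\stratW^{\Ell_\Psi}_t$, which is precisely $s\models_{\stratW_s}\Phi$ when $\Psi=\Phi$ and $t=s$.

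The main obstacle I anticipate is bookkeeping the horizons correctly across the nesting: one must verify that the recursively needed window strategies for nested state subformulae, anchored at states reached after $j\le\ell_i$ steps, all fit inside the horizon-$\Ell$ window strategy $\stratW_s$ at $s$ — i.e.\ that $j$ plus the window length of the nested subformula never exceeds $\Ell$. This is exactly what the definition of window length as a maximum over branches of sums of horizon labels guarantees, so the proof reduces to making that inequality explicit; once it is in place, the remaining steps are routine rewriting of measures as finite sums of cylinder probabilities (justified by the disjoint-cylinder additivity recalled in the preliminaries) and invocations of the induction hypothesis. For the reverse direction ($\Leftarrow$) there is nothing extra to do, since $\strat$ itself belongs to the cylinder of $\stratW_s$, so $s\models_{\stratW_s}\Phi$ immediately gives $s\models_\strat\Phi$; the content is entirely in the forward direction.
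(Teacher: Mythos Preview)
Your proposal is correct and follows essentially the same approach as the paper: a structural induction on the formula showing that, for any $\strat'$ in the cylinder of $\stratW_s$, the truth of state subformulae and the measures of path subformulae coincide in $\MDPstrat$ and $\MDPtoMC{\MDP}{\strat'}$, relying on the representation of bounded path formulae as finite unions of cylinders and on the horizon arithmetic $j+\Ell_{\text{nested}}\le \Ell$. The paper's proof is terser (it details only the $\UU^\ell$ case and invokes the cylinder lemma), but the structure and the key bookkeeping step you highlight are the same.
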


  Thus, the synthesis problem on window formulae reduces to finding
  a window strategy $\stratW$ for $\initState$
  so that $\initState \models_{\stratW} \Phi$.
  Let $\stratW$ be a window strategy for state $s$ and horizon $\Ell$.
  Let $\mathcal X_s$ denote a finite set of variables $x_{\rho,a}$,
  with $\rho\in\FPaths_\MDP^{<\Ell}(s)$, and $a\in A$.
  The window strategy $\stratW$ can be seen as a point in
  the real number space $\R^{\mathcal X_s}$,
  where $x_{\rho,a}$ encodes $\stratW(\rho,a)$.
  Conversely, every point in $\R^{\mathcal X_s}$ so that
  $\forall x\in \mathcal X_s$, we have
  $x\in[0,1]$, and $\forall \rho\in \FPaths_\MDP^{<\Ell}(s)$, we have
  $\sum_{a\in A} x_{\rho,a} = 1$
  represents a window strategy.
  Therefore, the points of $\R^{\mathcal X_s}$ that encode
  a window strategy can be described
  by a finite conjunction of linear inequalities $x\geq 0$, $x\leq 1$
  and $x_{\rho,a_1}+\cdots+ x_{\rho,a_k}=1$ over the variables $\mathcal X_s$.

  We want to similarly characterise the set of window strategies
  satisfying a given window $L$-PCTL formula.
  As will become apparent later on, we will need polynomial inequalities.

\begin{definition}
  The \emph{first-order theory of the reals} (FO-\R) is the set of all
  well-formed sentences of first-order logic that involve universal and
  existential quantifiers and logical combinations of equalities
  and inequalities of real polynomials.\footnote{%
  The primitives operations are multiplication and addition,
  the comparison symbols are $\{\leq,=,\geq\}$.}
\end{definition}
  We allow the use of strict comparison operators $\{<,\neq,>\}$ as the negation
  of their non-strict versions.
  We also assume that
  the formula is written in \emph{prenex normal form} (PNF),
  \textit{i.e.}~as a sequence of alternating blocks
  of quantifiers followed by a quantifier-free formula.
  Finally, if $\mathcal S=\{x_1,\cdots,x_k\}$ is a finite set of variables,
  we use the notation $\operatorname*{\exists} \mathcal S$
  as shorthand for the quantifier sequence $\exists x_1\cdots\exists x_k$.

  This theory is decidable,
  and admits a doubly-exponential quantifier elimination
  procedure~\cite{RENEGAR1992329}.
  Of particular interest is the existential fragment of FO-$\R$,
  denoted $\exists$-$\R$, where only $\exists$ is allowed.
  It is decidable in \PSPACE~\cite{10.1145/62212.62257}.

  We say that an FO-$\R$ formula of free variables $\mathcal X$
  is \emph{non-strict} if it is satisfied by a closed set of points
  in $\R^{\mathcal X}$. In particular, an FO-$\R$ formula that only uses
  non-strict comparison symbols $\{\leq,=,\geq\}$ and
  that is negation-free\footnote{%
  A formula is negation-free if the Boolean negation operator $\neg$ is
  not used.}
  is non-strict.

\begin{proposition}\label{prop:window-FOR}
  Let $s$ be a state of $\MDP$
  and $\Phi$ be a window $L$-PCTL formula.
  The set of window strategies $\stratW$ such that $s\models_{\stratW} \Phi$
  can be represented in $\exists$-$\R$ as a PNF formula
  of free variables~$\mathcal X_s$.
  This formula is of size $|\Phi||\MDP|^{\mathcal O(\Ell)}$,
  and can be computed in \EXP.
  If $\Phi$ is flat and non-strict then the $\exists$-$\R$ formula
  is non-strict.
\end{proposition}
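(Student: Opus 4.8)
The plan is to build the $\exists$-$\R$ formula by structural induction on the window formula $\Phi$, following the grammar in Definition~\ref{def:normal_form} and mirroring the semantics of Definition~\ref{def:semantics}. The key observation is that once the window strategy $\stratW$ is fixed (symbolically, via the variables $\mathcal X_s$), the probability of every finite path $\rho\in\FPaths_\MDP^{<\Ell}(s)$ in $\MDPstrat$ is the polynomial $\prod_{k} x_{\rho[{:}k],a_k}\,\proba(\rho[k],a_k,\rho[k+1])$ in the variables $\mathcal X_s$, obtained from the definition of $\probastrat$. So I would first set up, for each finite path $\rho$ of length at most $\Ell$ starting from $s$, a polynomial term $\mathsf{pr}_\rho\in\Q[\mathcal X_s]$ encoding its probability; these are the building blocks, and there are $|\MDP|^{\mathcal O(\Ell)}$ of them, each of polynomial degree at most $\Ell$ and computable in $\EXP$.

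Next I would define, by induction, a quantifier-free FO-$\R$ formula $\llbracket\Phi'\rrbracket_\rho$ for every subformula $\Phi'$ of $\Phi$ and every $\rho\in\FPaths_\MDP^{\leq\Ell}(s)$, with the intended meaning ``the state $\last(\rho)$, reached along $\rho$, satisfies $\Phi'$ under $\stratW$''. For atomic cases $p$ and $\neg p$ this is the constant $\top$ or $\bot$ depending on $L(\last(\rho))$; conjunction and disjunction translate to $\wedge$ and $\vee$. For a linear-inequality subformula $\sum_i c_i\Pr{\varphi_i}\succcurlyeq c_0$ I would, for each path formula $\varphi_i$ with horizon label $\ell_i$, express $\mu_{\strat}(\sem{\varphi_i}^{\last(\rho)})$ as a polynomial: it is the sum of $\mathsf{pr}_{\rho\cdot\pi}/\mathsf{pr}_\rho$ — or rather, to stay polynomial, I keep the probabilities relative to $\rho$ — over the cylinders $\pi\in\FPaths_\MDP^{\leq\ell_i}(\last(\rho))$ witnessing $\varphi_i$ according to the bounded-operator semantics ($\XX^{\ell}$, $\UU^{\ell}$, $\WW^{\ell}$), where "witnessing" is itself defined via the already-constructed sub-translations $\llbracket\Phi_1\rrbracket$, $\llbracket\Phi_2\rrbracket$ on the intermediate prefixes. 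Since the relevant paths reach total length at most $\Ell$ from $s$ (window length is additive along branches), all these quantities are polynomials in $\mathcal X_s$, and the comparison $\sum_i c_i(\dots)\succcurlyeq c_0$ is a single polynomial (in)equality. The whole formula is then $\bigwedge_{\rho,a}(0\le x_{\rho,a}\le1)\wedge\bigwedge_\rho(\sum_a x_{\rho,a}=1)\wedge\llbracket\Phi\rrbracket_s$, with the variables $\mathcal X_s$ left free (equivalently existentially closed if one wants a sentence over the remaining auxiliary variables — but here no auxiliaries are needed, so it is already in PNF with an empty quantifier prefix, hence trivially in $\exists$-$\R$). Correctness follows from Lemma~\ref{lm:window-length}, which guarantees that satisfaction of a window formula of window length $\Ell$ depends only on the first $\Ell$ steps, so that a window strategy determines the truth value, matching the inductive invariant.

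For the size and complexity bound: each $\llbracket\varphi_i\rrbracket$ sums over at most $|\MDP|^{\mathcal O(\ell_i)}$ paths and recursively invokes sub-translations; unfolding the tree of $\Phi$, the window-length additivity keeps every path-sum bounded by length $\Ell$, so the total size is $|\Phi|\cdot|\MDP|^{\mathcal O(\Ell)}$ and the construction runs in $\EXP$. Finally, for the non-strictness claim when $\Phi$ is flat and non-strict: flatness means $\proba$ is not nested, so the sub-translations $\llbracket\Phi_1\rrbracket,\llbracket\Phi_2\rrbracket$ feeding into a linear-inequality node are purely Boolean combinations of atomic-proposition constants (no polynomial comparisons inside), hence the only genuine polynomial comparisons in the whole formula are the outermost linear inequalities, which are non-strict ($\geq$) by hypothesis; combined with the non-strict strategy constraints $0\le x\le 1$ and $\sum x=1$, and the fact that the formula is negation-free once the Boolean constants are simplified, the solution set is closed.

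The main obstacle I anticipate is the bookkeeping for nested probabilistic operators in the non-flat case: the translation of an inner $\Pr{\varphi}$ appearing inside $\llbracket\Phi_1\rrbracket_\pi$ must be evaluated at the shifted state $\last(\pi)$ with its own window budget, and one must check carefully that the window-length definition ($\Ell$ as the maximum over branches of the sum of horizon labels) indeed bounds the total horizon of every path referenced anywhere in $\llbracket\Phi\rrbracket_s$ — this is exactly why $\Ell$ is defined additively rather than as $\ellmax$, and getting the induction hypothesis to state precisely ``$\llbracket\Phi'\rrbracket_\rho$ is a correct quantifier-free translation whenever $|\rho|+\text{(window length of }\Phi')\le\Ell$'' is the delicate point. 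The arithmetic that a product/sum of the $\mathsf{pr}_\rho$ terms faithfully computes $\mu_\strat(\sem{\varphi_i}^{\last(\rho)})$ is routine given the definition of $\probastrat$ and the disjoint-cylinder additivity of $\mu_\MC$ recalled in the preliminaries, so I would not belabour it.
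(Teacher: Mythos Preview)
Your construction is sound for \emph{flat} formulae and is in fact more direct than the paper's: when the state subformulae $\Phi_1,\Phi_2$ appearing inside path operators are purely propositional, the set of ``witnessing'' paths $\pi$ is independent of $\mathcal X_s$, so $\mu_\strat(\sem{\varphi_i}^{\last(\rho)})$ is genuinely a polynomial in $\mathcal X_s$, the linear comparison is a single polynomial inequality, and no auxiliary variables are needed. Your non-strictness argument in the flat case is also clean.

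The gap is in the \emph{nested} case, and it is not just horizon bookkeeping. When $\Phi_1$ or $\Phi_2$ contains a probabilistic operator, whether a given path $\pi$ ``witnesses'' $\varphi_i$ depends on polynomial inequalities in $\mathcal X_s$ (namely your sub-translations $\llbracket\Phi_1\rrbracket_{\pi[{:}j]}$). Hence $\mu_\strat(\sem{\varphi_i})$ is not a polynomial but a \emph{piecewise} polynomial of the form $\sum_\pi \mathsf{pr}_\pi\cdot[\text{witnessing}(\pi)]$, and the expression ``sum of $\mathsf{pr}_\pi$ over the $\pi$ satisfying an FO-$\R$ condition'' is not itself an FO-$\R$ term. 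To turn $\sum_i c_i\,\mu(\varphi_i)\succcurlyeq c_0$ into a quantifier-free formula you would need a case split over all Boolean assignments to the pairs $(\pi,\Phi')$, of which there are $|\Phi|\cdot|\MDP|^{\mathcal O(\Ell)}$, giving a doubly-exponential formula and breaking the size bound. Your claim that ``no auxiliaries are needed'' is therefore wrong in general.

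The paper avoids this by introducing existentially quantified variables $y_{\rho,\Phi'}\in\{0,1\}$ for the truth of each inner state subformula $\Phi'$ after history $\rho$, and $z_{\rho,\varphi}\in[0,1]$ for the probability of each path subformula $\varphi$ after $\rho$. These are tied together by \emph{local} consistency equations (e.g.\ $z_{\rho,\Phi_1\UU^\ell\Phi_2}=y_{\rho,\Phi_2}+(1-y_{\rho,\Phi_2})y_{\rho,\Phi_1}\sum_{s\xrightarrow{a}s'}x_{\rho,a}\proba(s,a,s')z_{\rho\cdot as',\Phi_1\UU^{\ell-1}\Phi_2}$), so the case analysis is absorbed into polynomial arithmetic on the $y$'s rather than exploded into a disjunction. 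This keeps the formula at size $|\Phi||\MDP|^{\mathcal O(\Ell)}$. The non-strictness argument then requires extra care, since the equivalence $y_{\rho,\Phi'}=1\Leftrightarrow\STATE^\rho(\Phi')$ introduces a negation; flatness is used precisely to make $\STATE^\rho(\Phi')$ collapse to $\top$ or $\bot$.
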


\begin{proof}[Proof sketch]
  We encode the problem in the theory of the reals, by using free variables
  $x_{\rho,a}\in[0,1]$ that have the value of $\stratW(\rho,a)$,
  existential variables $y_{\rho,\Phi'}\in\{0,1\}$ that are true if the state subformula $\Phi'$ is satisfied
  when one follows $\stratW$ after a fixed history of $\rho$,
  and existential variables  $z_{\rho,\varphi}\in[0,1]$ having the probability
  that the path subformula $\varphi$ is satisfied
  when one follows $\stratW$ after a fixed history of $\rho$.
  The $z$ variables use \enquote{local consistency equations} that equate the probability
  of a path formula on the current state as a linear combination of its probability
  on the
  successor states.
  For the $\XX^\ell \Phi'$ formula this translates into $z_{\rho,\XX^\ell \Phi'}=\sum_{s\xrightarrow{a}s'} x_{\rho,a} \proba(s,a,s')z_{\rho\cdot as',\XX^{\ell-1} \Phi'}$ for example.
  The $y$ variables can then be defined, so that
  $y_{\rho,(\sum_{i=1}^n c_i \Pr{\varphi_i} \succcurlyeq c_0)} = 1$
  if and only if $\sum_{i=1}^n c_i z_{\rho,\varphi_i} \succcurlyeq c_0$.
  Lastly we ask that $y_{s,\Phi}=1$.
  To maintain the non-strict property, some subtlety is needed
  in the way nested probabilistic operators are dealt with.
\end{proof}

  If we use a $\PSPACE$ decision procedure for $\exists$-$\R$
  on the formula of Proposition~\ref{prop:window-FOR}, we get:
\begin{theorem}\label{thm:window-synthesis}
  The synthesis problem for \emph{window} $L$-PCTL formulae
  is in $\EXPSPACE$.
\end{theorem}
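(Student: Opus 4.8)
The plan is to combine Proposition~\ref{prop:window-FOR} with the known complexity of the existential theory of the reals. By Lemma~\ref{lm:window-length}, the synthesis problem for a window $L$-PCTL formula $\Phi$ reduces to deciding whether there exists a window strategy $\stratW$ for $\initState$ of horizon $\Ell$ with $\initState \models_{\stratW} \Phi$. First I would invoke Proposition~\ref{prop:window-FOR} to obtain, in $\EXP$, a PNF formula $\psi$ in $\exists$-$\R$ over the free variables $\mathcal X_{\initState}$ whose models are exactly the window strategies satisfying $\Phi$; this formula has size $|\Phi||\MDP|^{\mathcal O(\Ell)}$.

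Next, to reduce the existence of a satisfying window strategy to a closed $\exists$-$\R$ sentence, I would existentially quantify over all the free variables: the sentence $\exists \mathcal X_{\initState}\, \bigl(\psi \wedge \bigwedge_{x\in\mathcal X_{\initState}} (x\geq 0 \wedge x\leq 1) \wedge \bigwedge_{\rho\in \FPaths_\MDP^{<\Ell}(\initState)} \sum_{a\in A} x_{\rho,a} = 1\bigr)$ is true if and only if some point of $\R^{\mathcal X_{\initState}}$ encodes a genuine window strategy satisfying $\Phi$. (The conjuncts constraining the $x$-variables may already be built into $\psi$, in which case this step is cosmetic; either way the resulting sentence stays in $\exists$-$\R$ and only grows polynomially in $|\MDP|^{\mathcal O(\Ell)}$.) The construction of this sentence is still doable in $\EXP$.

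Finally I would feed this $\exists$-$\R$ sentence to the $\PSPACE$ decision procedure of~\cite{10.1145/62212.62257}. Since the sentence has size $N := |\Phi||\MDP|^{\mathcal O(\Ell)}$ and the decision procedure runs in space polynomial in its input, the whole algorithm runs in space $N^{\mathcal O(1)} = |\Phi|^{\mathcal O(1)}|\MDP|^{\mathcal O(\Ell)}$; recalling $\Ell\leq |\Phi|\ellmax \leq |\Phi|^2$, this is bounded by $2^{|\Phi|^{\mathcal O(1)}\log|\MDP|} = 2^{\mathrm{poly}(|\Phi|,|\MDP|)}$, hence the procedure is in $\EXPSPACE$ overall. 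The only genuine subtlety, and the step I expect to require the most care, is the bookkeeping on the size of the $\exists$-$\R$ formula and the resulting space bound: one must check that the $|\MDP|^{\mathcal O(\Ell)}$ blow-up from the number of short paths, once fed to a polynomial-space procedure, yields exponential \emph{space} rather than merely exponential time, and that unary encoding of horizon labels (so $\ellmax\leq|\Phi|$) keeps $\Ell$ polynomial in $|\Phi|$. Everything else is a direct chaining of the cited results.
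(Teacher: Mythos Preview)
Your proposal is correct and follows essentially the same route as the paper: build the exponential-size $\exists$-$\R$ formula from Proposition~\ref{prop:window-FOR}, existentially close it, and feed it to the $\PSPACE$ decision procedure for $\exists$-$\R$ from~\cite{10.1145/62212.62257}. The paper states this in a single line without spelling out the existential closure or the complexity bookkeeping you provide; your added detail (noting that the well-formedness constraints on $\mathcal X_{\initState}$ are already part of $\psi$, and tracking $\Ell\leq|\Phi|^2$ via the unary encoding of horizon labels) is accurate and matches the paper's conventions.
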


\begin{example}\label{ex:window-FOR}
  Consider the MDP $\MDP$ to the right of \figurename~\ref{fig:MDP},
  and $\Phi=\Pr{\FF^2 s_2}\geq\frac{9}{16}$.
  Let us describe the formula %
  obtained by Proposition~\ref{prop:window-FOR}.
  $s_0\models_{\partial} \Phi$ can be encoded schematically
  as the formula
  $\exists z_{s_0,\FF^2 s_2}\exists z_{s_0as_1,\FF^1 s_2}, \text{s.t.}~
  z_{s_0,\FF^2 s_2} =
  \frac{x_{s_0,a}}{2}z_{s_0as_1,\FF^1 s_2}
  +\frac{x_{s_0,a}}{2}+\frac{x_{s_0,b}}{2}
  \wedge z_{s_0as_1,\FF^1 s_2} = \frac{x_{s_0as_1,b}}{2}
  \wedge z_{s_0,\FF^2 s_2}\geq\frac{9}{16}$.
  For readability reasons,
  we simplified boolean expressions
  involving $\top$ or $\bot$ when appropriate, and we omitted the
  variables that can be simplified out immediately, as well as the
  constraints making sure that the variables $x$ encode probabilities.

  After quantifier elimination, and using $x_{s_0,a}+x_{s_0,b}=1$,
  we get $x_{s_0,a}\ x_{s_0as_1,b}\geq \frac{1}{4}$.
  Observe that, as mentioned in Example~\ref{ex:PCTL},
  a window strategy $\stratW$ that sets $x_{s_0,a}=x_{s_0as_1,b}=\frac{1}{2}$
  satisfies $\Phi$.
  Similarly, $s_1\models_{\partial} \Phi$ can be encoded as
  $x_{s_1,a}\ x_{s_1as_0,b}\geq \frac{1}{4}$.
\end{example}

\subparagraph*{{\bf Fixed point characterisation of global window formulae}}
  Let $\Phi$ be a window $L$-PCTL formula of window length $\Ell$.
  In this subsection, we describe a fixed point characterisation of
  the synthesis problem for the global window formula $\Forall \GG \Phi$.

  A \emph{window strategy portfolio} $\Pi$
  of horizon $\Ell$ (in short, a portfolio $\Pi$) maps each state $s$
  to a set $\Pi_s$ of window strategies
  for $s$ of horizon $\Ell$. A window strategy portfolio can be seen as a set
  of points in $\R^{\mathcal X_s}$ for every state $s$.
  Given two window strategy portfolios $\Pi$ and $\Pi'$ of horizon $\Ell$,
  we write $\Pi\subseteq\Pi'$ if for all $s\in S$,
  it holds that $\Pi_s\subseteq\Pi_s'$.
  Then, the set of all window strategy portfolios of horizon $\Ell$
  is a complete lattice w.r.t.~$\subseteq$,
  where for a set $\mathcal S$ of portfolios, the meet
  $\bigsqcap \mathcal S$
  (resp.
  the join $\bigsqcup \mathcal S$)
  maps $s$ to $\bigcap_{\Pi\in \mathcal S} \Pi_s$
  (resp. $\bigcup_{\Pi\in \mathcal S} \Pi_s$).

  Let $s\xrightarrow{a}s'$ be a transition in $\MDP$, and
  let $\stratW$, $\stratW'$ be window strategies for $s$ and $s'$, respectively,
  of horizon $\Ell$. We say that $\stratW$ and $\stratW'$ are \emph{compatible}
  w.r.t.~$s\xrightarrow{a}s'$ if they make the same decisions on shared paths,
  \textit{i.e.}~for all $\rho\in\FPaths_{\MDP}^{<\Ell}(s')$
  the probability of $sa\cdot \rho$ under $\stratW$ equals
  the probability of $\rho$ under $\stratW'$
  multiplied by $\stratW(s,a)\proba(s,a,s')$.
  In particular, whenever $sa\cdot \rho$ has non-zero probability
  under $\stratW$ and $|\rho|<\Ell-1$,
  we have $\stratW(sa\cdot \rho) = \stratW'(\rho)$.
  Similarly, we say that $\stratW$ and a set $\Pi_{s'}$
  of window strategies for $s'$ are compatible w.r.t.~$s\xrightarrow{a}s'$
  if either $\stratW(s,a)=0$ or there exists a window strategy $\stratW'$
  in $\Pi_{s'}$ so that $\stratW$ and $\stratW'$ are compatible
  w.r.t.~$s\xrightarrow{a}s'$.

  Let $\f$ map portfolios to portfolios,
  so that $\f(\Pi)$ maps $s\in S$ to the set $\f(\Pi)_s$ of window
  strategies $\stratW\in \Pi_s$ so that for each $s\xrightarrow{a}s'$ in $\MDP$,
  we have that $\stratW$ and $\Pi_{s'}$ are compatible w.r.t.~$s\xrightarrow{a}s'$.
  Intuitively, $\f$ removes from $\Pi_s$ the window strategies $\stratW$
  that are not compatible with any continuation after
  a transition $s\xrightarrow{a}s'$ for some action $a$.
  Then, $\f$ is expressible in the theory of the reals:
\begin{lemma}\label{lm:fixpoint-FOR}
  Let $\Pi$ be a portfolio, encoded as an $\exists$-$\R$ formula $R^s$,
  of free variables $\mathcal X_s$, for every state $s$.
  Assume that each $R^s$ is a PNF formula of size $F$.
  Then, $\f(\Pi)_s$ can also be encoded as a PNF formula,
  of size in $\mathcal O(|\MDP|F)+|\MDP|^{\mathcal O(\Ell)}$.
  Moreover, if the formulae associated with $\Pi$ are non-strict,
  so are the formulae of $\f(\Pi)$.
\end{lemma}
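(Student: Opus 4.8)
The plan is to directly assemble the $\exists$-$\R$ formula for $\f(\Pi)_s$ out of the given formulae $R^{s}$ and the compatibility constraints, and to keep it in PNF by gathering quantifiers. Fix a state $s$. By definition, $\stratW\in\f(\Pi)_s$ holds exactly when (i) $\stratW\in\Pi_s$, and (ii) for every transition $s\xrightarrow{a}s'$ in $\MDP$, $\stratW$ and $\Pi_{s'}$ are compatible w.r.t.~$s\xrightarrow{a}s'$, i.e.~either $x_{s,a}=0$ or there is some $\stratW'\in\Pi_{s'}$ compatible with $\stratW$ along that transition. The free variables of the output formula are $\mathcal X_s$, encoding $\stratW$. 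Condition (i) is just $R^{s}$, whose free variables are already $\mathcal X_s$. For condition (ii), for each transition $s\xrightarrow{a}s'$ I introduce a fresh copy $\mathcal X_{s'}^{(a)}$ of the variable set $\mathcal X_{s'}$ (one copy per transition, to avoid name clashes when the same target $s'$ is reached by several actions), take the formula $R^{s'}$ with its free variables renamed to $\mathcal X_{s'}^{(a)}$, and existentially quantify over $\mathcal X_{s'}^{(a)}$.

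The compatibility link between $\stratW$ (variables $\mathcal X_s$) and $\stratW'$ (variables $\mathcal X_{s'}^{(a)}$) w.r.t.~$s\xrightarrow{a}s'$ is a system of polynomial equalities: for every $\rho\in\FPaths_{\MDP}^{<\Ell}(s')$, the probability of $sa\cdot\rho$ under $\stratW$ equals the probability of $\rho$ under $\stratW'$ times $x_{s,a}\,\proba(s,a,s')$. Each "probability of a path'' is a monomial in the strategy variables (a product of at most $\Ell$ of them times a constant from $\proba$), so each such equation is a polynomial identity over $\mathcal X_s\cup\mathcal X_{s'}^{(a)}$; there are $|\FPaths_{\MDP}^{<\Ell}(s')|\le|\MDP|^{\mathcal O(\Ell)}$ of them. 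Call this conjunction $\mathrm{Compat}^{s\xrightarrow{a}s'}(\mathcal X_s,\mathcal X_{s'}^{(a)})$. Then $\f(\Pi)_s$ is described by
\[
  R^{s}(\mathcal X_s)\;\wedge\;\bigwedge_{s\xrightarrow{a}s'}\Bigl(x_{s,a}=0\;\vee\;\exists\,\mathcal X_{s'}^{(a)}\bigl(R^{s'}(\mathcal X_{s'}^{(a)})\wedge \mathrm{Compat}^{s\xrightarrow{a}s'}(\mathcal X_s,\mathcal X_{s'}^{(a)})\bigr)\Bigr).
\]
Because each $R^{s'}$ is already in PNF and $\exists$-$\R$ (only existential quantifiers), the inner existential blocks can be pulled to the front: the $\exists$ over $\mathcal X_{s'}^{(a)}$ commutes with the prenex $\exists$-prefix of $R^{s'}$, the quantified variable sets $\mathcal X_{s'}^{(a)}$ are pairwise disjoint and disjoint from $\mathcal X_s$, and a disjunct $x_{s,a}=0$ with no quantifiers poses no obstruction to prenexing (it can be absorbed, e.g.~distributing or using that $\exists$ distributes over a disjunction with one quantifier-free disjunct). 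This yields a single $\exists$-prefix over $\mathcal X_s$-free, PNF formula.

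For the size bound: we have one conjunct $R^s$ of size $F$, and for each of the $\le|\MDP|$ transitions out of $s$ a copy of $R^{s'}$ (size $F$) together with a $\mathrm{Compat}$ block of size $|\MDP|^{\mathcal O(\Ell)}$ and one disjunct of constant size; summing gives $\mathcal O(|\MDP|F)+|\MDP|^{\mathcal O(\Ell)}$ as claimed, and prenexing only reorders the quantifier prefix without increasing size. For non-strictness: if every $R^{s'}$ is non-strict, we may assume it is written negation-free using only $\{\le,=,\ge\}$ (or at least defines a closed set); the equations in $\mathrm{Compat}$ are non-strict; the disjunct $x_{s,a}=0$ is a closed condition; conjunction, disjunction and existential projection all preserve closedness of the solution set (projection of a closed set here stays closed because the strategy variables range over the compact set $[0,1]^{\mathcal X}$, cut out by non-strict inequalities), so the resulting formula defines a closed set and hence is non-strict. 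I expect the only real subtlety to be this last point — arguing that the existential projection stays closed — which is handled by the compactness of the strategy polytope; everything else is routine bookkeeping of variables and quantifier prefixes.
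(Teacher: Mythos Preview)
Your proposal is correct and follows essentially the same construction as the paper: conjoin $R^s$ with, for each outgoing transition $s\xrightarrow{a}s'$, the disjunction of $x_{s,a}=0$ and an existentially quantified fresh copy of $R^{s'}$ together with polynomial compatibility constraints, then prenex. Two cosmetic differences are worth noting. First, the paper encodes compatibility not by your direct probability-of-path monomial equalities but by the equivalent form $\POLY(\rho)=0 \vee \bigwedge_{a'\in A} x_{sa\cdot\rho,a'}=x_{\rho,a'}$, iterating over $\rho\in\FPaths_{\MDP}^{<\Ell-1}(s')$. Second, for non-strictness the paper argues purely syntactically---the construction introduces only $=$, $\vee$, $\wedge$, $\exists$ and no negation, so a negation-free non-strict input stays negation-free non-strict---whereas you argue semantically that existential projection preserves closedness thanks to compactness of $[0,1]^{\mathcal X}$. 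Your semantic argument is arguably the more robust one, since the lemma's hypothesis is stated for the semantic notion of non-strict and projection of a closed set is not closed in general without compactness; the paper is implicitly relying on the syntactic form being maintained from Proposition~\ref{prop:window-FOR} onward.
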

\begin{proof}%
  Let $s\xrightarrow{a}s'$ be a transition in $\MDP$, and
  let $\stratW$, $\stratW'$ be window strategies for $s$ and $s'$,
  encoded as points in $\R^{\mathcal X_s}$ and $\R^{\mathcal X_{s'}}$,
  respectively.
  If $\rho\in\FPaths_{\MDP}^{<\Ell}(s')$,
  then let $\POLY(\rho)$ denote the polynomial $\prod_{0\leq i <|\rho|}
  x_{\rho[{:}i],a_i} \proba(s_i,a_i,s_{i+1})$.
  Then, the strategies $\stratW$ and $\stratW'$ are compatible
  w.r.t.~$s\xrightarrow{a}s'$ if for all $\rho\in\FPaths_{\MDP}^{<\Ell-1}(s')$
  so that $\POLY(\rho)>0$, for all $a'\in A$
  we have $x_{sa\cdot \rho,a'} = x_{\rho, a'}$.
  Then, if $\Pi_{s'}$ is encoded as the formula $R^{s'}$,
  we get that $\stratW$ and $\Pi_{s'}$
  are compatible w.r.t.~$s\xrightarrow{a}s'$ if there exists a valuation
  of $\mathcal X_{s'}$ that encodes a window strategy $\stratW'$ so that
  $\stratW$ and $\stratW'$ are compatible w.r.t.~$s\xrightarrow{a}s'$.
  This property corresponds to the formula
  defined by
  \[\mathcal F(s,a,s') :=
    \exists \mathcal X_{s'}, R^{s'} \wedge
    \bigwedge_{\rho\in\FPaths_{\MDP}^{<\Ell-1}(s')} \POLY(\rho)=0
    \vee \bigwedge_{a'\in A} x_{sa\cdot \rho, a'} = x_{\rho, a'}\]

  Therefore, if $\Pi_s$ is encoded as $R^s$ then the formula
  $R^s \wedge \bigwedge_{s\xrightarrow{a}s'}
  x_{s,a}=0 \vee \mathcal F(s,a,s')$ encodes $\f(\Pi)_s$.
  Observe that it introduces non-strict comparisons,
  existential quantifiers, and no negation operations,
  and is of size in $F+|\MDP|(F+|\MDP|^{\mathcal O(\Ell)})$.
\end{proof}

\begin{example}\label{ex:window-fixpoint-op}
  Consider again the MDP $\MDP$ to the right of \figurename~\ref{fig:MDP}.
  Let $\Pi$ be the portfolio where $\Pi_{s_0}$
  is defined by $x_{s_0,a}\ x_{s_0as_1,b}\geq \frac{1}{4} \wedge
  x_{s_0as_1, b}\leq c \text{ with }c\in\left[\frac{1}{2},1\right]\,,$
  $\Pi_{s_1}$ is defined by
  $x_{s_1,a}\ x_{s_1as_0,b}\geq \frac{1}{4} \wedge x_{s_1as_0, b}\leq c\,,$
  and $\Pi$ is $\top$ on every other state of $\MDP$.\footnote{%
  We omitted the constraints that ensure that all variables encode probabilities.}
  Then, using Lemma~\ref{lm:fixpoint-FOR} yields that a formula equivalent to
  $x_{s_0,a}\ x_{s_0as_1,b}\geq \frac{1}{4} \wedge
  x_{s_0as_1, b}\leq 1-\frac{1}{4c}$ encodes $\f(\Pi)_{s_0}$.
  Symmetrically, $\f(\Pi)_{s_1}$ can be encoded as
  $x_{s_1,a}\ x_{s_1as_0,b}\geq \frac{1}{4}
  \wedge x_{s_1as_0, b}\leq 1-\frac{1}{4c}$.
\end{example}

\begin{lemma}[Knaster-Tarski, Kleene]\label{lm:fixed-points}
  The operator $\f$ is Scott-continuous (upwards and downwards), and is thus monotone.
  Let $Q$ be a set of window strategy portfolios of horizon $\Ell$
  that forms a complete lattice w.r.t.~$\subseteq$.
  Then, the set of fixed points of $\f$ in $Q$ forms a complete lattice
  w.r.t.~$\subseteq$.
  Moreover, $\f$ has
  a greatest fixed point in $Q$
  equal to $\bigsqcap \{\f^n(\bigsqcup Q)\mid n\in\N\}$.
\end{lemma}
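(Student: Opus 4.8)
The plan is to prove the three assertions in order: Scott-continuity of $\f$ (which gives monotonicity for free), the complete-lattice structure of the fixed points inside any complete sublattice $Q$, and finally the explicit formula for the greatest fixed point.

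First I would establish monotonicity of $\f$ directly from its definition: if $\Pi\subseteq\Pi'$ then for every state $s$, $\f(\Pi)_s\subseteq\Pi_s\subseteq\Pi'_s$, and the compatibility condition ``$\stratW$ and $\Pi_{s'}$ are compatible w.r.t.\ $s\xrightarrow{a}s'$'' only gets easier to satisfy when $\Pi_{s'}$ grows (more candidate witnesses $\stratW'$ are available); hence $\f(\Pi)_s\subseteq\f(\Pi')_s$. For downwards Scott-continuity, I would take a downward-directed family $\mathcal S$ of portfolios and show $\f(\bigsqcap\mathcal S)=\bigsqcap_{\Pi\in\mathcal S}\f(\Pi)$. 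The $\subseteq$ direction is monotonicity; for $\supseteq$, suppose $\stratW\in\f(\Pi)_s$ for every $\Pi\in\mathcal S$. Then $\stratW\in\bigcap_{\Pi}\Pi_s=(\bigsqcap\mathcal S)_s$, and for each transition $s\xrightarrow{a}s'$ with $\stratW(s,a)>0$ and each $\Pi\in\mathcal S$ there is a compatible witness $\stratW'_\Pi\in\Pi_{s'}$. The key point is that compatibility of $\stratW$ with a \emph{fixed} $\stratW'$ w.r.t.\ $s\xrightarrow{a}s'$ pins down $\stratW'$ on all paths of non-zero probability under $\stratW$ (as noted in the text: $\stratW(sa\cdot\rho)=\stratW'(\rho)$ there), so all the $\stratW'_\Pi$ agree on that portion of their domain; using directedness of $\mathcal S$ and the finiteness of the relevant path set $\FPaths^{<\Ell}_\MDP(s')$, one extracts a single $\stratW'$ that lies in $\bigcap_\Pi\Pi_{s'}$ and is compatible with $\stratW$. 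Upwards Scott-continuity is symmetric, using upward-directed families and the join; here the argument is even simpler since a witness for $\bigcup_\Pi\Pi_{s'}$ lies in some $\Pi_{s'}$ already. \textbf{I expect the directed-witness extraction — merging the per-portfolio strategies $\stratW'_\Pi$ into a single strategy — to be the main obstacle}, and the fix is precisely the observation that compatibility forces agreement on the non-zero-probability fragment, with values on zero-probability paths chosen arbitrarily.

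Next, given that $\f$ is monotone and $Q$ is a complete lattice under $\subseteq$, the existence of the complete lattice of fixed points of $\f$ in $Q$ is the classical Knaster--Tarski theorem, which I would simply invoke: $\f\colon Q\to Q$ is well-defined (since $\f(\Pi)_s\subseteq\Pi_s$ shows $\f$ maps $Q$ into $Q$ whenever $Q$ is closed under the operations, and in particular $\f(\bigsqcup Q)\subseteq\bigsqcup Q\in Q$) and monotone, so Knaster--Tarski yields that the set of its fixed points forms a complete lattice.

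Finally, for the explicit greatest fixed point, I would appeal to the Kleene fixed-point theorem in its transfinite-free form for Scott-continuous maps: since $\f$ is downwards Scott-continuous and $\{\f^n(\bigsqcup Q)\mid n\in\N\}$ is a decreasing chain (decreasing because $\f(\bigsqcup Q)\subseteq\bigsqcup Q$ and then by monotonicity $\f^{n+1}(\bigsqcup Q)\subseteq\f^n(\bigsqcup Q)$), hence downward-directed, its meet $P^*:=\bigsqcap\{\f^n(\bigsqcup Q)\mid n\in\N\}$ satisfies $\f(P^*)=\f(\bigsqcap_n\f^n(\bigsqcup Q))=\bigsqcap_n\f^{n+1}(\bigsqcup Q)=P^*$ (the last equality because dropping the $n=0$ term from a decreasing chain does not change the meet), so $P^*$ is a fixed point. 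To see it is the greatest: for any fixed point $P\in Q$ we have $P\subseteq\bigsqcup Q=\f^0(\bigsqcup Q)$, and inductively $P=\f^n(P)\subseteq\f^n(\bigsqcup Q)$ by monotonicity, so $P\subseteq\bigsqcap_n\f^n(\bigsqcup Q)=P^*$. This completes the proof.
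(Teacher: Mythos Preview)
Your approach to Scott-continuity differs from the paper's. The paper argues that for a (simultaneously upward- and downward-) directed family $\mathcal S$ of portfolios, both $\bigsqcap\mathcal S$ and $\bigsqcup\mathcal S$ actually \emph{belong} to $\mathcal S$; once this is granted, the non-trivial inclusion in each continuity equation is immediate, since for instance $\bigsqcap_{\Pi\in\mathcal S}\f(\Pi)\subseteq\f(\bigsqcap\mathcal S)$ holds because $\bigsqcap\mathcal S$ is itself one of the $\Pi$'s indexing the meet. You do not invoke this shortcut and instead try to build the compatible witness $\stratW'\in\bigcap_\Pi\Pi_{s'}$ directly.

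That witness-merging step has a gap. You correctly observe that compatibility with a fixed $\stratW$ pins down $\stratW'$ on paths of non-zero probability, so the per-portfolio witnesses $\stratW'_\Pi$ all coincide there. But membership in $\Pi_{s'}$ is an arbitrary set-theoretic condition and may also constrain $\stratW'$ on zero-probability paths; across an infinite downward-directed family those extra constraints need not be jointly satisfiable. Directedness only buys you the \emph{finite} intersection property for the sets $\{\stratW'\in\Pi_{s'}:\stratW'\text{ compatible with }\stratW\}$, and ``finiteness of the relevant path set'' just says the ambient window-strategy space is finite-dimensional. To conclude that the full intersection is non-empty you would need those sets to be closed (so that compactness of the window-strategy simplex applies), and that is not assumed of arbitrary portfolios. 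The paper bypasses this obstacle via its membership claim rather than by any witness-merging. Your treatment of monotonicity, upward continuity (where finitely many transitions let directedness produce a single upper-bound portfolio containing all witnesses), the Knaster--Tarski invocation, and the Kleene iteration for the greatest fixed point are all fine.
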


  Let $\Phi$ be a window $L$-PCTL formula of window length $\Ell$.
  Let $Q^\Phi = \{\Pi \mid \forall s\in S, \forall \stratW\in \Pi_s,
  s \models_{\stratW} \Phi \}$ be the set of portfolios
  containing window strategies of horizon $\Ell$ that ensure $\Phi$.
  It is closed by $\bigsqcap$ and $\bigsqcup$, and therefore forms
  a complete lattice.
  The greatest element $\bigsqcup Q^\Phi$ is the full
  portfolio mapping every $s$ to all window strategies $\stratW$
  so that $s \models_{\stratW} \Phi$.
  We denote $\Pi^\Phi$ the greatest fixed point of $\f$ in $Q^\Phi$,
  that must exist by Lemma~\ref{lm:fixed-points}.

\begin{proposition}\label{prop:fixed-point}
  Let $s_0$ be a state, and let $\Phi$  be a window $L$-PCTL formula.
  Then, $\Pi^\Phi_{s_0}\neq \emptyset$
  if and only if there exists a strategy $\strat$ so that
  $s_0 \models_{\strat} \Forall \GG \Phi$.
\end{proposition}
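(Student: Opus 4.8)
The plan is to prove both implications separately, using the fixed point characterisation together with the correspondence between strategies and window strategies from Lemma~\ref{lm:window-length}.

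For the easier direction ($\Pi^\Phi_{s_0}\neq\emptyset \Rightarrow$ a strategy exists), I would build a strategy $\strat$ by ``stitching together'' compatible window strategies along every reachable path. Concretely, since $\Pi^\Phi$ is a fixed point of $\f$, for every state $s$ and every $\stratW\in\Pi^\Phi_s$ and every transition $s\xrightarrow{a}s'$ the window strategy $\stratW$ is compatible w.r.t.~$s\xrightarrow{a}s'$ with some $\stratW'\in\Pi^\Phi_{s'}$. I would fix, by a choice function, such a witness $\stratW'$ for each $(\stratW,a,s')$, pick an arbitrary $\stratW^{(0)}\in\Pi^\Phi_{s_0}$, and then define $\strat$ by induction on the length of the prefix $\rho$: maintain the invariant that to each reachable $\rho$ (of nonzero probability under the strategy built so far) we have assigned a window strategy $\stratW^{(\rho)}\in\Pi^\Phi_{\last(\rho)}$ such that the choices of $\strat$ on the first $\Ell$ steps after $\rho$ agree with $\stratW^{(\rho)}$. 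The compatibility condition is exactly what guarantees that when we extend $\rho$ by $\last(\rho)\xrightarrow{a}s'$, the overlapping portions of $\stratW^{(\rho)}$ and the freshly chosen $\stratW^{(\rho\cdot as')}$ coincide, so $\strat$ is well defined (no conflicting prescriptions on any path). Then for every reachable state, i.e.~every $\last(\rho)$ with $\rho$ of nonzero probability, the window strategy induced by $\strat$ on $\last(\rho)$ with horizon $\Ell$ lies in the cylinder of $\stratW^{(\rho)}\in\Pi^\Phi_{\last(\rho)}\subseteq (\bigsqcup Q^\Phi)_{\last(\rho)}$, hence $\last(\rho)\models_{\stratW^{(\rho)}}\Phi$; by Lemma~\ref{lm:window-length} this gives $\last(\rho)\models_\strat\Phi$. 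Since this holds for all reachable states, $\MDPstrat\models\Forall\GG\Phi$ by the definition of global window formulae.

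For the converse, suppose $\strat$ is a strategy with $s_0\models_\strat\Forall\GG\Phi$. For each finite path $\rho$ of nonzero probability under $\strat$, let $\stratW_\rho$ be the window strategy of horizon $\Ell$ that $\strat$ induces after the prefix $\rho$ (as defined just before Lemma~\ref{lm:window-length}). Define a portfolio $\Pi$ by $\Pi_s = \{\stratW_\rho \mid \rho\in\FPaths_\strat,\ \last(\rho)=s\}$ on states $s$ reachable under $\strat$, and $\Pi_s=\emptyset$ otherwise. First, $\Pi\in Q^\Phi$: every state reachable under $\strat$ satisfies $\Phi$ (that is what $\Forall\GG\Phi$ says), and by Lemma~\ref{lm:window-length} $\last(\rho)\models_\strat\Phi$ implies $\last(\rho)\models_{\stratW_\rho}\Phi$. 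Second, $\Pi$ is a fixed point of $\f$: given $\stratW_\rho\in\Pi_s$ and a transition $s\xrightarrow{a}s'$, if $\stratW_\rho(s,a)=0$ compatibility is immediate; otherwise $\rho\cdot as'\in\FPaths_\strat$, so $\stratW_{\rho\cdot as'}\in\Pi_{s'}$, and $\stratW_\rho$ and $\stratW_{\rho\cdot as'}$ are compatible w.r.t.~$s\xrightarrow{a}s'$ because both are induced by the single strategy $\strat$ — they are forced to agree on all shared paths, and the probability bookkeeping in the definition of compatibility is exactly the law of how $\probastrat$ factorises through the transition $s\xrightarrow{a}s'$. Hence $\f(\Pi)\supseteq\Pi$, and since $\Pi\subseteq\f(\Pi)$ trivially fails to be what we need — rather, $\f(\Pi)_s\subseteq\Pi_s$ always holds by definition of $\f$, and we have just shown the reverse inclusion — so $\f(\Pi)=\Pi$. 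Therefore $\Pi$ is a fixed point of $\f$ inside $Q^\Phi$, so $\Pi\subseteq\Pi^\Phi$ by maximality (Lemma~\ref{lm:fixed-points}), and in particular $\Pi^\Phi_{s_0}\supseteq\Pi_{s_0}\ni\stratW_{s_0}\neq\emptyset$.

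The main obstacle is the well-definedness of the stitched strategy in the first direction: one must check carefully that the overlapping windows assigned to $\rho$ and to its one-step extensions never prescribe different distributions on a common path, and that the recursive construction actually covers every path of nonzero probability (so that the induced window strategy on each reachable state genuinely belongs to the chosen portfolio element). This is precisely the content of the compatibility relation — the bound $|\rho|<\Ell-1$ appearing there is what makes ``$\stratW(sa\cdot\rho)=\stratW'(\rho)$'' meaningful — but threading it through an induction over unbounded prefixes, while keeping track of which window strategy is attached to which path, is the delicate bookkeeping step. The measure-theoretic facts needed (that a path of nonzero probability has all its one-step extensions' probabilities summing correctly, that probabilities factorise along a transition) are routine consequences of the definition of $\MDPstrat$ and the cylinder measure recalled in the preliminaries.
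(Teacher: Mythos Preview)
Your proposal is correct and follows essentially the same approach as the paper. The paper packages the two directions as two lemmas (one showing that the portfolio $\Pi^{s_0,\strat}$ of window strategies induced by a winning $\strat$ is a fixed point of $\f$ in $Q^\Phi$, the other stitching a strategy from any non-empty fixed point via what it calls a \emph{window strategy assignment}), but the content---including the use of compatibility to guarantee consistency of the stitched strategy and the appeal to maximality of $\Pi^\Phi$---is exactly what you wrote; the only cosmetic slip is that in the converse you should explicitly restrict the paths $\rho$ defining $\Pi_s$ to start from $s_0$, since $\Forall\GG\Phi$ only constrains states reachable from $s_0$.
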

\begin{proof}[Proof sketch]
  On the one hand, we show that if $\strat$ is a strategy so that
  $s_0 \models_{\strat} \Forall \GG \Phi$,
  and if $\Pi^{s_0,\strat}$ is the set of window strategies obtained
  for state $s$ and horizon $\Ell$ from fixed prefixes
  of non-zero probability in $\strat$,
  then $\Pi^{s_0,\strat}$ is a fixed point of $\f$ in $Q^\Phi$
  so that $\Pi^{s_0,\strat}_{s_0}\neq\emptyset$.
  On the other hand, we show that from every fixed point $\Pi$ of $\f$ in $Q^\Phi$
  that is non-empty on a state $s_0$, we can inductively construct a
  strategy $\sigma$ so that $s_0 \models_{\strat} \Forall \GG \Phi$,
  that intuitively consists in picking successive window strategies from $\Pi$
  that are compatible with each other.
\end{proof}

  Therefore, computing $\Pi^\Phi$ solves the synthesis problem for
  global window $L$-PCTL formulae.
  By Lemma~\ref{lm:fixed-points}, we have that $\Pi^\Phi$ is the limit
  of the non-increasing sequence $(\f^i(\bigsqcup Q^\Phi))_{i\in\N}$,
  with $\bigsqcup Q^\Phi$ being the full portfolio that can be obtained as
  an $\exists$-$\R$ formula by Proposition~\ref{prop:window-FOR},
  so that  $\f^i(\bigsqcup Q^\Phi)$ is computable by Lemma~\ref{lm:fixpoint-FOR}
  as an $\exists$-$\R$ formula of size in
  $|\Phi||\MDP|^{\mathcal O(\Ell+i)}$.

\begin{example}\label{ex:window-fixpoint}
  Let $\MDP$ be the MDP to the right of \figurename~\ref{fig:MDP},
  and $\Phi=\Pr{\FF^2 s_2}\geq\frac{9}{16}$.
  As detailed in Example~\ref{ex:window-FOR}, the set of strategies $(\bigsqcup Q^{\Phi})_{s_0}$
  is described by $x_{s_0,a}x_{s_0as_1,b}\geq \frac{1}{4}$,
  the set of strategies $(\bigsqcup Q^{\Phi})_{s_1}$
  is described by $x_{s_1,a}x_{s_1as_0,b}\geq \frac{1}{4}$,
  and the set $\bigsqcup Q^{\Phi}$ is described by $\top$ on all other states.\footnote{%
  Once again, we omit the constraints that ensure that all variables encode probabilities.}
  By Example~\ref{ex:window-fixpoint-op}, $\f^i(\bigsqcup Q^\Phi)_{s_0}$
  is described by $x_{s_0,a}x_{s_0as_1,b}\geq \frac{1}{4} \wedge x_{s_0as_1, b}\leq c_i$,
  where the constant $c_i$ is defined by $c_0=1$ and $c_{i+1}=1-\frac{1}{4c_i}$.
  Similarly, $\f^i(\bigsqcup Q^\Phi)_{s_1}$
  is described by $x_{s_1,a}x_{s_1as_0,b}\geq \frac{1}{4} \wedge x_{s_1as_0, b}\leq c_i$,
  and $\f^i(\bigsqcup Q^\Phi)$ is $\top$ on all other states.
  The sequence $(c_i)_{i\in\N}$ is a decreasing sequence
  that converges towards $\frac{1}{2}$ (but never reaches it).

  The limit of this sequence is the greatest fixpoint $\Pi_{s_0}^\Phi$, described by
  $x_{s_0,a}x_{s_0as_1,b}\geq \frac{1}{4} \wedge x_{s_0as_1, b}\leq \frac{1}{2}$
  on $s_0$,
  $x_{s_1,a}x_{s_1as_0,b}\geq \frac{1}{4} \wedge x_{s_1as_0, b}\leq \frac{1}{2}$
  on $s_1$, $\top$ everywhere else.
  If we follow the proof of Proposition~\ref{prop:fixed-point}, we can
  recover the only choice on $s_0$ and $s_1$ that ensures $\Forall \GG \Phi$:
  play $a$ and $b$ with probability $\frac{1}{2}$.
\end{example}

  We note that this fixed point computation is not an algorithm:
  as we have seen in Example~\ref{ex:window-fixpoint} the fixed point
  may not be reachable in finitely many steps. In this case, we do not know
  if the limit will be empty or not.
  Nonetheless, this characterisation yields multiple corollary results, that we detail in  the remainder of this section.

\subparagraph*{{\bf Flat, non-strict formulae}}

  If $\Phi$ is flat and non-strict then $\f^i(\bigsqcup Q^\Phi)$ maps
  every state to a compact set.\footnote{%
    Non-strict formulae describe closed sets, and all variables are
    in $[0,1]$ as they encode probabilities.
  }
  The limit of an infinite decreasing sequence of
  non-empty compact sets in $\R^{\mathcal X_s}$ is non-empty.
  Therefore, if the limit of a decreasing sequence of
  compact sets is the empty set, it must be reached after
  finitely many steps. Thus, if $\Pi^\Phi_{s}=\emptyset$,
  then there exists $i\in\N$ so that $\f^i(\bigsqcup Q^\Phi)_s=\emptyset$.

\begin{theorem}\label{thm:coRE-easy}
  The synthesis problem for flat, non-strict \emph{global window} $L$-PCTL formulae
  is in $\co\RE$.
\end{theorem}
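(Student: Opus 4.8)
The plan is to exhibit a semi-decision procedure that halts and accepts exactly when no strategy $\strat$ satisfies $\Forall \GG \Phi$ from $\initState$. By Proposition~\ref{prop:fixed-point}, a strategy exists iff $\Pi^\Phi_{\initState}\neq\emptyset$, where $\Pi^\Phi$ is the greatest fixed point of $\f$ in $Q^\Phi$, and by Lemma~\ref{lm:fixed-points} this greatest fixed point is the limit of the non-increasing sequence $(\f^i(\bigsqcup Q^\Phi))_{i\in\N}$. So non-existence of a strategy means $\Pi^\Phi_{\initState}=\emptyset$.

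The key observation, already spelled out in the paragraph preceding the statement, is a compactness argument: when $\Phi$ is flat and non-strict, Proposition~\ref{prop:window-FOR} produces a non-strict $\exists$-$\R$ formula for $\bigsqcup Q^\Phi$, and Lemma~\ref{lm:fixpoint-FOR} preserves non-strictness, so every $\f^i(\bigsqcup Q^\Phi)$ assigns to each state a closed subset of the unit cube $[0,1]^{\mathcal X_s}$ (closed because non-strict formulae describe closed sets, bounded because the $x$-variables encode probabilities), hence a compact set. A nested decreasing sequence of non-empty compact sets has non-empty intersection; contrapositively, if the intersection $\Pi^\Phi_s$ is empty then some finite-stage set $\f^i(\bigsqcup Q^\Phi)_s$ is already empty. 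Thus $\Pi^\Phi_{\initState}=\emptyset$ if and only if there exists $i\in\N$ with $\f^i(\bigsqcup Q^\Phi)_{\initState}=\emptyset$.

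This gives the algorithm: compute successively $\f^0(\bigsqcup Q^\Phi), \f^1(\bigsqcup Q^\Phi), \f^2(\bigsqcup Q^\Phi), \ldots$, each as an $\exists$-$\R$ (in fact, after quantifier elimination, a quantifier-free FO-$\R$) formula of free variables $\mathcal X_{\initState}$, using Proposition~\ref{prop:window-FOR} for the base case and Lemma~\ref{lm:fixpoint-FOR} for the successor step, and at each stage test emptiness of the described set by invoking the decision procedure for FO-$\R$ (or $\exists$-$\R$). If at some stage $i$ the formula for $\f^i(\bigsqcup Q^\Phi)_{\initState}$ is unsatisfiable, halt and report that no strategy exists; otherwise continue. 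Each individual step is effective — Proposition~\ref{prop:window-FOR} is computable in $\EXP$, Lemma~\ref{lm:fixpoint-FOR} gives an effective transformation of the formulae, and FO-$\R$ satisfiability is decidable — so this is a genuine semi-algorithm for the complement of the synthesis problem, witnessing membership in $\co\RE$.

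I expect the only real subtlety to be bookkeeping around the quantifiers: $\f$ introduces a fresh existential block $\exists \mathcal X_{s'}$ at each application (Lemma~\ref{lm:fixpoint-FOR}), so the formulae accumulate alternation-free existential prefixes and grow in size like $|\Phi||\MDP|^{\mathcal O(\Ell+i)}$ — this is fine for $\co\RE$ since we only need each step to terminate, not to run in bounded time, but one should note that applying quantifier elimination at each stage (or simply keeping the formula in $\exists$-$\R$ and using the $\PSPACE$ procedure on it) keeps the per-step computation decidable. The essential content is the compactness lemma; everything else is assembling the already-established effectivity results into a loop.
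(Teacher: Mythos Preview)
Your proposal is correct and follows essentially the same approach as the paper: both use the fixed-point characterisation (Proposition~\ref{prop:fixed-point} and Lemma~\ref{lm:fixed-points}), the compactness argument that a decreasing sequence of non-empty compact sets has non-empty intersection (so emptiness of the limit forces emptiness at some finite stage), and the decidability of emptiness for each iterate via the theory of the reals to obtain a semi-algorithm for the complement. The paper's appendix spells out the nested-compact-sets lemma via a convergent-subsequence argument, whereas you invoke it as a standard fact, but the overall structure is identical.
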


  As we will detail in Section~\ref{sec:undec}, the synthesis problem for flat,
  non-strict global window formulae is undecidable ($\co\RE$-hard),
  and therefore $\co\RE$-complete.

\begin{remark}
  From the proof of Proposition~\ref{prop:window-FOR}, it follows that if $\Phi$ is
  non-flat, that is, it contains nested probabilistic operators, then the set of
  window strategies $\stratW$ such that $s \models_\stratW \Phi$ may not be closed
  and hence $\bigsqcup Q^\Phi$ is not necessarily a compact set.
\end{remark}
\begin{remark}
  Note that in Example~\ref{ex:window-fixpoint} we were able to compute by hand
  the limit of the sequence of $\exists$-$\R$ formulae
  describing $\f^i(\bigsqcup Q^\Phi)$, and obtained an $\exists$-$\R$ formula
  for the greatest fixed point $\Pi^\Phi_s$.
  This is not always possible: there exists an MDP $\MDP$ and a flat,
  non-strict global window formula $\Forall \GG \Phi$ so that
  $\Pi^\Phi_s$ cannot be expressed in FO-$\R$.
  Indeed, FO-$\R$ formulae can be seen as finite words over a countable alphabet,
  so that there are countably many of them.
  If by contradiction $\Pi^\Phi_s$ was always expressible in FO-$\R$,
  we could enumerate all FO-$\R$ formulae and check for each of them if it
  describes a fixed-point of $\f$ where $\initState$ is mapped to a non-empty set,
  two properties
  also expressible in FO-$\R$ by using Lemma~\ref{lm:fixpoint-FOR}.
  This would show that the synthesis problem is recursively enumerable,
  therefore in $\RE\cap\co\RE$ \textit{i.e.}~decidable, which is absurd
  as it is $\co\RE$-complete as we will see in Section~\ref{sec:undec}.
\end{remark}

\subparagraph*{{\bf Deterministic strategies}}
In this paragraph, we study the synthesis problem for deterministic strategies.
  First, note that the window strategy defined by a deterministic strategy
  for a given prefix and horizon is also deterministic.
  Conversely, if $\stratW$ is a deterministic window strategy
  then there exists a deterministic strategy in its cylinder.
  Therefore, Lemma~\ref{lm:window-length} carries over,
  and finding a deterministic strategy
  satisfying a window formula reduces to finding a
  deterministic window strategy for it.
  Then, note that for a fixed state $s$, each
  deterministic window strategy can be seen as a boolean assignment over
  the set $\mathcal X_s$ of variables,
  and we have that $|\mathcal X_s| = |\MDP|^{\mathcal O(\Ell)}$.
  Therefore, the set of deterministic window strategies is finite,
  of doubly-exponential size $2^{|\MDP|^{\mathcal O(\Ell)}}$.
  We denote by $W$ the number of deterministic window strategies.
  By guessing a window strategy and verifying it in $\EXP$,
  we get a $\NEXP$ upper bound on the synthesis problem for window formulae.
  By guessing a strategy in an online manner we can lower this complexity,
  and show that the problem is in fact $\PSPACE$-complete.

\begin{proposition}\label{prop:HD-window}
  The synthesis problem for \emph{window} $L$-PCTL formulae is $\PSPACE$-complete
  when restricted to \emph{deterministic} strategies.
\end{proposition}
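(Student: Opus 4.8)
\textbf{Proof plan for Proposition~\ref{prop:HD-window}.}
The plan is to establish membership in $\PSPACE$ and $\PSPACE$-hardness separately. For the upper bound, I would avoid building the entire doubly-exponential-size deterministic window strategy at once, and instead guess it transition-by-transition along the unfolding of $\MDP$ up to horizon $\Ell$, in a depth-first manner. Concretely, following Lemma~\ref{lm:window-length}, it suffices to decide whether there is a deterministic window strategy $\stratW$ for $\initState$ of horizon $\Ell$ with $\initState\models_{\stratW}\Phi$. The key observation is that, once $\stratW$ is fixed, the MC it induces on the first $\Ell$ steps is a tree of depth $\Ell$ with at most $|\MDP|^{\mathcal O(\Ell)}$ nodes, and checking $\initState\models_{\stratW}\Phi$ amounts to a PCTL-style model-checking pass on this tree; crucially this can be carried out \emph{on the fly} by a recursive traversal of the tree that needs to remember only the current path (of length $\leq\Ell$) together with, for each node on that path and each subformula of $\Phi$, the partially accumulated probabilities / truth values. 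Since $\Ell\leq|\Phi|\ellmax$ is polynomial and each stored number is a rational of polynomially many bits (a product/sum of $\mathcal O(\Ell)$ transition probabilities, each of bit-size $\leq|\MDP|$), the working tape is of polynomial size. An alternating (equivalently, nondeterministic, by Savitch / by the standard $\APSPACE=\PSPACE$ argument) procedure then guesses, at each node it visits, the single action chosen there, recursing into the successors; existential choices handle the guess of $\stratW$ and universal branching handles the need to verify the formula in all branches. This yields $\APSPACE=\PSPACE$.

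A cleaner way to package the same idea: the desired $\stratW$ is a boolean point of $\R^{\mathcal X_s}$ satisfying the $\exists$-$\R$ formula of Proposition~\ref{prop:window-FOR}, which has size $|\Phi||\MDP|^{\mathcal O(\Ell)}=2^{\mathrm{poly}}$; deciding whether a quantifier-free arithmetic formula of exponential size but with a \emph{polynomial-depth} evaluation structure has a boolean satisfying assignment can be done in $\PSPACE$ by guessing the assignment variable-by-variable along the tree structure and evaluating subexpressions bottom-up while reusing space. I would present whichever of these two formulations is shortest, but the first (direct online guess-and-check) is likely the most transparent, so I would carry it out in detail: (i) define the recursive traversal, (ii) bound the space, (iii) argue correctness via Lemma~\ref{lm:window-length} and the semantics of Definition~\ref{def:semantics}, (iv) invoke $\APSPACE=\PSPACE$.

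For the lower bound I would reduce from a canonical $\PSPACE$-complete problem, \textbf{QBF} (truth of a quantified boolean formula $Q_1 x_1\cdots Q_n x_n\,\psi$). The MDP has, for each variable $x_i$, a state with two actions encoding the two truth values; existential variables $x_i$ correspond to states the strategy controls freely, while universal variables must be simulated by \emph{randomness}: from a universal-variable state the MDP branches with probability $\tfrac12$ to a ``$x_i=\text{true}$'' gadget and $\tfrac12$ to a ``$x_i=\text{false}$'' gadget, so that a single strategy is forced to work for both settings. The matrix $\psi$ is evaluated along each of the $2^n$ resulting computations by a bounded-horizon gadget of depth $\mathcal O(n+|\psi|)$, ending in a labelled sink (${\sf good}$ iff $\psi$ is satisfied under the chosen/sampled valuation). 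The window formula then asserts, via a probabilistic comparison of horizon $\mathcal O(n+|\psi|)$, that ${\sf good}$ is reached with probability $1$ — equivalently along \emph{every} computation — which holds under some strategy iff the QBF is true. One must check that horizons stay polynomial (they do: unary encoding of $\ell$, and the whole gadget is of linear depth), that determinism of the strategy exactly models the choice of a Skolem function for the existential variables, and that the probabilistic-$=1$ constraint correctly universally quantifies over the coin flips standing for universal variables. Some care is needed so that the strategy cannot ``cheat'' by making inconsistent choices for the same existential variable across different branches — this is precisely prevented by the fact that the two branches out of a $\tfrac12$-$\tfrac12$ universal node differ in their histories, so a strategy \emph{may} depend on earlier universal choices (which is exactly the QBF semantics: later existential variables see earlier universal ones), but along any single path each existential state is visited once.

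\textbf{Main obstacle.} I expect the delicate point to be the space accounting in the upper bound: one must be sure that verifying $\Phi$ — in particular \emph{nested} probabilistic operators — on the induced depth-$\Ell$ tree can genuinely be done with only polynomial working memory, recomputing subtree probabilities on demand rather than storing the whole tree; the resolution is that the tree has polynomial depth, the recursion stack is therefore polynomial, and each subformula's value at a node depends only on the (polynomially many) values of its immediate subformulae at the (polynomially many, per level) children, so a standard recompute-don't-store strategy keeps the tape polynomial, at the cost of exponential time — which is fine for a $\PSPACE$ (hence $\APSPACE$) bound. A secondary subtlety in the lower bound is ensuring the reduction stays within \emph{window} $L$-PCTL (all horizons finite and polynomial, no $\GG$ needed since a single reach-${\sf good}$-with-probability-$1$ formula over a bounded horizon already encodes the universal quantification over coin flips).
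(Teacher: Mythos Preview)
Your upper bound is the same approach as the paper's: guess the deterministic window strategy online along a depth-first traversal of the horizon-$\Ell$ unfolding, maintaining for each node on the current path the partial probabilities of the relevant path subformulae, and argue that all stored rationals have polynomial bit-size. One slip: you write ``$\APSPACE=\PSPACE$'', which is false ($\APSPACE=\EXP$). The algorithm you describe is not genuinely alternating---the ``universal'' part is just a deterministic DFS over the children that aggregates probabilities---so what you actually have is an $\NPSPACE$ procedure, and the equality you want is Savitch's $\NPSPACE=\PSPACE$. If you really used universal branching in the alternating-TM sense you could not aggregate the children's probabilities into a single number, so drop the alternation framing.

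For the lower bound you reduce from QBF, whereas the paper reduces from generalized reachability in two-player games (a conjunction of reachability targets $F_1,\dots,F_k$, encoded as $\bigwedge_i \Pr{\FF^{nk}x_i}=1$). The two reductions rest on the same mechanism---controller states model existential moves, uniform random branching models the opponent/universal moves, and a probability-$1$ bounded-reachability constraint forces the strategy to succeed on every random outcome---so either works. Your QBF route is slightly more self-contained (no external $\PSPACE$-completeness citation needed beyond QBF itself); the paper's route has the minor advantage that the target formula is a plain conjunction of bounded-reachability atoms with no gadget needed to evaluate a Boolean matrix. Your correctness remarks (history-dependence lets later existential choices see earlier universal outcomes; each existential state is visited once along any path) are exactly the points that need checking, so the plan is sound once you fix the $\APSPACE$ mislabeling.
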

\begin{proof}%
  We present a non-deterministic algorithm, running in polynomial space,
  that accepts all positive instances of the synthesis problem
  with deterministic strategies.
  We will guess a deterministic window strategy $\stratW$
  and check that $\Phi$ holds on the resulting MC.
  In order to avoid guessing an exponential certificate
  (the entire strategy $\stratW$),
  we will perform a depth-first search (DFS) traversal of the MDP,
  starting from $\initState$ and of horizon $\Ell$,
  where we guess every decision of $\stratW$ in an online manner
  ($\stratW(\rho)$ is guessed when the search path, that is the path from $\initState$ to the current state, is $\rho$ for the first time).
  We will compute along the way information that ultimately lets
  us evaluate if $\Phi$ holds on the root node of the search.
  At any point in the DFS, when the current path traversed from $\initState$ is $\rho$,
  this information represents
  partial evaluations of subformulae of $\Phi$ on states along $\rho$,
  according to the strategy $\stratW$.
  Formally, we equip each state in the DFS with a set of formulae
  to be evaluated. For each path formula $\varphi$ in the set,
  we store the probability
  of satisfying the formula according to paths previously visited by the DFS.
  Once all of the subtree below a state has been seen by the search,
  this value matches the probability $\Pr{\varphi}$,
  and we can then use this value to evaluate the state formulae that needs
  to know $\Pr{\varphi}$ on the current state.
  In order to define the sets of subformulae to evaluate,
  we can use the same induction rules as in
  the proof of Proposition~\ref{prop:window-FOR}, that reduce the evaluation
  of a formula such as $\XX^\ell \Phi_1$ or $\Phi_1 \UU^\ell \Phi_2$
  on a given state to the evaluation of $\Phi_1$, $\Phi_2$,
  $\XX^{\ell-1} \Phi_1$ or $\Phi_1 \UU^{\ell-1} \Phi_2$ on the current
  or the next state.
  Overall, we need to remember a path of length at most $\Ell$,
  a set of subformulae of $\Phi$ on each state in this path,
  and a probability for each such path formula.
  Assuming that the probabilities can be stored in polynomial space, this
  is indeed a $\PSPACE$ algorithm.

  We argue that these probabilities can always be stored in polynomial space.
  Indeed, they correspond to the measure,
  in the MC defined by $\stratW$,
  of a finite union of cylinders
  defined by prefixes of length at most $\Ell$.
  Since $\stratW$ is deterministic,
  these measures are finite %
  sums of real numbers
  obtained as the product of at most $\Ell$ constants appearing as transition probabilities on $\MDP$.
  Using a standard binary representation of rational numbers
  as irreducible fractions, we get that since $\Ell$
  is polynomial in $|\Phi|$ these probabilities are always rational and
  of polynomial size.

We show a reduction from the synthesis problem with a generalized reachability objective in a two-player game. Given an arena with a set $V$ of vertices that are partitioned into vertices belonging to Player~1 and Player~2, given an initial vertex $v_0$, and reachability sets $F_1, \dots , F_k$, the problem asks for a (deterministic) Player~1 strategy that ensures reaching each of the sets against any Player~2 strategies.
The generalized reachability problem is $\PSPACE$-complete~\cite{FH10}.

We construct an MDP $\MDP$ with $Q=V$ set of states and transitions which are the same as the edges of the two-player game arena.
A Player~1 vertex corresponds to a state in the MDP such that for every outgoing edge $(v,v_i)$ from $v$, we have an action $a_i$ labelling the transition $(v,v_i)$ in $\MDP$.
For a Player~2 vertex $v$, all the outgoing edges $(v,v_i)$ correspond to transitions for the same action to vertices $v_i$ with equal probability.
Also a state $v \in Q$ in $\MDP$ is labelled $x_i$ for $i \leq i \leq k$ if and only if the corresponding vertex $v \in F_i$ in the two-player game.
Further, if Player~1 has a wining strategy in the generalized reachability game, then she can visit all the reachability sets within a total of $nk$ steps with a deterministic strategy.

Now consider the property $\Phi$ defined as $\bigwedge_{i=1}^k \Pr{\FF^{nk}x_i}=1$.
There exists a deterministic strategy from $v_0$ in $\MDP$ satisfying $\Phi$ if and only if Player~1 has a winning strategy for the generalized reachability objective.
\end{proof}

  As there are finitely many deterministic window strategies of horizon $\Ell$,
  the fixed point computation always terminates and thus provides
  decidability. We also reduce the problem asking if an alternating Turing machine
  running in polynomial space accepts a given word to deterministic
  strategy synthesis.

\begin{proposition}\label{prop:HD-global-window}
  The synthesis problem for \emph{global window} $L$-PCTL formulae is
  in $2\EXP$ when restricted to \emph{deterministic} strategies.
  Moreover, it is $\EXP$-hard.
\end{proposition}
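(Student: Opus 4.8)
For the $2\EXP$ upper bound, the plan is to leverage the fixed point characterisation of Proposition~\ref{prop:fixed-point} together with the finiteness of the space of deterministic window strategies. Recall that $W = 2^{|\MDP|^{\mathcal O(\Ell)}}$ bounds the number of deterministic window strategies of horizon $\Ell$ for a fixed state. Working with the lattice $Q$ of portfolios that map each state to a \emph{set of deterministic} window strategies of horizon $\Ell$, the operator $\f$ restricted to this sublattice is still monotone, and by Lemma~\ref{lm:fixed-points} its greatest fixed point $\Pi^\Phi$ (within the restriction to portfolios containing only deterministic window strategies satisfying $\Phi$) is reached after finitely many iterations of $\f$ starting from the top element $\bigsqcup Q$. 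Since each $\Pi_s$ is a subset of a set of size $W$, and each application of $\f$ strictly decreases at least one $\Pi_s$ until stabilisation, the sequence stabilises after at most $|S| \cdot W$ steps. The plan is then: (i) compute $\bigsqcup Q$ by enumerating, for each state $s$, all deterministic window strategies $\stratW$ and checking $s \models_{\stratW} \Phi$ using the model-checking algorithm of Proposition~\ref{prop:model-checking} on the finite MC induced by $\stratW$ on horizon $\Ell$ — each check is in $\EXP$ since the relevant unfolding has $|\MDP|^{\mathcal O(\Ell)}$ states, and there are $|S| \cdot W$ checks; (ii) iterate $\f$, where each application inspects, for each state $s$, each surviving $\stratW \in \Pi_s$ and each transition $s \xrightarrow{a} s'$, and tests compatibility with some surviving $\stratW' \in \Pi_{s'}$ — this compatibility test is the combinatorial, purely syntactic matching of decisions on shared paths described before Lemma~\ref{lm:fixpoint-FOR}, doable in time polynomial in $W$ and $|\MDP|^{\mathcal O(\Ell)}$; (iii) after at most $|S| \cdot W$ iterations, accept iff $\Pi^\Phi_{\initState} \neq \emptyset$. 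All quantities involved are at most doubly exponential in $|\MDP| + |\Phi|$, and the number of iterations is doubly exponential, so the whole procedure runs in $2\EXP$.

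For the $\EXP$ lower bound, the plan is to reduce from the acceptance problem for an alternating Turing machine $\mathcal M$ running in polynomial space $p(|w|)$ on input $w$, which is $\EXP$-complete (\APSPACE $= \EXP$). The idea is to build an MDP $\MDP$ whose states encode configurations of $\mathcal M$ — a configuration is a tape content of length $p(|w|)$, a head position and a control state, so configurations have polynomial size and there are exponentially many of them, but crucially the MDP only needs polynomial size because transitions are defined \emph{locally}: from a configuration, one move of $\mathcal M$ changes only the cell under the head, the head position, and the state, so the successor configurations are determined by a polynomial-size gadget reading a bounded window around the head. Existential states of $\mathcal M$ become controllable states of the MDP (the deterministic strategy picks the next move), while universal states of $\mathcal M$ become states where the environment's choice is encoded either as a uniform probabilistic branching over all possible moves or, more simply here, as nondeterminism that the global window property forces to be handled for every branch. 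The formula is a global window formula $\Forall \GG \Phi$ where $\Phi$ is a flat window formula of polynomial size that, on each reached state, checks local consistency of the encoded configuration and, using bounded $\FF^\ell$ operators over a window large enough to witness one computation step, enforces that the transition taken respects $\mathcal M$'s transition relation, that universal branching is respected (every successor move appears), and that rejecting configurations are never reached while an accepting configuration is eventually attained. The deterministic strategy synthesising this formula is then exactly a winning strategy for the existential player in the alternating computation.

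The main obstacle I expect is twofold. On the upper bound side, the delicate point is bounding the \emph{representation} cost: the portfolio sets $\Pi_s$ are subsets of a doubly-exponential-size universe, so naively they need doubly-exponential space to store, which is fine for $2\EXP$ time but requires care to argue that $\f$ can be applied within the time budget — the key is that compatibility is a local, syntactic check between pairs of window strategies and does not require the real-arithmetic machinery of Lemma~\ref{lm:fixpoint-FOR}, since for deterministic strategies probabilities of paths are computed directly and comparisons are decided exactly (as in the proof of Proposition~\ref{prop:HD-window}). On the lower bound side, the subtlety is making the MDP polynomial-size while the computation it simulates is exponentially long: this is handled precisely by the global $\GG$ operator, which re-imposes the local-consistency window formula at \emph{every} reachable configuration, so the MDP never needs to "remember" the whole history — each configuration is a state, and correctness of the exponentially long run follows from correctness of every single local step, checked uniformly by $\Phi$. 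Getting the alternation right — ensuring universal states genuinely force the strategy to succeed against all environment moves, without the strategy being able to "cheat" by steering probability mass away — is the part requiring the most care, and mirrors the generalized-reachability reduction used in Proposition~\ref{prop:HD-window} but with an added alternation layer.
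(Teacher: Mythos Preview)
Your upper bound argument is essentially the paper's: restrict the fixed-point computation to portfolios of deterministic window strategies, observe that the lattice is finite of size at most $|S|\cdot W$ with $W=2^{|\MDP|^{\mathcal O(\Ell)}}$, so the greatest fixed point is reached in at most $|S|\cdot W$ iterations, and each iteration is a combinatorial compatibility check on explicit trees rather than an FO-$\R$ computation. That part is fine.

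The lower bound has a genuine gap. You write that ``states encode configurations'' and later ``each configuration is a state'', while also claiming the MDP ``only needs polynomial size''. These are incompatible: a configuration of a polynomial-space ATM is a tape of length $p(|w|)$ plus head and state, so there are $|\Sigma|^{p(|w|)}\cdot p(|w|)\cdot |Q|$ configurations, and an MDP with one state per configuration is exponential in $|w|$. A reduction producing an exponential-size instance does not establish $\EXP$-hardness. The locality of the transition relation does not shrink the state space; it only means the successor relation has a succinct description, which is not the same thing.

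The paper's trick is precisely to avoid making configurations into states. Instead, the MDP is a polynomial-size cycle of length $N=2n+5$ whose states are things like ``cell $i$ holds letter $a$'' or ``cell $i$ holds $a$ and the head is here'' or the transition-choice gadgets $q^a$. One full traversal of the cycle \emph{is} a configuration: the deterministic strategy chooses, at each position $i$, which letter-state to visit, thereby spelling out the tape. Existential ATM states become controller choices among transitions; universal states become a uniform probabilistic split, so that the $\Forall\GG$ forces the strategy to succeed on every branch. The window formula $\Phi$ then compares two consecutive loops (window $\approx 2N$) and checks, cell by cell, that the next configuration is a legal successor of the current one. History-dependence of the strategy is what carries the exponentially long computation, not the state space. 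Your plan needs to be reworked along these lines: encode configurations as \emph{paths} in a polynomial MDP, not as states.
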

\begin{proof}%
  For global window formulae, we need to change the set $Q^\Phi$, defined in Section~\ref{sec:fixpoint}, to only
  contain deterministic window strategies.
  It is still a complete lattice, and Proposition~\ref{prop:fixed-point}
  carries over for deterministic strategies.
  Moreover, for every portfolio $\Pi$, there are finitely many
  strictly smaller portfolios, at most $|\MDP| W$.
  As the sequence $(\f^i(\bigsqcup Q^\Phi))_{i\in\N}$ is non-increasing,
  the fixed point is reached in at most $|\MDP| W$ steps.
  Each step can be performed without relying on the theory of the reals,
  by representing the window strategies explicitly as trees of depth $\Ell$.
  Applying the operator $\f$ on a portfolio
  amounts to checking if a tree is a prefix of another.
  Overall, the fixed point computation is doubly-exponential.

  Note that if we rely on computing $\exists$-$\R$ formulae
  for $(\f^i(\bigsqcup Q^\Phi))_{i\in\N}$
  instead of these explicit sets of strategies, the formulae could
  a priori grow to sizes in $|\Phi||\MDP|^{\mathcal O(\Ell+|\MDP|W)}$,
  so that we end up with a triply-exponential upper bound.

\begin{figure}[t]
  \centering
  \scalebox{1}{
  \begin{tikzpicture}[node distance=1.8cm,auto]
  \usetikzlibrary{arrows,positioning,automata,calc}
  \tikzstyle{accepting}=[accepting by arrow]

  \node[player0, initial](qe){$q^a$};
  \node[player0](qem)[right of=qe]{$q\xrightarrow{a,b,L}q_j$};
  \node[player0](qet)[above of=qem]{$q\xrightarrow{a,a,R}q_i$};
  \node[player0](qeb)[below of=qem]{$q\xrightarrow{a,b,R}q_k$};
  \node[player0, accepting](qe')[right of=qem]{};
  \node[player1, dashed]()[right of=qe, minimum size=4.5cm]{};
  \node[]()[above of=qet,node distance=0.7cm]{$q^a$ if $L(q)=\exists$};

  \path (qe) edge[action1] node[right]{\small$m_1$} (qet);
  \path (qe) edge[action2] node[above]{\small$m_2$} (qem);
  \path (qe) edge[action3] node[right]{\small$m_3$} (qeb);
  \path (qet) edge[action0] (qe');
  \path (qem) edge[action0] (qe');
  \path (qeb) edge[action0] (qe');

  \node[player0, initial](qf)[right of=qe']{$q^a$};
  \node[player0](qfm)[right of=qf]{$q\xrightarrow{a,a,R}q_j$};
  \node[player0](qft)[above of=qfm]{$q\xrightarrow{a,b,L}q_i$};
  \node[player0](qfb)[below of=qfm]{$q\xrightarrow{a,b,R}q_k$};
  \node[player0, accepting](qf')[right of=qfm]{};
  \node[player1, dashed]()[right of=qf, minimum size=4.5cm]{};
  \node[]()[above of=qft, node distance=0.7cm]{$q^a$ if $L(q)=\forall$};

  \path (qf) edge[action0] node[right]{\tiny$1/3$} (qft);
  \path (qf) edge[action0] node[above]{\tiny$1/3$} (qfm);
  \path (qf) edge[action0] node[right]{\tiny$1/3$} (qfb);
  \path (qft) edge[action0] (qf');
  \path (qfm) edge[action0] (qf');
  \path (qfb) edge[action0] (qf');

  \node[player0, initial](qa)[right of=qf']{$q^a$};
  \node[player2, dashed]()[right of=qa,node distance=1mm,minimum size=1.3cm]{};
  \node[]()[above of=qa, node distance=1.2cm, xshift=.6cm]{$q^a$ if $L(q) \in \{q_\top, q_\bot\}$};

  \path (qa) edge[action0, loop right] (qa);

  \node[player0](s0)[below of=qe,node distance=5.2cm]{};
  \node[player1, dashed](qm)[right of=s0]{$q_2^a$};
  \node[player1, dashed](qt)[above of=qm,node distance=1cm]{$q_1^b$};
  \node[player1, dashed](qtt)[above of=qt,node distance=1cm]{$q_1^a$};
  \node[](qb)[below of=qm,node distance=1cm]{$\vdots$};
  \node[player2, dashed](qbb)[below of=qb,node distance=1cm]{$q_\top^c$};
  \node[player0](s1)[right of=qm]{$s$};
  \node[player0,initial,initial where=below](init)[below of=s1]{$\initState$};
  \node[player0](s1m)[right of=s1]{$b_1$};
  \node[player0](s1t)[above of=s1m,node distance=1cm]{$a_1^H$};
  \node[player0](s1tt)[above of=s1t,node distance=1cm]{$a_1$};
  \node[](s1b)[below of=s1m,node distance=1cm]{$\vdots$};
  \node[player0](s1bb)[below of=s1b,node distance=1cm]{$c_1^H$};
  \node[player0](s1')[right of=s1m]{};
  \node[](s1'')[right of=s1',node distance=.75cm]{$\cdots$};
  \node[player0](s2)[right of=s1'',node distance=.75cm]{};
  \node[player0](s2m)[right of=s2]{$b_n$};
  \node[player0](s2t)[above of=s2m,node distance=1cm]{$a_n^H$};
  \node[player0](s2tt)[above of=s2t,node distance=1cm]{$a_n$};
  \node[](s2b)[below of=s2m,node distance=1cm]{$\vdots$};
  \node[player0](s2bb)[below of=s2b,node distance=1cm]{$c_n^H$};
  \node[player0](s3)[right of=s2m]{};

  \path (s0) edge[action1,bend left=15] node[above,yshift=1mm]{\small$m_1$} (qtt);
  \path (s0) edge[action2,bend left=5] node[above]{\small$m_2$} (qt);
  \path (s0) edge[action3] node[above]{\small$m_3$} (qm);
  \path (s0) edge[action4,bend right=5] (qb);
  \path (s0) edge[action5,bend right=15] (qbb);
  \path (qtt) edge[action0, bend left=15] (s1);
  \path (qt) edge[action0, bend left=5] (s1);
  \path (qm) edge[action0] (s1);
  \path (qb) edge[action0, bend right=5] (s1);
  \path (init) edge[action0] (s1);
  \path (s1) edge[action1,bend left=15] node[above,yshift=1mm]{\small$m_1$} (s1tt);
  \path (s1) edge[action2,bend left=5] node[above]{\small$m_2$} (s1t);
  \path (s1) edge[action3] node[above]{\small$m_3$} (s1m);
  \path (s1) edge[action4,bend right=5] (s1b);
  \path (s1) edge[action5,bend right=15] (s1bb);
  \path (s1tt) edge[action0, bend left=15] (s1');
  \path (s1t) edge[action0, bend left=5] (s1');
  \path (s1m) edge[action0] (s1');
  \path (s1b) edge[action0, bend right=5] (s1');
  \path (s1bb) edge[action0, bend right=15] (s1');
  \path (s2) edge[action1,bend left=15] node[above,yshift=1mm]{\small$m_1$} (s2tt);
  \path (s2) edge[action2,bend left=5] node[above]{\small$m_2$} (s2t);
  \path (s2) edge[action3] node[above]{\small$m_3$} (s2m);
  \path (s2) edge[action4,bend right=5] (s2b);
  \path (s2) edge[action5,bend right=15] (s2bb);
  \path (s2tt) edge[action0, bend left=15] (s3);
  \path (s2t) edge[action0, bend left=5] (s3);
  \path (s2m) edge[action0] (s3);
  \path (s2b) edge[action0, bend right=5] (s3);
  \path (s2bb) edge[action0, bend right=15] (s3);

  \draw [->] (s3) to [out=90,in=0] ($(s2tt)+(0,.7)$) to [out=180,in=0] ($(qtt)+(0,.7)$) to [out=180, in=90 ] (s0);

  \end{tikzpicture}
  }
  \caption{The MDP used in the reduction from $\APSPACE$ to deterministic synthesis for global window formulae.
  The colored transitions $m_1$, $m_2$, \dots represent different actions,
  black transitions are available for every action,
  and the probability of a transition is $1$ if unspecified.
  The letters $a$, $b$, $c$ enumerate the tape alphabet $\Sigma$,
  while $q_1$, $q_2$, \dots enumerate the states in $Q$,
  with $q_\top$ the accepting state.
  The only randomized transition is in the universal gadget $q^a$,
  and uses a uniform distribution over reachable states.
}
  \label{fig:MDP-EXP}
  \end{figure}
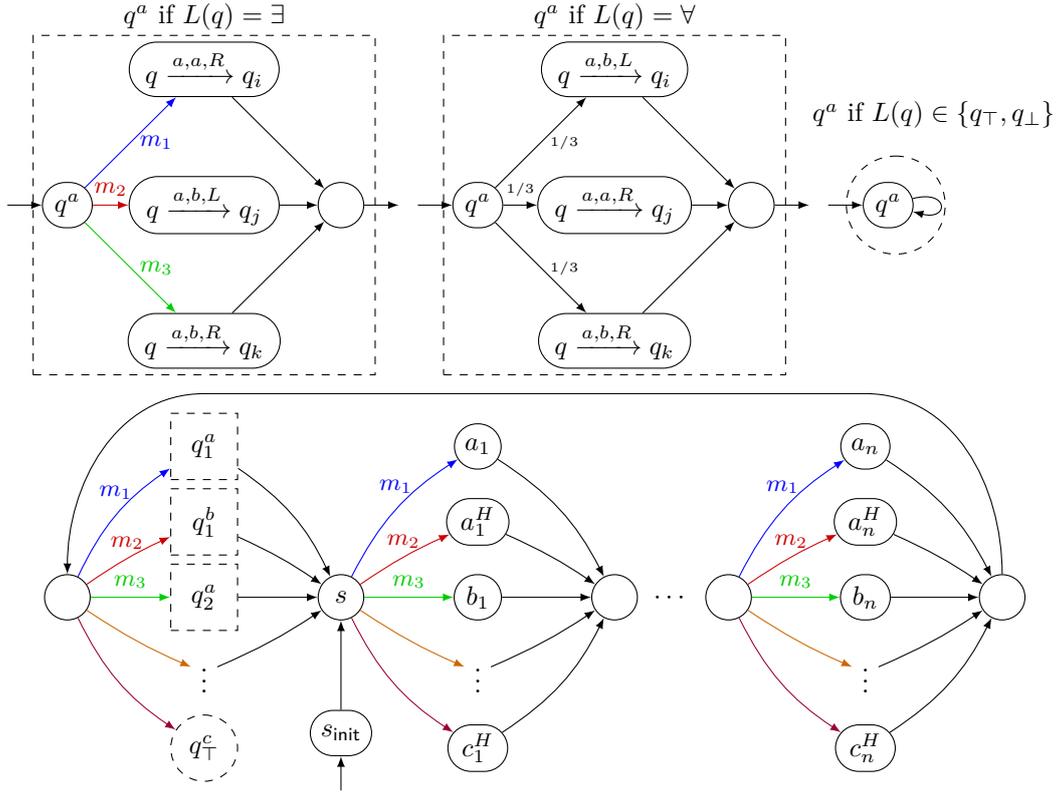

  For the $\EXP$-hardness, we use $\APSPACE=\EXP$ and present a polynomial reduction from
  alternating polynomial-space Turing machines.
  We consider a Turing machine of states $Q$ and tape alphabet $\Sigma$,
  so that each state $q$ is equipped with a label $L(q)\in\{\forall,\exists\}$,
  except for the accepting and rejecting states $q_\top$ and $q_\bot$, where $L(q)=q$.
  Let $w\in\Sigma^*$ be an input word, and let $n\in\N$ be
  a bound on the length of the tape used when running $w$ on the machine.
  Since we considered a polynomial space machine, $n$ is polynomial.
  W.l.o.g., we assume that for every input, the Turing machine we consider above halts, and the input is accepted if and only if it halts in $q_\top$.

  Let $\MDP$ be the MDP from \figurename~\ref{fig:MDP-EXP},
  where each named state is assigned an identical label.
  We describe a window formula $\Phi$ that ensures that controller only makes
  choices that faithfully represent an execution of the alternating
  Turing machine.
  Let $N=2n+5$ be the number of steps needed to follow a cycle
  from $s$ to $s$ in $\MDP$.
  If $l$ is a label in $\MDP$, %
  we shorten the $L$-PCTL formula $\Pr{\FF^N l}=1$
  as $\FF^N_{1} l$. It means that every path of length $N$
  must reach $l$.
    Intuitively, a run of the Turing machine can be described as a path
  in this MDP, where one full loop around $s$ describes a configuration
  of the Turing machine: visiting $a_i$ means that cell number $i$ contains $a$,
  visiting $a_i^H$ means that additionally the reading head is in position $i$,
  and entering the gadget $q^a$ means that we are in state $q$ and will read an $a$.
  Either controller or the environment gets to pick the next transition, and then we go back to $s$.

  The formula $\Phi$ is obtained as the conjunction of the following constraints:

  In order to ensure that the tape is initialized appropriately, we ask for every letter $a \in \Sigma$ in position $i>1$
  in the input word $w$ that $\initState \Rightarrow \FF^N_{1} a_i$. If the first letter in $w$ is $a$ and the initial state of the Turing machine is $q_1$, we also ask $\initState \Rightarrow \FF^N_{1} a_1^H \wedge \FF^N_{1} q_1^a$.

  In order to
  simulate the transitions on the tape cells correctly,
  we ask for every $1\leq i< n$, $a,c\in\Sigma$
  and transition $q\xrightarrow{a,b,L}q'$
  that
  \[(s\wedge \FF^N_{1} c_i \wedge \FF^N_{1} a_{i+1}^H) \Rightarrow \PrBig{\XX^{N-2} ((q\xrightarrow{a,b,L}q')\Rightarrow \FF^N_{1} c_{i}^H \wedge \FF^N_{1} b_{i+1})}=1\,.\]
  We similarly ask for every $1\leq i< n$, $a,c\in\Sigma$
  and transition $q\xrightarrow{a,b,R}q'$
  that
  \[(s\wedge \FF^N_{1} a_i^H \wedge \FF^N_{1} c_{i+1}) \Rightarrow \PrBig{\XX^{N-2} ((q\xrightarrow{a,b,R}q')\Rightarrow \FF^N_{1} b_{i} \wedge \FF^N_{1} c_{i+1}^H)}=1\,.\]

  The other tape cells should be left untouched,
  so that for every $1\leq i< n$, $a,b\in\Sigma$
  and transition $q\xrightarrow{c,d,L}q'$, we ask that
  \[(s\wedge \FF^N_{1} a_i \wedge \FF^N_{1} b_{i+1}) \Rightarrow \PrBig{\XX^{N-2} ((q\xrightarrow{c,d,L}q')\Rightarrow \FF^N_{1} a_{i})}=1\,.\]
  Similarly for every transition transition $q\xrightarrow{c,d,R}q'$, we ask that
  \[(s\wedge \FF^N_{1} a_i \wedge \FF^N_{1} b_{i+1}) \Rightarrow \PrBig{\XX^{N-2} ((q\xrightarrow{c,d,R}q')\Rightarrow \FF^N_{1} b_{i+1})}=1\,.\]

  Finally, in order to update the state, we ask for every transition $q\xrightarrow{a,b,D}q'$ with $D\in \{L,R\}$ and every $1\leq i\leq n$, $c\in\Sigma$ that
  \( (q\xrightarrow{a,b,D}q') \wedge \FF^N_{1} c_i^H \Rightarrow \FF^N_{1} q'^c\,.\)

  Then, we let $q_\top$ and $q_\bot$ be the labels that hold on
  all states $q_\top^a$ and $q_\bot^a$, respectively, for all $a\in\Sigma$.
  We consider the global window formula
  $\Forall\GG[(\Phi\vee q_\top) \wedge \neg q_\bot]$,
  and show that there is a winning strategy for this formula in $\MDP$
  if and only if the alternating Turing machine
  accepts the input word $w$.
  Indeed, an alternating Turing machine equipped
  with an initial word can be seen as
  a turn-based two-player zero-sum reachability game
  played on the execution tree of the machine, where we ask
  if there exists a strategy for player $\exists$
  that ensures against every strategy of $\forall$
  the state $q_\top$ is reached.
\end{proof}

\subparagraph*{{\bf Memoryless strategies}}
  We study the synthesis of memoryless strategies.
  The window strategy defined by a memoryless strategy
  for a given prefix and a horizon is also memoryless.
  Conversely, a memoryless window strategy has a memoryless strategy
  in its cylinders.
  By Lemma~\ref{lm:window-length}, finding a memoryless strategy
  satisfying a window formula reduces to finding a
  memoryless window strategy for it.
  Let $s$ be a state.
  As usual, a window strategy $\stratW$ for state $s$ can be seen as
  an assignment in $[0,1]$ for variables $\mathcal X_s$.
  However, the memoryless property asks that $\stratW(\rho)=\stratW(\rho')$
  for all $\rho,\rho'$ that share the same last state $s'$, or equivalently
  $x_{\rho,a}=x_{\last(\rho),a}$ for all $\rho$.
  Thus, we can replace every instance of $x_{\rho,a}$
  by $x_{\last(\rho),a}$ in the $\exists$-$\R$ formula
  of Proposition~\ref{prop:window-FOR}, so that
  the set of free variables used to represent
  a memoryless window strategy for $s$ is $\mathcal X_s=
  \{x_{s',a}\mid \exists\rho\in\FPaths_\MDP^{<\Ell}(s), s'=\last(\rho)\}$.
  Similarly, the variables $y_{\rho,\Phi}$ and $z_{\rho,\varphi}$
  can be replaced by by $y_{\last(\rho),\Phi}$ and $z_{\last(\rho),\varphi}$
  respectively, as the satisfaction of a state formula,
  or the probability of satisfying a path formula,
  only depend on the current state.
  The formula is now of polynomial size,
  so that we obtain as a corollary:
\begin{proposition}\label{prop:memoryless-window-synthesis}
  The synthesis problem for \emph{window} $L$-PCTL formulae is in $\PSPACE$
  when restricted to \emph{memoryless} strategies.
\end{proposition}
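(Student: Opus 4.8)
The plan is to reuse, essentially verbatim, the $\exists$-$\R$ encoding constructed in the proof of Proposition~\ref{prop:window-FOR}, but instantiated with the state-indexed variables described above, and then to feed the resulting polynomial-size formula to a $\PSPACE$ decision procedure for the existential theory of the reals~\cite{10.1145/62212.62257}.

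First I would make precise the correctness of the variable collapse. By Lemma~\ref{lm:window-length} (which, as noted, carries over to memoryless strategies) it suffices to decide whether there is a memoryless window strategy $\stratW$ for $\initState$ with $\initState\models_{\stratW}\Phi$. If $\strat$ is memoryless then $\MDPstrat$ is bisimilar to a finite MC on state space $S$, since the transitions and labels reachable from a finite path $\rho$ depend only on $\last(\rho)$; consequently the truth value of a state subformula of $\Phi$, and the measure of a path subformula of $\Phi$, evaluated after a history $\rho$, depend only on $\last(\rho)$. This is exactly what justifies replacing $x_{\rho,a}$, $y_{\rho,\Phi'}$, $z_{\rho,\varphi}$ by $x_{\last(\rho),a}$, $y_{\last(\rho),\Phi'}$, $z_{\last(\rho),\varphi}$. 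The local consistency equations of Proposition~\ref{prop:window-FOR}, now of the form $z_{s,\XX^\ell\Phi'}=\sum_{s\xrightarrow{a}s'}x_{s,a}\,\proba(s,a,s')\,z_{s',\XX^{\ell-1}\Phi'}$ and analogues for $\UU^\ell$ and $\WW^\ell$, still determine the $z$-variables uniquely because they are stratified by strictly decreasing horizon label down to horizon $0$; the $y$-variables are then fixed by the $z$-variables, and requiring $y_{\initState,\Phi}=1$ captures exactly the memoryless window strategies that satisfy $\Phi$. The constraints $x_{s,a}\in[0,1]$ and $\sum_{a\in A}x_{s,a}=1$ are added as in the path-indexed case.

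Next I would bound the size. The free variables are the $x_{s,a}$ for states $s$ reachable from $\initState$ within $\Ell$ steps and $a\in A$; the existential variables are the $y_{s,\Phi'}$ and $z_{s,\varphi}$ for $\Phi'$ and $\varphi$ ranging over subformulae of $\Phi$ (with all smaller horizon labels), so there are polynomially many of each, and each consistency equation has size $\mathcal O(|\MDP|)$. Hence the $\exists$-$\R$ formula has size polynomial in $|\MDP|$ and $|\Phi|$, and deciding its satisfiability is in $\PSPACE$~\cite{10.1145/62212.62257}, which yields the proposition.

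The proof has no genuine obstacle: it is a corollary of Proposition~\ref{prop:window-FOR} once the polynomial bound on the number of distinct variables is observed. The one point deserving care is the correctness of the collapse, i.e.\ checking that uniqueness of the solutions to the (bounded-horizon) local consistency equations is preserved under the substitution, so that the formula remains a faithful encoding; this follows directly from the stratified structure of window formulae and requires no new idea.
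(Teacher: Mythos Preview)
Your proposal is correct and follows essentially the same approach as the paper: collapse the path-indexed variables $x_{\rho,a}$, $y_{\rho,\Phi'}$, $z_{\rho,\varphi}$ of Proposition~\ref{prop:window-FOR} to state-indexed ones $x_{\last(\rho),a}$, $y_{\last(\rho),\Phi'}$, $z_{\last(\rho),\varphi}$, observe that the resulting $\exists$-$\R$ formula has polynomial size, and invoke a $\PSPACE$ decision procedure. Your additional care in justifying the collapse via bisimilarity and the stratified horizon structure is sound but not strictly needed beyond what the paper sketches.
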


  Further, following a reduction in \cite{10.1109/LICS.2006.48}, it can be shown that the \MR synthesis problem for window $L$-PCTL objectives is at least as hard as the {\sc Square-Root-Sum} problem which is known to be in \PSPACE, but whose exact complexity is a longstanding open problem.

  We now study the memoryless synthesis problem for global window formulae.
  For each state $s$, let $R^s$ denote the $\exists$-$\R$ formula encoding
  the window formula $\Phi$ for state $s$,
  as per Proposition~\ref{prop:memoryless-window-synthesis}.
  The free variables are the variables
  in $\mathcal X_s\subseteq \mathcal X=\{x_{s',a} \mid s'\in S, a\in A\}$.
  A memoryless strategy $\sigma$ can be seen as a point in $\R^{\mathcal X}$,
  so that $\sigma(s,a)$ is
  assigned to $x_{s,a}$.
  For all states $s$ and $s'$, we define a variable $r_{s,s'}\in\{0,1\}$ quantified existentially,
  and construct a
  formula ensuring that if $r_{s,s'}=0$ then
  $s'$ is not reachable from $s$
  under the strategy $\sigma$.
  This formula states $r_{s,s}=1$ for all states $s$, and
  asks that the variables $r$ are a solution
  to the system of equations asking, for all $s,s',s''$ and $a$, that if $r_{s,s'}=1$ and $x_{s',a}\proba(s',a,s'')>0$ then $r_{s,s''}=1$.
  Therefore, the set of states $s'$ so that $r_{s,s'}=1$ is an
  over-approximation.\footnote{For example, the formula
  is satisfied if $r_{s,s'}$ is 1 for all $s,s'$, which represents an over-approximation of the set of states reachable from $s$ where every state is reachable.}

  Then, the formula asking that there exists a value for each variable $r$ so that $R^{s'}$ holds
  whenever $r_{s,s'}=1$
  represents the
  memoryless strategies that satisfy $\Phi$ on an over-approximation of the states reachable from $s$,
  which is equivalent to satisfying
  $\Forall \GG \Phi$ when starting from
  state $s$.
  Note that since the variables $r_{s,s'}$ are existentially quantified, and $\Phi$ is only required to be satisfied on states reachable from $s$, then there always exists a valuation for these $r$ variables that sets $r_{s,s'}$ to $1$ if and only if $s'$ is reachable from $s$.
  It follows that:

\begin{proposition}\label{prop:MR-global-window}
  The synthesis problem for \emph{global window} $L$-PCTL formulae is
  in $\PSPACE$ when restricted to \emph{memoryless} strategies.
\end{proposition}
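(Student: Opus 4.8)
The plan is to express the memoryless synthesis problem for $\Forall\GG\Phi$ as a single $\exists$-$\R$ sentence $\Psi$ of polynomial size, and then invoke the $\PSPACE$ decision procedure for the existential theory of the reals~\cite{10.1145/62212.62257}. By Lemma~\ref{lm:window-length} and the discussion preceding the statement, it suffices to search for a memoryless window strategy, which is encoded by the shared variables $\mathcal X=\{x_{s,a}\mid s\in S,\ a\in A\}$; and for each state $s$, Proposition~\ref{prop:memoryless-window-synthesis} gives a polynomial-size PNF $\exists$-$\R$ formula $R^s$, with free variables among $\mathcal X$, that holds exactly when the memoryless window strategy described by $\mathcal X$ satisfies $\Phi$ at $s$.

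First I would introduce, for each $s\in S$, a variable $r_s$ constrained by $r_s(r_s-1)=0$ so that $r_s\in\{0,1\}$, together with the seed $r_{\initState}=1$, the closure constraints stating for all $s,s'\in S$ and $a\in A$ that $r_s=0\vee x_{s,a}\proba(s,a,s')=0\vee r_{s'}=1$, and the validity constraints $\sum_{a\in A}x_{s,a}=1$ and $x_{s,a}\geq0$. Writing $\textsf{valid}$ and $\textsf{closure}$ for the conjunctions of the validity and closure constraints and of the Boolean constraints on the $r_s$, the sentence is
\[ \Psi\ :=\ \exists\,\mathcal X\ \exists\,(r_s)_{s\in S}\ \Bigl[\ \textsf{valid}\ \wedge\ r_{\initState}=1\ \wedge\ \textsf{closure}\ \wedge\ \bigwedge_{s\in S}\bigl(r_s=0\ \vee\ R^s\bigr)\ \Bigr]\,, \]
where, after renaming apart the bound variables of the formulae $R^s$, all quantifiers are pulled to the front, turning $\Psi$ into a PNF $\exists$-$\R$ sentence. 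Its size is $|\Phi|\,|\MDP|^{\mathcal O(1)}$, since $|\mathcal X|\leq|\MDP|^2$, there are $|S|\leq|\MDP|$ variables $r_s$, the closure part has $\mathcal O(|\MDP|^3)$ conjuncts, and each $R^s$ is of polynomial size.

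Next I would prove that $\Psi$ holds iff there is a memoryless strategy $\strat$ with $\MDPstrat\models\Forall\GG\Phi$. For the ``only if'' direction, from a satisfying assignment the $x$-values define a memoryless $\strat$, and an induction on path length using $r_{\initState}=1$ and the closure constraints shows that every state $s$ actually reachable from $\initState$ under $\strat$ has $r_s=1$, hence satisfies $R^s$, hence satisfies $\Phi$ (using Lemma~\ref{lm:window-length}); by the definition of a global window formula this yields $\MDPstrat\models\Forall\GG\Phi$. For the ``if'' direction, given such a $\strat$ I assign to $\mathcal X$ its probabilities and set $r_s=1$ exactly for the states reachable from $\initState$ under $\strat$: the closure constraints are satisfied by the exact reachability relation; for each reachable $s$ we have $s\models_\strat\Phi$, so the window strategy that $\strat$ induces at $s$ (of horizon $\Ell$) satisfies $\Phi$ by Lemma~\ref{lm:window-length}, so $R^s$ holds; and for unreachable $s$ the disjunct $r_s=0$ holds. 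Applying the $\PSPACE$ procedure for $\exists$-$\R$ to $\Psi$ then concludes.

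The one point that needs genuine care is that the variables $r_s$ are quantified existentially rather than pinned to exact reachability: a solver may set too many $r_s$ to $1$, but this only strengthens the constraint $\bigwedge_s(r_s=0\vee R^s)$, so soundness is unaffected, while completeness is preserved because the exact set of states reachable from $\initState$ under a witnessing strategy is always an admissible valuation of the $r_s$. Everything else is routine: the correctness of $R^s$ as a characterisation of the memoryless window strategies satisfying $\Phi$ at $s$ is Proposition~\ref{prop:memoryless-window-synthesis}, and the fact that one can share the single variable set $\mathcal X$ across all the $R^s$ is exactly what forces the synthesised strategy to be memoryless.
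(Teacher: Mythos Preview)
Your proposal is correct and follows essentially the same approach as the paper: encode a memoryless strategy via shared variables $\mathcal X$, introduce Boolean reachability variables closed under the transition relation, and require $R^{s}$ on every state flagged reachable, yielding a polynomial-size $\exists$-$\R$ sentence. The only cosmetic difference is that the paper uses doubly indexed variables $r_{s,s'}$ (reachability from an arbitrary source $s$), whereas you specialise to the single source $\initState$ and use $r_{s'}$; your version is the natural instantiation for the synthesis question and your discussion of the over-approximation point mirrors the paper's.
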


\subparagraph*{{\bf PCTL satisfiability}} We now consider the satisfiability problem, that asks,
  given a formula $\Phi$, if there a exists an MC $M$
  so that $M\models\Phi$.
  This is a longstanding open problem for PCTL formulae.
  One can also consider variants of the problem, that either restrict $\Phi$
  to a sublogic of PCTL or limit $M$ to MCs that belong to
  a particular set, such as finite MCs or MCs where
  all probabilities are rational numbers.
  The decidability of these variants is also open and, as noted in~\cite{DBLP:conf/lics/BrazdilFKK08},
  some PCTL formulae are only satisfiable by infinite MCs.
  In particular, we say that an MC $M$ has granularity bounded by $N\in\N$ if
  every probability in the transition function $\proba$ is equal to a rational $\frac{a}{b}$
  with $b\leq N$.
  The \emph{bounded granularity satisfiability problem} asks, given $\Phi$ and $N$,
  if there exists an MC of granularity bounded by $N$ that satisfies $\Phi$.

  The bounded granularity satisfiability problem for global window $L$-PCTL formulae
  can be reduced to the {\sf HD} strategy synthesis problem for
  global window $L$-PCTL formulae.
  Therefore, we obtain the following result as a corollary of Proposition~\ref{prop:HD-global-window}:
\begin{theorem}\label{thm:satisfiability}
  The bounded granularity satisfiability problem for global window $L$-PCTL formulae
  is decidable in complexity doubly-exponential in both $|\Phi|$ and $N$. Moreover, finite MCs are sufficient, in the sense
  that for every formula $\Forall \GG \Phi$ that admits a model $M$
  of granularity bounded by $N$, there exists a finite MC $M'$
  of granularity bounded by $N$ so that $M'\models \Forall \GG \Phi$.
\end{theorem}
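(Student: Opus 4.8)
The plan is to reduce the bounded granularity satisfiability problem for a global window formula $\Forall\GG\Phi$ to the \HD strategy synthesis problem for a global window $L$-PCTL formula over a well-chosen MDP, and then to invoke Proposition~\ref{prop:HD-global-window}. Let $\AP_\Phi$ be the at most $|\Phi|$ atomic propositions occurring in $\Phi$; after forgetting the other propositions, any model of granularity bounded by $N$ may be taken to be labelled over $2^{\AP_\Phi}$. I would build a \emph{universal} MDP $\MDP$ whose deterministic history-dependent strategies correspond, after unfolding, exactly to the MCs labelled over $2^{\AP_\Phi}$ all of whose transition distributions have granularity bounded by $N$, together with a transformed formula $\Forall\GG\Phi'$, so that the \HD synthesis question for $\Forall\GG\Phi'$ on $\MDP$ is equivalent to the satisfiability question for $\Forall\GG\Phi$.

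Keeping $\MDP$ small is the heart of the matter: giving one action per granularity-$N$ distribution would create doubly-exponentially many actions, and feeding that into the $2^{|\MDP|^{\mathcal{O}(\Ell)}}$ bound behind Proposition~\ref{prop:HD-global-window} would make the procedure triply-exponential. Instead, the distribution emitted at each step is assembled by a single \emph{choice gadget}, shared by all ``label states'', whose $2^{|\AP_\Phi|}$ leaves are again the label states: the strategy first picks a common denominator $b\le\mathrm{lcm}(1,\dots,N)$, and then distributes the numerator $b$ among the leaves by splitting it along the propositions of $\AP_\Phi$ one at a time, choosing at each of the $|\AP_\Phi|$ levels how the current numerator is shared between the ``$p$ holds'' and ``$p$ fails'' halves; the product of the chosen ratios telescopes to the intended probability $c_{\ell'}/b$ at the leaf $\ell'$. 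To realise \emph{exactly} the granularity-$N$ distributions, the splits at every level are restricted so that every reachable leaf carries a numerator $c$ with $c/b$ still of granularity at most $N$; routing an arbitrary granularity-$N$ target distribution down the tree only ever passes through admissible numerators, so nothing is lost. The resulting $\MDP$ has $2^{\mathcal{O}(|\Phi|+N)}$ states and transitions, its transition probabilities have denominator at most $\mathrm{lcm}(1,\dots,N)$ (hence fit in $\mathcal{O}(N)$ bits), and each traversal of the gadget has a fixed length $N_{\mathrm{step}}=\mathcal{O}(|\Phi|)$. Finally $\Phi$ is translated into a window $L$-PCTL formula $\Phi'$ by marking gadget states with a fresh proposition, demanding $\Phi$ only on unmarked states, and slowing every path operator down by the factor $N_{\mathrm{step}}$ (ignoring marked states); since horizon labels are in unary, $|\Phi'|$ and its window length $\Ell'$ stay polynomial in $|\Phi|$. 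The initial label of a prospective model is guessed, e.g.\ by a short deterministic initial gadget or by running the reduction once for each $\ell_0\in 2^{\AP_\Phi}$.

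Correctness of the reduction is then routine in both directions, and the complexity follows: $\log|\MDP|=\mathcal{O}(|\Phi|+N)$ and $\Ell'=\mathcal{O}(|\Phi|\,\ellmax\,N_{\mathrm{step}})=\mathrm{poly}(|\Phi|)$, hence $|\MDP|^{\mathcal{O}(\Ell')}=2^{\mathrm{poly}(|\Phi|,N)}$ and the procedure of Proposition~\ref{prop:HD-global-window} runs in time doubly-exponential in $|\Phi|$ and $N$. For the claim that finite MCs suffice, recall that for deterministic strategies the greatest fixed point $\Pi^{\Phi'}$ of $\f$ in $Q^{\Phi'}$ is reached after finitely many iterations (proof of Proposition~\ref{prop:HD-global-window}), so each $\Pi^{\Phi'}_s$ is a finite set of deterministic window strategies; when $\Pi^{\Phi'}_{\initState}\neq\emptyset$ I would build a finite MC on the pairs $(s,\stratW)$ with $\stratW\in\Pi^{\Phi'}_s$, playing at $(s,\stratW)$ the action prescribed by $\stratW$ and routing each successor $s'$ to some $\stratW'\in\Pi^{\Phi'}_{s'}$ compatible with $\stratW$ along $s\to s'$ (such a $\stratW'$ exists since $\f(\Pi^{\Phi'})=\Pi^{\Phi'}$); an induction on the window length using Lemma~\ref{lm:window-length} shows every reachable pair satisfies $\Phi'$, and contracting the gadget in this finite MC yields a finite model of $\Forall\GG\Phi$ of granularity bounded by $N$.

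The step I expect to be the main obstacle is the gadget design: it must be simultaneously (i) complete, realising \emph{every} granularity-$N$ distribution over $2^{\AP_\Phi}$, (ii) sound, producing \emph{no} distribution of larger granularity, and (iii) succinct, with single-exponential size and gadget depth polynomial in $|\Phi|$. Dropping (iii) wrecks the complexity bound, since a naive ``one action per distribution'' gadget is doubly-exponential in size, while a naive incremental gadget either multiplies the window length by an exponential factor or, if one insists on conditional probabilities of small denominator, violates (i) or (ii); the ``pick the common denominator first, then split it along the propositions under an admissibility restriction on the leaves'' construction is what reconciles all three. The remaining ingredients—the formula slowdown, the contraction argument, and the finite folding of the fixed point—are standard.
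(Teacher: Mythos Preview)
Your route differs from the paper's, which takes precisely the naive construction you reject: one action per granularity-$N$ distribution over (sibling-indexed copies of) $2^{\AP}$, so that deterministic strategies correspond bijectively to bounded-granularity MCs and Proposition~\ref{prop:HD-global-window} applies directly with no formula rewriting. Your concern that this yields a doubly-exponential action set is well-founded; the paper nonetheless asserts its MDP is singly-exponential and derives the doubly-exponential overall bound from that, so your gadget is in effect an attempt to make that assertion true.

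The gadget has a completeness gap, however. With exactly one leaf per label $\ell\in 2^{\AP_\Phi}$, any \HD strategy in your MDP induces an MC whose behaviour is a function of the \emph{label history} alone; but a bounded-granularity model may send a state to two successors bearing the same label that later diverge, and your gadget merges them into a single leaf with a single continuation. Concretely, take $N=2$, $\AP=\{p\}$, and
\[
\Phi \;=\; \bigl(\Pr{\XX p}\geq 1 \ \vee\ \Pr{\XX p}\leq 0\bigr)\ \wedge\ \Pr{\XX^2 p}=\tfrac12\,.
\]
In any label-history-determined MC the first conjunct forces each step to be label-deterministic, so $\Pr{\XX^2 p}\in\{0,1\}$ everywhere and $\Forall\GG\Phi$ fails; yet $\Forall\GG\Phi$ \emph{is} satisfied by the granularity-$2$ MC in which every state splits $\tfrac12$--$\tfrac12$ into two same-labelled children, one whose own next step is labelled $p$ and one whose own next step is not. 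Hence your reduction fails in the direction ``model exists $\Rightarrow$ strategy exists''. The paper sidesteps this via sibling indices---its states are $2^{\AP}\times\{1,\dots,N!\}$---and you can graft the same idea onto your gadget by adding $\lceil\log_2 N!\rceil=\mathcal O(N\log N)$ extra splitting levels for the index; this keeps $|\MDP|$ single-exponential and the gadget depth polynomial in $|\Phi|+N$, after which your complexity analysis and your finite-model argument (which matches the paper's) go through.
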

\begin{proof}[Proof sketch]
  Given a formula with atomic propositions \AP, and a granularity bound $N$,
  we intuitively consider an MC of states $2^\AP$
  with an action for every distribution over $2^\AP$ whose granularity is bounded by $N$,
  so that this action describes the next states and their probabilities.
  Then, every MC of granularity bounded by $N$ can be seen as a deterministic
  strategy in this MDP, so that strategy synthesis and MC satisfiability
  are equivalent. We can then apply Proposition~\ref{prop:HD-global-window}. %
  Moreover, finite MCs are sufficient as finite-memory strategies
  are sufficient for global window PCTL when restricted to deterministic strategies.
\end{proof}

\section{Undecidability}\label{sec:undec}

  In Section~\ref{sec:fixpoint}, we have shown that the synthesis problem for flat, non-strict global window $L$-PCTL formulae is in $\co\RE$. In this section, we argue that it is $\co\RE$-hard %
  and that it becomes $\Sigma_1^1$-hard when relaxing the hypothesis that the formulae considered are non-strict.

  When considering flat non-strict formulae, we proceed via a reduction from the non-halting problem of a two-counter Minsky machine. A two-counter Minsky machine consists of a list of instructions $l_1:\mathsf{ins}_1,\ldots,l_n:\mathsf{ins}_n$ and two counters $c_2$ and $c_3$ (the indices $2$ and $3$ are chosen to ease the notations) where, for all $i \leq n$, we have $\mathsf{ins}_i$ an instruction in one the following types, for $j \in \{ 2,3 \}$ and $1 \leq k,m \leq n$: $\mathsf{Inc}_j(k)$: $c_j := c_j + 1$; goto $k$; $\mathsf{Branch}_j(k,m)$: if $c_j = 0$ then goto $k$; else $c_j := c_j - 1$, goto m; \textsf{H}: halt. The semantics of these instructions is straightforward. %
  The non-halting problem for Minsky machine, denoted \textsf{MinskyNotStop}, is to decide, given a machine $\mathsf{Msk}$, if its execution is infinite. This problem is undecidable, as stated in the theorem below.

\begin{theorem}[\cite{Minsky67}]
  \label{thm:minsky_co_re_hard}
  \textsf{MinskyNotStop} is \co\RE-complete.
\end{theorem}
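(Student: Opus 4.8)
The plan is to prove the two directions of $\co\RE$-completeness separately.

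For membership in $\co\RE$, I would observe that the complement of \textsf{MinskyNotStop} is the halting problem, which asks whether a given machine $\mathsf{Msk}$ halts. This problem is recursively enumerable: since the semantics of a Minsky machine is deterministic, one simply simulates its unique execution step by step --- tracking the current instruction pointer and the two counter values --- and accepts as soon as the instruction \textsf{H} is reached. This procedure halts exactly on halting instances, so the halting problem is in $\RE$ and hence \textsf{MinskyNotStop} is in $\co\RE$.

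For $\co\RE$-hardness, I would reduce the non-halting problem of deterministic single-tape Turing machines --- which is $\co\RE$-complete, being the complement of the $\RE$-complete halting problem --- to \textsf{MinskyNotStop}, relying on Minsky's classical simulation of Turing machines by two-counter machines. I would recall this construction in three stages. (i) A Turing machine is simulated by a two-stack machine whose two stacks hold the tape content to the left and to the right of the reading head; each stack over an alphabet of size $b$ is encoded as a natural number written in base $b$, so that pushing or popping a symbol becomes multiplying or dividing by $b$ together with adding a digit or reading a remainder modulo $b$. (ii) The resulting ``multiply, divide, and test-modulo by a constant'' operations on counters are themselves implemented with finitely many plain counters supporting only increment, decrement and zero-test, plus auxiliary scratch counters --- for instance multiplying a counter $x$ by $c$ by repeatedly decrementing $x$ while incrementing a fresh counter $c$ times, until $x$ reaches $0$, then swapping. (iii) Any number $n$ of such plain counters is encoded into exactly two by storing $(a_1,\dots,a_n)$ as the single integer $p_1^{a_1}\cdots p_n^{a_n}$ for distinct primes $p_i$, using the second counter as scratch space to multiply or divide by $p_i$ (to increment or decrement counter $i$) and to test divisibility by $p_i$ (to zero-test counter $i$). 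Composing the three stages turns any pair $(\mathcal{T},w)$ into a two-counter Minsky machine $\mathsf{Msk}_{\mathcal{T},w}$ whose execution halts if and only if $\mathcal{T}$ halts on $w$, equivalently runs forever if and only if $\mathcal{T}$ does not halt on $w$. Since the map $(\mathcal{T},w)\mapsto\mathsf{Msk}_{\mathcal{T},w}$ is computable, this is a many-one reduction from Turing-machine non-halting to \textsf{MinskyNotStop}, establishing $\co\RE$-hardness.

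Together the two parts give $\co\RE$-completeness. I expect the bulk of the work to lie in justifying stages (i)--(iii) of the simulation, and the delicate point to be keeping all arithmetic exact --- divisions must be exact and the modulo and divisibility tests faithful --- which is precisely where the scratch counters and the prime-power pairing into two counters are needed. The membership argument and the computability of the reduction are routine.
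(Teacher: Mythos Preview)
Your proposal is correct and follows the classical textbook route (Minsky's simulation of Turing machines by two-counter machines via the two-stack encoding and the prime-power pairing). However, the paper does not actually prove this theorem at all: it is stated with a citation to~\cite{Minsky67} and used as a black box, so there is nothing to compare against. Your write-up would serve as a self-contained justification of a result the paper simply imports.
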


\begin{figure}
  \begin{minipage}{0.2\linewidth}
    \centering
    \includegraphics[scale=0.8]{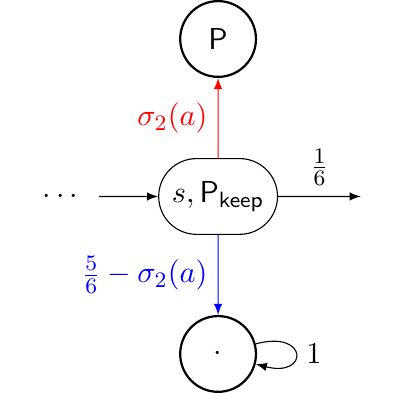}
    \caption{The end of a gadget.}
    \label{ExitGadget}
  \end{minipage}
  \begin{minipage}{0.5\linewidth}
    \centering
    \includegraphics[scale=0.8]{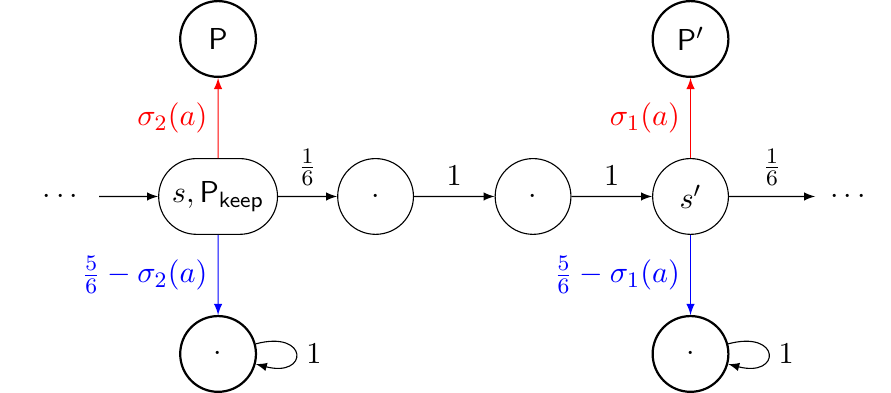}
    \caption{The end of a gadget on the left, and the beginning of another one on the right.}
    \label{BeginEndGadget}
  \end{minipage}
  \begin{minipage}{0.2\linewidth}
    \centering
    \includegraphics[scale=0.8]{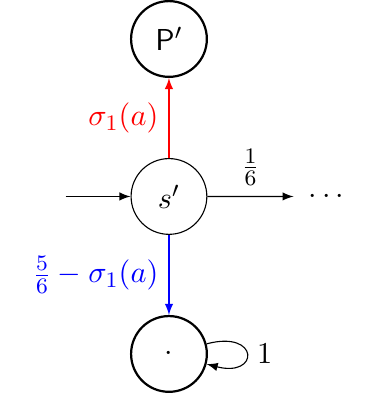}
    \caption{The beginning of a gadget.}
    \label{BeginGadget}
  \end{minipage}
\end{figure}

  Given $\mathsf{Msk} = l_1:\mathsf{ins}_1,\ldots,l_n:\mathsf{ins}_n$ on two counters $c_2$ and $c_3$, we build an MDP $\MDP$ and an $L$-PCTL formula $\Phi$ such that there exists a strategy $\sigma$ for $\MDP$ s.t. $\MDP[\sigma] \models \Phi$ if and only if $\mathsf{Msk} \in \mathsf{MinskyNotStop}$.
  The crucial point of the reduction is to encode the values of the counters that may take unbounded values. It is done in $\MDP$ by encoding these values in the probability (chosen by the strategy $\sigma$) to see a given predicate in the next few steps. More specifically, in the situation where the counters are such that $\{ c_2 \mapsto x_2; c_3 \mapsto x_3 \}$, we consider the probability $p(x_2,x_3) = \frac{5}{6} \times \frac{1}{2^{x_2}} \times \frac{1}{3^{x_3}}$. We then associate to each different instruction a gadget, i.e. an MDP, and a formula encoding the update of probability $p(x_2,x_3)$ according to how the counters are changed by the corresponding instruction. Inside a gadget, one can find predicates of the shape $(\mathsf{P}_\cdot)$. They are used to define the formulae specifying the expected behavior of the strategy. Furthermore, there is also an entering and an exiting probability which correspond to the encoding of the counters respectively before and after the effect of the instructions. We define below formally %
  the notion of well-placed gadgets.
\begin{definition}[Gadgets]
  A gadget $\mathsf{Gd}$ is an MDP with an \emph{entering probability} and an \emph{exiting probability}. Consider Figure~\ref{ExitGadget} that represents how every gadget $\mathsf{Gd}$ ends. The exiting probability $p^\mathsf{ex}_\mathsf{Gd}$ %
  is the probability $\sigma_2(a)$ %
  to visit the state on the top. %
  It is equal to $p^\mathsf{ex}_\mathsf{Gd}=\mathbb{P}_s(\FF^{1} \mathsf{P})$,
  \emph{i.e.}~the probability that $\FF^{1} \mathsf{P}$ holds on state $s$.
    Consider Figure~\ref{BeginGadget}. All gadgets %
    begin as in this figure: a state $s'$ %
  with a successor satisfying the predicate $\mathsf{P}'$. %
  The \emph{entering probability} $p_{\mathsf{Gd}}^{\mathsf{en}}$ %
  is the probability $\sigma_1(a)$ to see $\mathsf{P}'$, that is: $p_{\mathsf{Gd}}^{\mathsf{en}} = \mathbb{P}_{s'}(\FF^{1} \mathsf{P}')$. %
  A gadget is \emph{well-placed} if, as for the gadget on the right of Figure~\ref{BeginEndGadget}, it is preceded by two dummy states, themselves immediately preceded by a gadget.
\end{definition}

  Before looking at how specific instructions are encoded in the counters and the formula, we have to ensure that the exiting probability of a gadget is equal to the entering probability of the following well-placed gadget. This is done with the formula: $\Phi_\mathsf{keep} := \mathsf{P}_\mathsf{keep} \Rightarrow (\mathbb{P}(\FF^{1} \mathsf{P}) = 6 \cdot  \mathbb{P}(\FF^{4} \mathsf{P}'))$.
  These definitions ensure the following proposition:
\begin{proposition}[Entering probability of a well-placed gadget]
  Assume that a well-placed gadget $\mathsf{Gd}'$ follows a gadget $\mathsf{Gd}$.%
  Then, for a strategy $\sigma$ s.t. the formula $\Forall \GG \Phi_\mathsf{keep}$ is satisfied, the exiting probability of gadget $\mathsf{Gd}$ is equal to the entering probability of gadget $\mathsf{Gd}'$: $p_{\mathsf{Gd}}^{\mathsf{ex}} = p_{\mathsf{Gd}'}^{\mathsf{en}}$.
  \label{prop:gadget_well_placed}
\end{proposition}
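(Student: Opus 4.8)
The plan is to unfold the two definitions at the exit state of $\mathsf{Gd}$ and read the formula $\Phi_\mathsf{keep}$ there: its left-hand side will compute to $p^{\mathsf{ex}}_{\mathsf{Gd}}$, its right-hand side to $6\cdot\frac{1}{6}\,p^{\mathsf{en}}_{\mathsf{Gd}'}$, so that $\Phi_\mathsf{keep}$ precisely equates the two. Concretely, let $s$ be the exit state of $\mathsf{Gd}$ (as in Figure~\ref{ExitGadget}), so that $p^{\mathsf{ex}}_{\mathsf{Gd}} = \mathbb{P}_s(\FF^{1}\mathsf{P})$ by definition, and let $s'$ be the initial state of $\mathsf{Gd}'$ (as in Figure~\ref{BeginGadget}), so that $p^{\mathsf{en}}_{\mathsf{Gd}'} = \mathbb{P}_{s'}(\FF^{1}\mathsf{P}')$. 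Since $\mathsf{Gd}'$ is well-placed and follows $\mathsf{Gd}$, the part of $\MDP$ sitting strictly between $s$ and $s'$ is the fixed fragment of Figure~\ref{BeginEndGadget}: two dummy states $d_1,d_2$ realising $s \to d_1 \to d_2 \to s'$.

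First I would note that $\mathbb{P}_s(\FF^{1}\mathsf{P}) = p^{\mathsf{ex}}_{\mathsf{Gd}}$ holds immediately: $s$ does not satisfy $\mathsf{P}$, so from $s$ the path formula $\FF^{1}\mathsf{P}$ amounts to the first step landing in a $\mathsf{P}$-state, which by definition has probability $p^{\mathsf{ex}}_{\mathsf{Gd}}$. Next I would compute $\mathbb{P}_s(\FF^{4}\mathsf{P}')$ by tracing the fixed fragment between the two gadgets. The construction makes $\mathsf{P}'$ a predicate local to $\mathsf{Gd}'$, holding only on the immediate $\mathsf{P}'$-successor of $s'$ and on none of $s,d_1,d_2,s'$, and reachable from $s$ within four steps only along $s \to d_1 \to d_2 \to s'$ followed by one further step. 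Hence from $s$ the event $\FF^{4}\mathsf{P}'$ is the event of traversing $d_1,d_2,s'$ and then stepping into a $\mathsf{P}'$-state, and by the Markov property its probability factorises as the probability of the three-step path $s \to d_1 \to d_2 \to s'$ times $\mathbb{P}_{s'}(\FF^{1}\mathsf{P}') = p^{\mathsf{en}}_{\mathsf{Gd}'}$. Reading off the gadget figures, the two dummy states are deterministic and the exit structure of $\mathsf{Gd}$ contributes probability $\frac{1}{6}$ to that path, the same under every action, so the path probability equals $\frac{1}{6}$ independently of $\sigma$, and $\mathbb{P}_s(\FF^{4}\mathsf{P}') = \frac{1}{6}\,p^{\mathsf{en}}_{\mathsf{Gd}'}$.

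Finally I would invoke the hypothesis: $\MDP[\sigma] \models \Forall \GG \Phi_\mathsf{keep}$ and $s$ is a reachable state carrying the label $\mathsf{P}_\mathsf{keep}$ (by construction $\mathsf{P}_\mathsf{keep}$ marks exactly the exit states of gadgets that are immediately followed by a well-placed gadget), so $s$ satisfies $\Phi_\mathsf{keep}$, i.e. $\mathbb{P}_s(\FF^{1}\mathsf{P}) = 6\cdot\mathbb{P}_s(\FF^{4}\mathsf{P}')$; substituting the two computations above gives $p^{\mathsf{ex}}_{\mathsf{Gd}} = 6\cdot\frac{1}{6}\,p^{\mathsf{en}}_{\mathsf{Gd}'} = p^{\mathsf{en}}_{\mathsf{Gd}'}$. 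The bookkeeping here --- the semantics of $\FF^{\ell}$, the definitions of $p^{\mathsf{ex}}$ and $p^{\mathsf{en}}$, and the Markov factorisation --- is routine; the step I expect to require genuine care is the verification, straight from Figures~\ref{ExitGadget}--\ref{BeginGadget}, of the two structural facts I rely on, namely that the fragment between the exit of $\mathsf{Gd}$ and $s'$ contributes the fixed factor $\frac{1}{6}$ to path probabilities no matter what $\sigma$ does, and that $\mathsf{P}'$ cannot be reached from $s$ within four steps by any route other than $s \to d_1 \to d_2 \to s'$ --- together with confirming that $\mathsf{P}_\mathsf{keep}$ indeed labels precisely these exit states, which is what makes $\Phi_\mathsf{keep}$ applicable at $s$.
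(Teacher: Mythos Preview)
Your proposal is correct and matches the paper's intent: the paper does not give an explicit proof of this proposition, stating only that ``these definitions ensure the following proposition'', and your argument is precisely the direct unfolding of the definitions of $p^{\mathsf{ex}}_{\mathsf{Gd}}$, $p^{\mathsf{en}}_{\mathsf{Gd}'}$, and $\Phi_\mathsf{keep}$ that this sentence is gesturing at. The structural facts you flag for careful verification (the fixed $\tfrac{1}{6}$ factor independent of $\sigma$, the unreachability of $\mathsf{P}'$ from $s$ within four steps except via the intended route, and the placement of $\mathsf{P}_\mathsf{keep}$) are exactly the content of Figures~\ref{ExitGadget}--\ref{BeginGadget}, so your proof is complete once those figures are read off.
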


\begin{figure}
  \begin{minipage}{0.4\linewidth}
    \centering
    \includegraphics[scale=0.8]{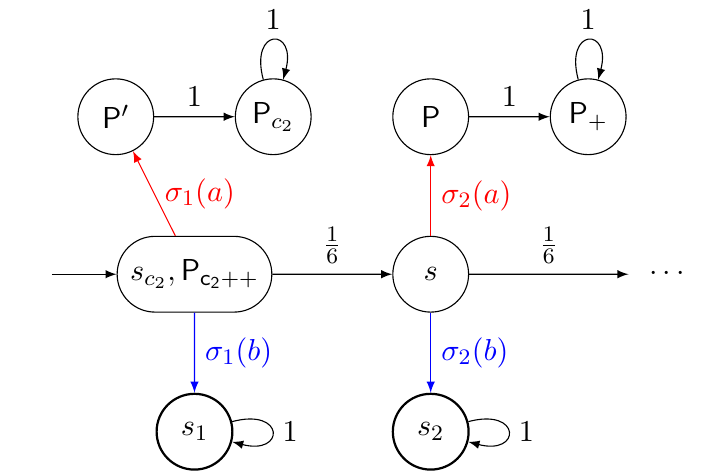}
    \caption{The gadget $\mathsf{Gd}_{c_2++}$ for the operation $c_2 := c_2 + 1$.}
    \label{GadgetIncC2}
  \end{minipage}
  \begin{minipage}{0.45\linewidth}
    \centering
    \includegraphics[scale=0.8]{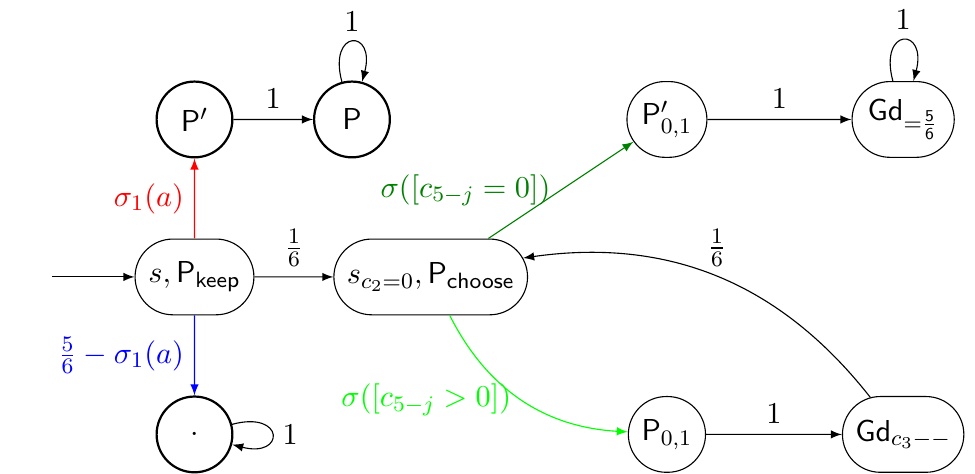}
    \caption{The gadget $\mathsf{Gd}_{c_2 = 0}$ for testing that the counter $c_2 = 0$.}
    \label{GadgetTestEq0}
  \end{minipage}
\end{figure}

  Due to lack of space, we only exhibit the gadgets encoding the increment of a counter and for testing if a counter value is 0.
  Consider the %
  increment of counter $c_2$. %
  By definition of $p(x_2,x_3)$, incrementing that counter
  is simulated by multiplying the probability by $\frac{1}{2}$.%
  We define the gadget $\mathsf{Gd}_{c_2 ++}$ and the formula $\Phi_{c_2 ++}$ ensuring that the probability is indeed multiplied by $\frac{1}{2}$. %
  The gadget $\mathsf{Gd}_{c_2++}$ is depicted in Figure~\ref{GadgetIncC2}.%
  In addition, we define the $L$-PCTL formula $\Phi_{c_2++}$ such that %
  $\Phi_{c_2++} := \mathsf{P}_{c_2++} \Rightarrow (\mathbb{P}(\FF^2 \mathsf{P}_{c_2}) = 6 \cdot 2 \cdot \mathbb{P}(\FF^3 \mathsf{P}_{+}))$. The interest of these definitions lies in the proposition below.
\begin{proposition}[Incrementing Gadget Specification]
  If the entering probability $p_{\mathsf{Gd}_{c_2++}}^{\mathsf{en}}$ of the gadget $\mathsf{Gd}_{c_2++}$ is equal to $p(x_2,x_3)$ with $x_2,x_3 \in \mathbb{N}$, then whenever the formula $\Forall \GG \Phi_{c_2++}$ is satisfied, the exiting probability $p_{\mathsf{Gd}_{c_2++}}^{\mathsf{ex}}$ of this gadget %
  is equal to $p_{\mathsf{Gd}_{c_2++}}^{\mathsf{ex}} = p(x_2+1,x_3)$.
  \label{prop:incrementing_gadget}
\end{proposition}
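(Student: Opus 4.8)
The plan is to argue entirely \emph{locally} inside the gadget $\mathsf{Gd}_{c_2++}$ of \figurename~\ref{GadgetIncC2}. Fix a (randomized) strategy $\sigma$ such that $\MDP[\sigma]$ satisfies $\Forall\GG\Phi_{c_2++}$, i.e.\ such that $\Phi_{c_2++}$ holds on every state reachable under $\sigma$, and let $t$ be the reachable state of the gadget carrying the predicate $\mathsf{P}_{c_2++}$. Since $\mathsf{P}_{c_2++}\in L(t)$, the implication in $\Phi_{c_2++}$ fires at $t$, so that $\mathbb{P}_t(\FF^2\mathsf{P}_{c_2}) = 6\cdot 2\cdot\mathbb{P}_t(\FF^3\mathsf{P}_+)$. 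Everything then reduces to rewriting the two sides of this equation in terms of $p^{\mathsf{en}}_{\mathsf{Gd}_{c_2++}}$ and $p^{\mathsf{ex}}_{\mathsf{Gd}_{c_2++}}$.

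I would then read off from the gadget that a $\mathsf{P}_{c_2}$-state is reachable from $t$ within two steps only along the ``entering track'', i.e.\ through the controllable transition whose probability is precisely the entering probability $p^{\mathsf{en}}_{\mathsf{Gd}_{c_2++}}=\sigma_1(a)$, followed by transitions with fixed probabilities depicted in \figurename~\ref{GadgetIncC2}; this gives $\mathbb{P}_t(\FF^2\mathsf{P}_{c_2}) = \kappa_{\mathrm{in}}\cdot p^{\mathsf{en}}_{\mathsf{Gd}_{c_2++}}$ for an explicit constant $\kappa_{\mathrm{in}}$ independent of $\sigma$. Symmetrically, a $\mathsf{P}_+$-state is reachable from $t$ within three steps only along the ``exiting track'', through the controllable transition whose probability is the exiting probability $p^{\mathsf{ex}}_{\mathsf{Gd}_{c_2++}}=\sigma_2(a)$, yielding $\mathbb{P}_t(\FF^3\mathsf{P}_+) = \kappa_{\mathrm{out}}\cdot p^{\mathsf{ex}}_{\mathsf{Gd}_{c_2++}}$. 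By design of the gadget, the constants $\kappa_{\mathrm{in}},\kappa_{\mathrm{out}}$ and the coefficient $6\cdot 2$ of $\Phi_{c_2++}$ satisfy $\kappa_{\mathrm{in}}/(12\,\kappa_{\mathrm{out}}) = \tfrac12$; this parallels the role of the factor $6$ in $\Phi_{\mathsf{keep}}$ used in Proposition~\ref{prop:gadget_well_placed}, the extra factor $2$ being exactly the multiplicative effect of incrementing $c_2$ on $p(x_2,x_3)$.

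Combining the three equalities gives $\kappa_{\mathrm{in}}\, p^{\mathsf{en}}_{\mathsf{Gd}_{c_2++}} = 12\,\kappa_{\mathrm{out}}\, p^{\mathsf{ex}}_{\mathsf{Gd}_{c_2++}}$, hence $p^{\mathsf{ex}}_{\mathsf{Gd}_{c_2++}} = \tfrac12\, p^{\mathsf{en}}_{\mathsf{Gd}_{c_2++}}$. Substituting the hypothesis $p^{\mathsf{en}}_{\mathsf{Gd}_{c_2++}}=p(x_2,x_3)$ and using $p(x_2+1,x_3)=\tfrac{5}{6}\cdot\tfrac{1}{2^{x_2+1}}\cdot\tfrac{1}{3^{x_3}}=\tfrac12\, p(x_2,x_3)$ gives $p^{\mathsf{ex}}_{\mathsf{Gd}_{c_2++}}=p(x_2+1,x_3)$, as claimed. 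Note that $p(x_2+1,x_3)\le\tfrac{5}{12}<1$, so this is a legitimate value for a strategy choice; the \emph{existence} of a strategy realizing it is part of the global construction and not needed for this statement.

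The main obstacle I anticipate is the bookkeeping in the second paragraph: enumerating from \figurename~\ref{GadgetIncC2} all paths of length at most $2$ (resp.\ at most $3$) from $t$ to a $\mathsf{P}_{c_2}$-state (resp.\ a $\mathsf{P}_+$-state), checking that none of them uses both controllable transitions (so that the two sides of the equation genuinely decouple into an ``entry-only'' and an ``exit-only'' quantity), and confirming that the fixed transition probabilities yield exactly the normalization that turns the coefficient $6\cdot2$ into multiplication by $\tfrac12$. A secondary point is to verify that on every other reachable state of the gadget the guard $\mathsf{P}_{c_2++}\Rightarrow(\cdots)$ is vacuously satisfied, so that $\Forall\GG\Phi_{c_2++}$ imposes no unintended constraint elsewhere.
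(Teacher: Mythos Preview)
The paper does not actually supply a proof of this proposition: it only displays the gadget $\mathsf{Gd}_{c_2++}$ in \figurename~\ref{GadgetIncC2}, writes down the formula $\Phi_{c_2++}$, and states the proposition as the ``interest of these definitions''. Your proposal is therefore not competing against an existing argument but filling in what the paper leaves implicit, and the route you take---instantiate $\Phi_{c_2++}$ at the unique reachable $\mathsf{P}_{c_2++}$-state, express each side of the resulting equation as a constant times $p^{\mathsf{en}}$ or $p^{\mathsf{ex}}$ by enumerating bounded-length paths in the gadget, and solve---is exactly the intended one, mirroring how Proposition~\ref{prop:gadget_well_placed} is meant to be read.

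The only caveat is that your argument is conditional on the concrete topology and edge weights of \figurename~\ref{GadgetIncC2}, which you (correctly) flag as the main bookkeeping obstacle. In particular, the claims that the $\FF^2\mathsf{P}_{c_2}$-paths from $t$ traverse only the entering controllable edge, that the $\FF^3\mathsf{P}_+$-paths traverse only the exiting one, and that the fixed constants combine with the factor $6\cdot 2$ to give exactly $\tfrac12$, all depend on the figure and cannot be checked from the text alone. Your writeup should make these path enumerations explicit once you have the figure in front of you; apart from that, the proof is complete.
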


  We now consider the gadget that tests if a counter value is 0, let us exemplify it with counter $c_2$ in gadget $\mathsf{Gd}_{c_2 = 0}$ depicted in Figure~\ref{GadgetTestEq0}. %
  The gadget $\mathsf{Gd}_{=5/6}$ used on the right tests that both counters have value 0 (i.e. the entering probability is equal to $5/6$).%
  Then, the idea is as follows: as long as counter $c_3$ has a positive value, the strategy $\sigma$ has to take the bottom branch to decrement it and once this counter has reached 0, it can take the top branch to check that the probability is indeed equal to $5/6$. Note that one cannot decrement counter $c_2$ in this gadget, hence if its value is positive, there is no way to pass the test of the comparison to $5/6$. To ensure that the choice at state $s_{c_2 = 0}$ is deterministic, we consider the formula $\Phi_{0,1} :=
  \mathsf{P}_{\mathsf{choose}} \Rightarrow (\mathbb{P}(\XX \mathsf{P}_{0,1}) = 1 \lor \mathbb{P}(\XX \mathsf{P}'_{0,1}) = 1)$. We have the proposition below:
\begin{proposition}[Testing Gadget Specification]
  Assume that the entering probability $p_{\mathsf{Gd}_{c_2=0}}^{\mathsf{en}}$ of the gadget $\mathsf{Gd}_{c_2 = 0}$ is equal to $p(x_2,x_3)$ for some $x_2,x_3 \in \mathbb{N}$. Then, there is a strategy $\sigma$ such that the formula $\Forall \GG [\Phi_{\mathsf{keep}} \wedge \Phi_{0,1} \wedge \Phi_{c_3--} \wedge \Phi_{= 5/6}]$ is satisfied if and only if $x_2 = 0$.
  \label{prop:testing_gadget}
\end{proposition}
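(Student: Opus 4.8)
The plan is to track, along any play through the gadget $\mathsf{Gd}_{c_2=0}$, the encoding value $\frac{5}{6}\cdot\frac{1}{2^{x_2'}3^{x_3'}}$ of the current counter pair, relying on two facts from earlier in the construction: consecutive well-placed gadgets carry over the same encoding (Proposition~\ref{prop:gadget_well_placed}, enforced by $\Phi_{\mathsf{keep}}$), and a traversal of the $c_3$-decrement sub-gadget subject to $\Phi_{c_3--}$ forces its exiting probability to be exactly three times its entering probability (the decrement counterpart of Proposition~\ref{prop:incrementing_gadget}). Since $\mathsf{Gd}_{c_2=0}$ is entered with encoding $p(x_2,x_3)$, after $j$ iterations of the loop that takes the bottom branch at the choice state $s_{c_2=0}$, the encoding back at $s_{c_2=0}$ equals $p(x_2,x_3)\cdot 3^{j}=\frac{5}{6}\cdot\frac{3^{\,j-x_3}}{2^{x_2}}$.

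For the forward direction, I would start from a strategy $\sigma$ satisfying $\Forall\GG[\Phi_{\mathsf{keep}}\wedge\Phi_{0,1}\wedge\Phi_{c_3--}\wedge\Phi_{=5/6}]$. By $\Phi_{0,1}$, at every visit of $s_{c_2=0}$ the strategy picks, deterministically, either the top branch towards $\mathsf{Gd}_{=5/6}$ or the bottom branch towards the $c_3$-decrement sub-gadget. I would first show that the bottom branch is taken only finitely often: each decrement produces an exiting probability of the form $\mathbb{P}_s(\FF^1\mathsf{P})\leq 1$ which $\Phi_{c_3--}$ pins to $p(x_2,x_3)\cdot 3^{j}$, so $p(x_2,x_3)\cdot 3^{j}\leq 1$ bounds $j$ (equivalently, a decrement is impossible once the current encoding exceeds $\frac{1}{3}$, since $\Phi_{c_3--}$ would then force a probability above $1$). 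Hence the top branch is eventually taken, say after exactly $k$ decrements, at which point the entering probability of $\mathsf{Gd}_{=5/6}$ equals $\frac{5}{6}\cdot\frac{3^{\,k-x_3}}{2^{x_2}}$. Since $\mathsf{Gd}_{=5/6}$ together with $\Phi_{=5/6}$ admits a satisfying strategy only when its entering probability is exactly $\frac{5}{6}$, we obtain $3^{\,k-x_3}=2^{x_2}$; as $2$ and $3$ are coprime, this forces $k=x_3$ and $x_2=0$.

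For the converse, I would assume $x_2=0$ and build $\sigma$ as follows: at the first $x_3$ visits of $s_{c_2=0}$ it takes the bottom branch, playing inside each $c_3$-decrement sub-gadget so as to realise the $\times 3$ update (possible by the decrement counterpart of Proposition~\ref{prop:incrementing_gadget}); at the $(x_3{+}1)$-th visit it takes the top branch into $\mathsf{Gd}_{=5/6}$; and on every remaining state it plays consistently with $\Phi_{\mathsf{keep}}$, $\Phi_{c_3--}$ and $\Phi_{=5/6}$. Then $\Phi_{0,1}$ holds because every choice at $s_{c_2=0}$ is pure; $\Phi_{c_3--}$ holds because during the $x_3$ decrements the encoding stays equal to $\frac{5}{6}\cdot\frac{1}{3^{x_3-j}}\leq\frac{5}{6}<1$, so all the required equalities are satisfiable; and after the $x_3$ decrements the encoding is $p(0,0)=\frac{5}{6}$, which by Proposition~\ref{prop:gadget_well_placed} is precisely the entering probability of $\mathsf{Gd}_{=5/6}$, so $\Phi_{=5/6}$ is met. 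Therefore $\sigma$ witnesses $\Forall\GG[\Phi_{\mathsf{keep}}\wedge\Phi_{0,1}\wedge\Phi_{c_3--}\wedge\Phi_{=5/6}]$.

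The hard part will be the forward direction, and within it the need to rule out that $\sigma$ profits from descending the bottom branch \emph{more} than $x_3$ times: this combines the quantitative observation that every realised exiting probability is a genuine probability (so $\leq 1$, which bounds the number of decrements and is what forces the top branch to be taken eventually) with the number-theoretic fact that $3^{\,k-x_3}=2^{x_2}$ has only the solution $k=x_3$, $x_2=0$. A secondary point requiring care is the bookkeeping that chains Proposition~\ref{prop:gadget_well_placed} with the decrement specification around the loop of $\mathsf{Gd}_{c_2=0}$, so that the encoding read at $s_{c_2=0}$ is exactly $p(x_2,x_3)\cdot 3^{j}$ after $j$ iterations.
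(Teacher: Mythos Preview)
Your proposal is correct and follows the same approach that the paper sketches informally: the paper only gives the intuition (decrement $c_3$ via the bottom branch until both counters are~$0$, then take the top branch into $\mathsf{Gd}_{=5/6}$; since $c_2$ cannot be decremented here, the $5/6$-test can only succeed if $x_2=0$), and your argument is precisely the formalisation of this idea, including the two key observations the paper leaves implicit---that the number of decrements is bounded because exiting probabilities must lie in $[0,1]$, and that $3^{k}=2^{x_2}3^{x_3}$ forces $x_2=0$ by unique factorisation.
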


  We can similarly test that a counter is different from 0. A gadget corresponding to a branching instruction can then be defined by using these gadgets.%
  Finally, a gadget corresponding to the $\mathsf{Halt}$ instruction only consists of a gadget where no strategy $\sigma$ can satisfy the formula considered. Overall, we can combine all these gadgets to encode all the instructions of the Minsky machine $\mathsf{Msk}$.%
  We obtain the theorem below.
\begin{theorem}
  \label{thm:gwpctl_co_re_hard}
  The synthesis problem for flat, non-strict global window $L$-PCTL formulae
  is \co\RE-hard.
\end{theorem}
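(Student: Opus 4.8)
The plan is to establish $\co\RE$-hardness by a reduction from \textsf{MinskyNotStop}, which is $\co\RE$-complete by Theorem~\ref{thm:minsky_co_re_hard}. Given a two-counter machine $\mathsf{Msk} = l_1{:}\mathsf{ins}_1,\dots,l_n{:}\mathsf{ins}_n$ with counters $c_2,c_3$, we build a \emph{finite} MDP $\MDP$ and a flat, non-strict window formula $\Phi$ so that some strategy $\strat$ satisfies $\Forall\GG\Phi$ in $\MDP$ if and only if the execution of $\mathsf{Msk}$ is infinite. The MDP is a finite graph obtained by wiring together one gadget per instruction: the gadget of $\mathsf{ins}_i$ exits, through the two dummy states that make the next gadget \emph{well-placed} as in Figure~\ref{BeginEndGadget}, into the entry of the gadget of whichever instruction $\mathsf{ins}_i$ jumps to; an $\mathsf{Inc}_j$ instruction uses the increment gadget of Figure~\ref{GadgetIncC2}, a $\mathsf{Branch}_j$ instruction uses a gadget assembled from the test-for-zero gadget of Figure~\ref{GadgetTestEq0}, a symmetric test-for-nonzero gadget and a decrement, and $\mathsf{H}$ uses a trap gadget in which no strategy can meet its local formula. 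Crucially, the size of $\MDP$ depends only on $n$, not on the counter values: the unboundedness of the counters is carried entirely by the probabilities that $\strat$ chooses and by how many times $\strat$ loops inside the test gadgets, so this is indeed a valid finite instance.

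Next we fix the encoding invariant: when the simulated run is in a configuration with counter values $x_2,x_3$, the entering probability of the current gadget equals $p(x_2,x_3)=\frac{5}{6}\cdot 2^{-x_2}\cdot 3^{-x_3}$. We then let $\Phi$ be the conjunction, over all instructions of $\mathsf{Msk}$, of the guarded formulae $\Phi_{\mathsf{keep}}$, $\Phi_{0,1}$, $\Phi_{c_j++}$ and the test formulae already introduced. Each conjunct is implied by a predicate $\mathsf{P}_{\cdot}$ that holds only on the relevant gadget-entry state, so under $\GG$ it is an actual constraint exactly when the run is at that point of the program and vacuously true everywhere else. The correctness of the reduction then rests on the gadget specifications already isolated: Proposition~\ref{prop:gadget_well_placed} transports the encoding across gadget boundaries ($p^{\mathsf{ex}}_{\mathsf{Gd}}=p^{\mathsf{en}}_{\mathsf{Gd}'}$); Proposition~\ref{prop:incrementing_gadget} shows the increment gadget multiplies the encoded probability by $\frac12$ (resp.\ $\frac13$ for $c_3$), i.e.\ faithfully realizes an increment of $c_j$; and Proposition~\ref{prop:testing_gadget} shows the test-for-zero gadget of $c_j$ admits a satisfying strategy if and only if $x_j=0$. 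A symmetric statement handles the test-for-nonzero gadget, after which the $\mathsf{Branch}_j$ gadget (together with the bookkeeping that restores the auxiliary counter) is routine, as is the decrement, which just multiplies the encoded probability by $2$ (resp.\ $3$).

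With the gadgets in place, both directions follow by an induction along the run of $\mathsf{Msk}$. If $\mathsf{Msk}$ never halts, the strategy that follows its unique infinite run and, inside each gadget, plays exactly the probabilities prescribed by the current counter values keeps the invariant by the three propositions, never reaches the $\mathsf{H}$-gadget, and hence satisfies every guarded conjunct on every reachable state, so $\MDPstrat\models\Forall\GG\Phi$. Conversely, any $\strat$ with $\MDPstrat\models\Forall\GG\Phi$ must respect all the local constraints; by the gadget specifications it therefore tracks a genuine run of $\mathsf{Msk}$ together with its correct probability encoding, and since no strategy satisfies the trap formula of the $\mathsf{H}$-gadget this run cannot reach $\mathsf{H}$, i.e.\ $\mathsf{Msk}\in\textsf{MinskyNotStop}$. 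It then remains to verify that $\Phi$ lies in the required fragment: it is \emph{flat} because no conjunct nests the probability operator (the path formulae inside $\mathbb{P}(\cdot)$ mention only atomic predicates $\mathsf{P}_{\cdot}$), and it is \emph{non-strict} because every comparison is an equality, handled as a conjunction of two non-strict inequalities $\geq$ (the $\le$ direction obtained by negating coefficients, as in the preliminaries). Together with Theorem~\ref{thm:coRE-easy} this also yields $\co\RE$-completeness.

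The step that we expect to be the main obstacle is the test-for-zero gadget, Proposition~\ref{prop:testing_gadget}, together with the arithmetic fact that makes the whole scheme sound: inside $\mathsf{Gd}_{c_2=0}$ the strategy may loop an unbounded number of times, each iteration multiplying the encoded probability by $3$ (a decrement of $c_3$), before it is forced to pass the \emph{exact} comparison of the probability to $5/6$; reaching $5/6$ from $\frac{5}{6}\,2^{-x_2}3^{-x_3}$ after $k$ such multiplications requires $3^{k-x_3}=2^{x_2}$, which by unique factorization forces $x_2=0$ and $k=x_3$, so the strategy cannot cheat and the test succeeds precisely when $x_2=0$; here non-strictness, i.e.\ the equality rather than an inequality, is exactly what pins the value down. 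Making this watertight — including why the auxiliary formula $\Phi_{0,1}$ must force the choice at the branching state $s_{c_2=0}$ to be deterministic, and why the two dummy states and the precise horizons $\FF^1$ versus $\FF^4$ in $\Phi_{\mathsf{keep}}$ make the cross-gadget bookkeeping come out right — is where the real effort lies; everything else is gadget assembly and a straightforward induction.
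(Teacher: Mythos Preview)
Your proposal is correct and follows essentially the same approach as the paper: a reduction from \textsf{MinskyNotStop} using gadgets that encode counter values as the probability $p(x_2,x_3)=\tfrac{5}{6}\cdot 2^{-x_2}\cdot 3^{-x_3}$, with the unique-factorization argument for the zero test as the crux, and all constraints flat and non-strict since they are equalities on atomic $\FF^{\ell}$ events. One point to tighten: your phrase ``bookkeeping that restores the auxiliary counter'' in the $\mathsf{Branch}_j$ gadget is not what the paper does and would not work as stated---after the zero test has driven $c_3$ to $0$ there is no formula that can force the strategy to re-increment exactly $x_3$ times; instead the paper first \emph{duplicates} the encoded probability via a dedicated gadget $\mathsf{Gd}_{\mathsf{dupl}}$ (Proposition~\ref{prop:gadget_dupl}), sending one copy into the (absorbing) test and letting the other copy continue the simulation untouched.
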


  When considering arbitrary flat formulae without the non-strict constraint, we can adapt the proof to reduce from the problem asking if there is an execution of a Minsky Machine that visits infinitely often the first instruction~\cite{AH94}, so that the strategy synthesis problem becomes highly undecidable.

\begin{theorem}
  \label{thm:gwpctl_strict_sigma_one_one_hard}
  The synthesis problem for flat global window $L$-PCTL formulae
  is $\Sigma_1^1$-hard.
\end{theorem}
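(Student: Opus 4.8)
The plan is to reduce from the $\Sigma_1^1$-complete \emph{recurrence} problem for \emph{nondeterministic} two-counter Minsky machines~\cite{AH94}: given such a machine with a distinguished first instruction $l_1$, decide whether it admits an infinite computation that executes $l_1$ infinitely often. The reduction reuses the MDP-with-gadgets construction behind Theorem~\ref{thm:gwpctl_co_re_hard}: counter values $(x_2,x_3)$ are still encoded by the entering/exiting probabilities $p(x_2,x_3)=\frac{5}{6}\cdot 2^{-x_2}\cdot 3^{-x_3}$ that the strategy is forced to realise, the instruction gadgets ($\mathsf{Inc}$, $\mathsf{Branch}$ built from the zero-test gadgets, etc.) are kept, and the conjunction of the local faithfulness formulae, placed under a single $\Forall\GG$, still constrains $\MDP[\sigma]$ to simulate the machine. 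Two modifications are needed. First, the machine is now nondeterministic, so a choice between successor instructions becomes an extra action choice that $\sigma$ resolves -- no new difficulty, since $\sigma$ already resolves all nondeterminism and the determinism-forcing formulae of the $\co\RE$ reduction (in the style of $\Phi_{0,1}$) still apply where needed. Second, the halting gadget is dropped: we no longer care whether the computation halts, only whether $l_1$ recurs.

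The new ingredient is a family of window constraints that force the simulated computation to return to $l_1$ infinitely often while permitting \emph{unboundedly} (non-recursively) large gaps between consecutive visits; this is what pushes the accepted property from $\co\RE$ up to $\Sigma_1^1$, and it is here that \emph{strict} inequalities enter. The idea is to add a fresh ``countdown'' probability track $q$, encoded just like $p$ but with a fresh prime base $r$, so that whenever the run passes through $l_1$ the strategy may reset $q$ to any value $\tfrac{5}{6}\cdot r^{-k}$ with $k\ge 1$, thereby announcing -- and being free to announce arbitrarily large -- a bound on the number of steps before the next visit to $l_1$; between two $l_1$-states, flat bounded-horizon linear equalities between probabilities $\mathbb{P}(\FF^{j}\mathsf{Q})$ (in the style of $\Phi_{c_2++}$) force $q$ to be multiplied by $r$ at each step, and a strict window constraint of the shape ``$\mathbb{P}(\FF^{1}\mathsf{Q})<\tfrac{5}{6}$ at every non-$l_1$ state'' forbids the countdown from expiring anywhere except at $l_1$. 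Consequently any $\sigma$ with $\MDP[\sigma]\models\Forall\GG\Phi$ must steer the simulated run back to $l_1$ within its announced bound, so $l_1$ recurs; conversely, given a recurrent computation of the machine, simulating it and announcing at each $l_1$-visit the exact length of the upcoming segment yields a winning strategy. Crucially, the strict constraint must be load-bearing: by Theorem~\ref{thm:coRE-easy} the non-strict fragment is only in $\co\RE$, so the construction has to ensure that relaxing ``$<$'' to ``$\le$'' genuinely loses the reduction -- intuitively, with closed constraints finite approximations of a would-be strategy can always be glued into an infinite one, whereas the strict inequalities obstruct exactly such gluings unless the chosen run is recurrent.

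For correctness in the converse direction, from a winning $\sigma$ one extracts -- as in Theorem~\ref{thm:gwpctl_co_re_hard}, by following a path of positive probability along which the faithfulness and countdown equalities pin down the relevant probabilities -- a single infinite computation of the machine, which by the countdown constraints executes $l_1$ infinitely often. The main obstacle is precisely the design and verification of this countdown mechanism: it must stay flat and of bounded horizon, it must force recurrence with gaps that are genuinely unbounded (so that the reduction target is $\Sigma_1^1$ and not a recursively enumerable approximation of it), the strict comparison must be essential, and -- since $\sigma$ may be randomized, so that $\MDP[\sigma]$ carries a distribution over runs rather than a single one -- the extraction of a recurrent computation must be carried out along a carefully chosen positive-probability path, exactly as in the $\co\RE$-hardness construction.
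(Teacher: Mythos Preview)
Your overall architecture matches the paper's: reduce from the $\Sigma_1^1$-complete recurrence problem for nondeterministic Minsky machines, keep the counter-encoding gadgets of the $\co\RE$ reduction, handle the new nondeterministic \textsf{Choice} instruction as a strategy choice, and add a fresh countdown counter (the paper uses base $5$) that the strategy resets at each visit to $l_1$ and that is decremented at every other instruction.

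There is, however, a genuine gap in where you place the load-bearing strict inequality. Your constraint ``$\mathbb{P}(\FF^{1}\mathsf{Q})<\tfrac{5}{6}$ at every non-$l_1$ state'' is meant to detect countdown expiry, but that role is already served non-strictly: once $q$ has been multiplied by $r$ enough times, the multiplication equality $q'=r\cdot q$ forces $q'>1$ and becomes unsatisfiable as a probability, so replacing your $<$ by $\leq$ costs at most one step and does not break the reduction. The place where strictness is truly indispensable is at the \emph{reset}. Nothing in your sketch prevents the strategy from choosing $q=0$ at $l_1$; then $q$ stays $0$ forever, $0<\tfrac{5}{6}$ holds at every step, the decrement equality $0=r\cdot 0$ is satisfied, and the run need never return to $l_1$. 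You assert that the reset value is of the form $\tfrac{5}{6}\cdot r^{-k}$ with $k\ge 1$, but the check-gadget style verification you allude to cannot by itself rule out $q=0$: with $q=0$ the strategy can loop in the check branch forever without violating any non-strict constraint, while the main branch (after duplication) proceeds unhindered.

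The paper closes exactly this hole: its formula $\Phi_{c_5}$ contains the clause $\mathbb{P}(\FF^{3}\mathsf{P}_{\times p_5})>0$, forcing the freshly chosen countdown probability to be strictly positive and hence a \emph{finite} power of the base. It is this $>0$ whose relaxation to $\geq 0$ collapses the reduction, and which accounts for the jump from $\co\RE$ to $\Sigma_1^1$. Your compactness intuition (``closed constraints let finite approximations be glued'') is the right diagnosis, but it points to the lower end of the range of $q$, not the upper one.
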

  The construction here is similar to the case with the non-strict constraint, except that whenever the first instruction is seen, a choice is given to the strategy which can set in how many number of steps $n$ the first instruction will be seen again (note that this number may be arbitrarily large). This choice is encoded by resetting a new counter $c_5$ to value $n$, which is then decremented each time the first instruction is not seen, and a problem arises if this counter ever reaches 0. In terms of probability, the value $p(x_2,x_3)$ is initially multiplied by $\frac{1}{5^n}$ and then multiplied by 5 each time the first instruction is not seen. Hence, the probabilities chose by $\sigma$ may be arbitrarily close to 0, but cannot be equal to 0. This is where we need the non-strict
  comparison with 0.

\clearpage

\begin{center}
    {\Large \bf Appendix}
\end{center}
\appendix

\section{Proofs of Section~\ref{sec:fixpoint}}\label{app:fixpoint}

\begin{lemma}\label{lm:semantics-cylinders}
  For every path formula $\varphi$, MC $\MC$
  and state $s$, the set of paths $\sem{\varphi}_\MC^s$ is measurable.
  Moreover, if the horizon label of $\varphi$ is $\ell\in\N$,
  then $\sem{\varphi}_\MC^s$ is a union of cylinders
  generated from a set $\Cyl\subseteq\FPaths^{\ell}_\MC(s)$ of prefixes,
  so that $\mu_\MC(\sem{\varphi}_\MC^s)=
  \sum_{\rho\in\Cyl} \mu_\MC(\IPaths_\MC(\rho))$.%
\end{lemma}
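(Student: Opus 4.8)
The plan is to proceed by a structural case analysis on the five forms of the path formula $\varphi$, using that by Definition~\ref{def:semantics} every state subformula $\Phi'$ occurring in $\varphi$ already has a well-defined semantics $\sem{\Phi'}_\MC\subseteq S$ \emph{as a set of states}, so that membership of a path $\rho$ in a set of the form $\{\rho\mid\rho[k]\in\sem{\Phi'}_\MC\}$ depends only on the single state $\rho[k]$. In particular such a \emph{level set} is the union of the pairwise disjoint cylinders $\IPaths_\MC(\tau)$ over all $\tau\in\FPaths^k_\MC(s)$ with $\last(\tau)\in\sem{\Phi'}_\MC$; since $\FPaths^k_\MC(s)\subseteq\FPaths_\MC(s)$ is countable (as recalled in the preliminaries), this is a legitimate countable union of cylinders and hence lies in $\Omega_\MC^s$.

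For the bounded operators $\XX^\ell\Phi'$, $\Phi_1\UU^\ell\Phi_2$ and $\Phi_1\WW^\ell\Phi_2$ I would first observe, directly from the semantics, that whether $\rho\in\sem{\varphi}_\MC^s$ is entirely determined by the prefix $\rho[{:}\ell]\in\FPaths^\ell_\MC(s)$: every conjunct or disjunct in the defining condition only mentions $\rho[i]$ for indices $i\le\ell$. Therefore, taking $\Cyl$ to be the set of those $\tau\in\FPaths^\ell_\MC(s)$ all of whose extensions satisfy $\varphi$ (explicitly $\Cyl=\{\tau\in\FPaths^\ell_\MC(s)\mid\exists j\le\ell,\ \tau[j]\in\sem{\Phi_2}_\MC\ \wedge\ \forall i<j,\ \tau[i]\in\sem{\Phi_1}_\MC\}$ in the $\UU^\ell$ case, and analogously for $\XX^\ell$ and $\WW^\ell$), we get $\sem{\varphi}_\MC^s=\bigcup_{\tau\in\Cyl}\IPaths_\MC(\tau)$. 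Two distinct finite paths of the same length $\ell$ are never prefixes of one another, so these cylinders are pairwise disjoint, and by the countable additivity property of $\mu_\MC$ recalled in the preliminaries we obtain $\mu_\MC(\sem{\varphi}_\MC^s)=\sum_{\tau\in\Cyl}\mu_\MC(\IPaths_\MC(\tau))$, which is exactly the claimed identity; measurability of $\sem{\varphi}_\MC^s$ follows a fortiori.

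For the unbounded operators there is no finite horizon, so only measurability is asserted. For $\Phi_1\UU\Phi_2$ I would write $\sem{\Phi_1\UU\Phi_2}_\MC^s=\bigcup_{j\in\N}\bigl(B_j\cap\bigcap_{i<j}A_i\bigr)$, where $A_i=\{\rho\mid\rho[i]\in\sem{\Phi_1}_\MC\}$ and $B_j=\{\rho\mid\rho[j]\in\sem{\Phi_2}_\MC\}$ are level sets, hence measurable; a finite intersection followed by a countable union of measurable sets is measurable. For $\Phi_1\WW\Phi_2$ I would dually write $\sem{\Phi_1\WW\Phi_2}_\MC^s=\bigcap_{j\in\N}\bigl(C_j\cup\bigcup_{i<j}D_i\bigr)$ with $C_j=\{\rho\mid\rho[j]\in\sem{\Phi_1}_\MC\cup\sem{\Phi_2}_\MC\}$ and $D_i=\{\rho\mid\rho[i]\in\sem{\Phi_2}_\MC\}$ level sets, a finite union followed by a countable intersection of measurable sets again being measurable; alternatively one may invoke the duality $\neg(\Phi_1\WW\Phi_2)\equiv(\neg\Phi_1\wedge\neg\Phi_2)\UU\neg\Phi_1$ together with closure of $\Omega_\MC^s$ under complement. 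The only point that needs genuine care, rather than presenting any real difficulty, is justifying that the sets $\Cyl$ and the level sets are honest countable unions of cylinders — so that they belong to the $\sigma$-algebra $\Omega_\MC^s$ and not merely to an algebra of finite unions — which is precisely what the countability of $\FPaths_\MC(s)$ gives us.
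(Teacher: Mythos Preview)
Your proposal is correct and follows essentially the same case analysis as the paper: for the bounded operators you take $\Cyl\subseteq\FPaths^\ell_\MC(s)$ exactly as the paper does, and for $\Phi_1\UU\Phi_2$ and $\Phi_1\WW\Phi_2$ your level-set decomposition and the duality argument coincide with the paper's treatment (the paper identifies the cylinders for $\UU$ directly and handles $\WW$ via the duality $\neg(\Phi_1\WW\Phi_2)\equiv(\neg\Phi_1\wedge\neg\Phi_2)\UU\neg\Phi_1$). If anything, your version is slightly more careful: the paper writes ``finite union'' for the bounded cases, which is only literally true for finite MCs, whereas you correctly appeal to countability of $\FPaths_\MC(s)$ throughout.
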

\begin{proof}%
  We show that
\begin{enumerate}
  \item\label{it:X-cyl} $\sem{\XX^{\ell} \Phi}_\MC^s$ is a finite union
  of disjoint cylinders generated from
  the paths $\rho\in\FPaths^{\ell}_\MC(s)$ so that
  $\rho[\ell]\models_\MC \Phi$.
  \item\label{it:Ul-cyl} %
  $\sem{\Phi_1 \UU^{\ell} \Phi_2}_\MC^s$ is
  a finite union of disjoint cylinders generated from
  the paths $\rho\in\FPaths^{\ell}_\MC(s)$ so that
  $\bigvee_{0\leq j\leq\ell} \left(\rho[j]\models_\MC \Phi_2 \wedge
  \bigwedge_{0\leq i<j} \rho[i] \models \Phi_1\right).$
  \item\label{it:U-cyl} $\sem{\Phi_1 \UU \Phi_2}_\MC^s$ is
  a countable union of cylinders generated from
  the paths $\rho\in\FPaths_\MC(s)$ so that
  $\last(\rho)\models_\MC \Phi_2 \wedge
  \bigwedge_{0\leq i<|\rho|} \rho[i] \models \Phi_1$.
  \item\label{it:Wl-cyl} %
  $\sem{\Phi_1 \WW^{\ell} \Phi_2}_\MC^s$ is
  a finite union of disjoint cylinders generated from
  the paths $\rho\in\FPaths^{\ell}_\MC(s)$
  used for $\Phi_1 \UU^{\ell} \Phi_2$,
  in addition to those so that
  $\bigwedge_{0\leq i\leq \ell} \rho[i] \models \Phi_1$.
  \item\label{it:W-cyl} $\sem{\Phi_1 \WW \Phi_2}_\MC^s$
  is the complement of
  the countable union of cylinders obtained for
  $(\neg\Phi_1 \wedge \neg \Phi_2) \UU \neg\Phi_1$.
\end{enumerate}
  Points~\ref{it:X-cyl} and~\ref{it:Ul-cyl} are directly obtained
  from Definition~\ref{def:semantics}.
  Point~\ref{it:U-cyl} differs
  from point~\ref{it:Ul-cyl} as the semantics definition constrains
  infinite paths in this case, and therefore does not define
  cylinders directly. Instead, we note that every path satisfying
  $\Phi_1 \UU \Phi_2$ admits a finite prefix $\rho$ ensuring
  $\last(\rho) \in \sem{\Phi_2}_\MC \wedge \forall i<|\rho|,
  \rho[i] \in \sem{\Phi_1}_\MC$.
  Points~\ref{it:Wl-cyl} and~\ref{it:W-cyl} use the equivalences
  $\Phi_1 \WW^{\ell} \Phi_2
  \equiv (\Phi_1 \UU^{\ell} \Phi_2) \vee \GG^\ell \Phi_1
  \equiv \neg\left((\neg\Phi_1 \wedge \neg \Phi_2)
  \UU^{\ell} \neg\Phi_1\right)$ to reduce $\cdot \WW^{\ell} \cdot$
  to $\cdot \UU^{\ell} \cdot$
\end{proof}

\begin{proof}[Proof of Lemma~\ref{lm:window-length}]
  Let $\strat': \FPaths_{\MDP} \to \Dist(A)$ be a strategy in the cylinder of
  the window strategy $\stratW_s$.
  Let us show by structural induction on $\Phi,\varphi$ that
  $s\models_{\strat}\Phi \Leftrightarrow s\models_{\strat'}\Phi$ and
  $\mu_{\strat}(\sem{\varphi}_{\MDPstrat}^s) =
  \mu_{\strat'}(\sem{\varphi}_{\MDPtoMC{\MDP}{\strat'}}^s)$.

  We detail the $\Phi_1 \UU^{\ell} \Phi_2$ case, all other cases
  are either trivial or similar.
  Assume that the inductive hypothesis holds for $\Phi_1$ and $\Phi_2$,
  and that $\MDPstrat$ and $\MDPtoMC{\MDP}{\strat'}$
  coincide for the first $\ell + \max(\Ell_1,\Ell_2)$ steps starting from $s$,
  with $\Ell_1$ and $\Ell_2$ the window lengths of $\Phi_1$ and $\Phi_2$.
  Then, consider $\rho\in\IPaths_{\MDP}(s)$ so that
  $\rho[:\ell]\in \FPaths_{\MDPstrat}^{\ell}(s)
  = \FPaths_{\MDPtoMC{\MDP}{\strat'}}^{\ell}(s)$.
  For all $i\leq \ell$, we have that
  $\MDPstrat$ and $\MDPtoMC{\MDP}{\strat'}$
  have the same unfolding of depth $\Ell_1$ starting from $\rho[i]$, so that
  $\rho[i]\in\sem{\Phi_1}_{\MDPstrat}\Leftrightarrow
  \rho[i]\in\sem{\Phi_1}_{\MDPtoMC{\MDP}{\strat'}}$
  by induction hypothesis, and similarly for $\Phi_2$.
  Note that by Lemma~\ref{lm:semantics-cylinders} both
  $\sem{\Phi_1 \UU^{\ell} \Phi_2}_{\MDPstrat}^s$ and $\sem{\Phi_1 \UU^{\ell}
  \Phi_2}_{\MDPtoMC{\MDP}{\strat'}}^s$ are finite unions
  of cylinders generated by paths of length $\ell$. Since $\MDPstrat$
  and $\MDPtoMC{\MDP}{\strat'}$ coincide on the first $\ell$ steps,
  these two sets of paths have the same measure in their respective
  MCs, letting us conclude.
\end{proof}

\begin{proof}[Proof of Proposition~\ref{prop:window-FOR}]
  We say that an $L$-PCTL formula $\Phi_1$ is \emph{contained in} a state formula
  $\Phi_2$ (resp. a path formula $\varphi$) if $\Phi_1$ is a subformula of $\Phi_2$
  (resp. of $\varphi$).
  In order to express $s\models_{\stratW} \Phi$ in the theory of the reals,
  we distinguish three sets of subformulae of $\Phi$,
  based on their position in~$\Phi$.
  On the one hand, the set $\mathcal P$ holds the path formulae
  that are in $\Phi$. On the other hand,
  the state formulae in $\Phi$ are partitioned between two sets
  $\mathcal S_0$ and $\mathcal S_1$, so that state formulae contained
  in some path formula in $\mathcal P$ are sent to $\mathcal S_1$,
  while state formulae that are not subformulae of any path formula
  are sent to $\mathcal S_0$.
  For example, if $\Phi=\Pr{\XX \Pr{\XX^2 p_1}\geq \frac{1}{2}}\leq \frac{1}{2}
  \vee p_2$
  then $\mathcal P=\{\XX \Pr{\XX^2 p_1}\geq \frac{1}{2},\XX^2 p_1\}$,
  $\mathcal S_1=\{\Pr{\XX^2 p_1}\geq \frac{1}{2}, p_1\}$, and
  $\mathcal S_0=\{\Phi,p_2\}$.
  We note that the formula $\Phi$ is always contained in $S_0$.
  Moreover, if a path formula $\XX^{\ell} \Phi_1$
  (resp.~$\Phi_1 \UU^{\ell} \Phi_2$, $\Phi_1 \WW^{\ell} \Phi_2$)
  is in $\mathcal P$, we also add to $\mathcal P$ every variant $\XX^{\ell'} \Phi_1$
  (resp.~$\Phi_1 \UU^{\ell'} \Phi_2$, $\Phi_1 \WW^{\ell'} \Phi_2$)
  with a smaller horizon $0\leq\ell'<\ell$.\footnote{%
    $\XX^{0} \Phi_1$, $\Phi_1 \UU^{0} \Phi_2$ and $\Phi_1 \WW^{0} \Phi_2$ will be used
    for induction purposes, they can be thought of as equivalent to
    $\Phi_1$, $\Phi_2$ and $\Phi_1 \vee \Phi_2$, respectively.}
  Then, we define the following sets of real variables:
\begin{itemize}
  \item A window strategy $\stratW$ for state $s$ is encoded
  using $\mathcal X_s=\{x_{\rho,a}\mid
  \rho\in\FPaths_\MDP^{<\Ell}(s),a\in A\}$,
  so that $x_{\rho,a}$ contains the value of $\stratW(\rho,a)$.
  \item The satisfaction of PCTL state formulae in $\mathcal S_1$
  is encoded using $\mathcal Y_s=\{y_{\rho,\Phi'}\mid
  \rho\in\FPaths_\MDP^{<\Ell}(s), \Phi'\in\mathcal S_1\}$,
  so that $y_{\rho,\Phi'}$ is a boolean variable that equals one if and only if
  $\last(\rho)\models_{\stratW_\rho}\Phi'$, where $\stratW_\rho$
  is a strategy defined by $\stratW_\rho(\rho')=\stratW(\rho\cdot\rho')$ and $\stratW$ is the window strategy defined above.
  In other words, $y_{\rho,\Phi'}$ encodes "$\Phi'$ is satisfied
  when one follows $\stratW$ after a fixed history of $\rho$".
  \item The probability of satisfaction of PCTL path formulae in
  $\mathcal P$
  is encoded using $\mathcal Z_s=\{z_{\rho,\varphi}\mid
  \rho\in\FPaths_\MDP^{<\Ell}(s), \varphi\in\mathcal P\}$,
  so that $z_{\rho,\varphi}=\mu_{\partial_\rho}(
  \sem{\varphi}_{\partial_\rho}^{\last(\rho)})$, where $\stratW_\rho$
  is a strategy defined by $\stratW_\rho(\rho')=\stratW(\rho\cdot\rho')$.
  In other words, $z_{\rho,\varphi}$ encodes the probability that $\varphi$
  is satisfied
  when one follows $\stratW$ after a fixed history of $\rho$.
\end{itemize}

  Then, $s\models_{\partial} \Phi$ can be encoded schematically
  as the formula
  $\exists \mathcal Y_s,\exists\mathcal Z_s,\text{WellDefined} \mathcal X_s  \wedge \text{WellDefined} \mathcal Y_s  \wedge \text{WellDefined} \mathcal Z_s \wedge \STATER^s(\Phi)$,
  where $\STATER^s(\Phi)$ is defined inductively for every formula in $\mathcal S_0$ by
  \begin{align*}
    \STATER^s(p) &:= \top\text{ if }p\text{ holds on }s, \bot\text{ otherwise}\\
    \STATER^s(\neg p) &:= \bot\text{ if }p\text{ holds on }s, \top\text{ otherwise}\\
    \STATER^s(\Phi_1 \wedge \Phi_2) &:= \STATER^s(\Phi_1) \wedge \STATER^s(\Phi_2)\\
    \STATER^s(\Phi_1 \vee \Phi_2) &:= \STATER^s(\Phi_1) \vee \STATER^s(\Phi_2)\\
    \STATER^s\left(\sum_{i=1}^n c_i \Pr{\varphi_i} \succcurlyeq c_0\right) &:=
      \sum_{i=1}^n c_i z_{s,\varphi_i} \succcurlyeq c_0\,.
  \end{align*}
  In order to ensure that the variables in $\mathcal X_s$ are well defined,
  we let $\text{WellDefined} \mathcal X_s$ be the formula asking
  for every $x_{\rho,a}\in \mathcal X_s$ that $0\leq x_{\rho,a}\leq 1$,
  and for every $\rho\in\FPaths_\MDP^{<\Ell}(s)$ that $\sum_{a\in A} x_{\rho,a}=1$.

  In order to ensure that the variables in $\mathcal Y_s$ are well defined,
  we let $\text{WellDefined} \mathcal Y_s$ be the formula asking
  for every $y_{\rho,\Phi'}\in \mathcal Y_s$ that
  $(y_{\rho,\Phi'} = 0 \vee y_{\rho,\Phi'} = 1) \wedge y_{\rho,\Phi'} = 1 \Leftrightarrow \STATE^\rho(\Phi')$,
  where $\STATE^\rho(\Phi')$ is defined for every formula in $\mathcal S_1$ by
  \begin{align*}
    \STATE^\rho(p) &:= \top\text{ if }p\text{ holds on }\last(\rho), \bot\text{ otherwise}\\
    \STATE^\rho(\neg p) &:= \bot\text{ if }p\text{ holds on }\last(\rho), \top\text{ otherwise}\\
    \STATE^\rho(\Phi_1 \wedge \Phi_2) &:= (y_{\rho,\Phi_1}=1) \wedge (y_{\rho,\Phi_2}=1)\\
    \STATE^\rho(\Phi_1 \vee \Phi_2) &:= (y_{\rho,\Phi_1}=1) \vee (y_{\rho,\Phi_2}=1)\\
    \STATE^\rho\left(\sum_{i=1}^n c_i \Pr{\varphi_i} \succcurlyeq c_0\right) &:=
      \sum_{i=1}^n c_i z_{\rho,\varphi_i} \succcurlyeq c_0\,.
  \end{align*}
  Finally, in order to ensure that the variables in $\mathcal Z_s$ are well defined,
  we let $\text{WellDefined} \mathcal Z_s$ be the formula asking
  for every $z_{\rho,\varphi}\in \mathcal Z_s$ that
  $\PATH^\rho(\varphi)$ holds,
  where $\PATH^\rho(\varphi)$ is defined for every formula in $\mathcal P$ as
  the following expression (where $\ell>0$ and $s=\last(\rho)$)
  \begin{align*}
    \PATH^{\rho}(\XX^{0} \Phi_1) &:= z_{\rho,\XX^0 \Phi_1}=y_{\rho,\Phi_1}\\
    \PATH^{\rho}(\XX^\ell \Phi_1) &:= z_{\rho,\XX^\ell \Phi_1}=\sum_{s\xrightarrow{a}s'} x_{\rho,a} \proba(s,a,s')z_{\rho\cdot as',\XX^{\ell-1} \Phi_1}\\
    \PATH^{\rho}(\Phi_1 \UU^{0} \Phi_2) &:= z_{\rho,\Phi_1 \UU^{0} \Phi_2}=y_{\rho,\Phi_2}\\
    \PATH^{\rho}(\Phi_1 \UU^{\ell} \Phi_2) &:= \begin{cases}
    \text{ if }y_{\rho,\Phi_2}=1\text{ then }z_{\rho,\Phi_1 \UU^{\ell} \Phi_2}=1\\
    \text{ if }y_{\rho,\Phi_2}=0\wedge y_{\rho,\Phi_1}=0\text{ then }z_{\rho,\Phi_1 \UU^{\ell} \Phi_2}=0\\
    \text{ otherwise }z_{\rho,\Phi_1 \UU^{\ell} \Phi_2}=\sum\limits_{s\xrightarrow{a}s'} x_{\rho,a} \proba(s,a,s')z_{\rho\cdot as',\Phi_1 \UU^{\ell-1} \Phi_2}
    \end{cases}\\
    \PATH^{\rho}(\Phi_1 \WW^{0} \Phi_2) &:= \begin{cases}
    \text{ if }y_{\rho,\Phi_1}=1\vee y_{\rho,\Phi_2}=1\text{ then }z_{\rho,\Phi_1 \WW^{0} \Phi_2}=1\\
    \text{ otherwise }z_{\rho,\Phi_1 \WW^{0} \Phi_2}=0
    \end{cases}\\
     \PATH^{\rho}(\Phi_1 \WW^{\ell} \Phi_2) &:= \begin{cases}
     \text{ if }y_{\rho,\Phi_2}=1\text{ then }z_{\rho,\Phi_1 \WW^{\ell} \Phi_2}=1\\
     \text{ if }y_{\rho,\Phi_2}=0\wedge y_{\rho,\Phi_1}=0\text{ then }z_{\rho,\Phi_1 \WW^{\ell} \Phi_2}=0\\
     \text{ otherwise }z_{\rho,\Phi_1 \WW^{\ell} \Phi_2}=\sum\limits_{s\xrightarrow{a}s'} x_{\rho,a} \proba(s,a,s')z_{\rho\cdot as',\Phi_1 \WW^{\ell-1} \Phi_2}
     \end{cases}
  \end{align*}
  These constraints ensure by induction that $z_{\rho,\varphi}$ encodes
  the probability that $\varphi$ is satisfied
  when one follows $\stratW$ after a fixed history of $\rho$.
  Note that conditional formulae such as $\PATH^{\rho}(\Phi_1 \UU^{\ell} \Phi_2)$
  can be defined in the theory of the reals, as a single equality
  of the form $z_{\rho,\Phi_1 \UU^{\ell} \Phi_2}=y_{\rho,\Phi_2}+(1-y_{\rho,\Phi_2})y_{\rho,\Phi_1}\sum \dots$

  The $\exists$-$\R$ formula that we obtain has at most
  $|\MDP|^{\mathcal O(\Ell)}$ free variables,
  and $|\Phi||\MDP|^{\mathcal O(\Ell)}$ variables quantified existentially.
  Overall, the formula has size in
  $|\Phi||\MDP|^{\mathcal O(\Ell)}$.

  To conclude the proof of Proposition~\ref{prop:window-FOR},
  note that the only negation in our formula comes from the
  $y_{\rho,\Phi'} = 1 \Leftrightarrow \STATE^\rho(\Phi')$ subformula
  of $\text{WellDefined} \mathcal Y_s$,
  equivalent in this case to $(y_{\rho,\Phi'} = 0 \vee \STATE^\rho(\Phi'))
  \wedge (y_{\rho,\Phi'} = 1 \vee \neg \STATE^\rho(\Phi'))$.
  If we assume that $\Phi$ is flat, $\mathcal S_1$ only contains formulae
  that use $\wedge$, $\vee$ and atomic propositions, so that
  $\neg \STATE^\rho(\Phi')$ is equivalent to either $\top$ or $\bot$.
  If in addition to being flat, $\Phi$ is also non-strict, then
  every $\succcurlyeq$ in linear inequalities is in fact $\geq$,
  so that our entire $\exists$-$\R$ formula is non-strict.
\end{proof}

\begin{proof}[Proof of Lemma~\ref{lm:fixed-points}]
  We only need to show that $\f$ is Scott-continuous to apply the classical
  fixpoint theorems.
  Let $\mathcal S$ be a directed set of portfolios,
  so that every pair of element in $\mathcal S$ has an upper bound
  and a lower bound in $\mathcal S$.
  Because intersections and unions transfer to the limit
  of a sequence of inclusions, it holds that
  $\bigsqcap \mathcal S$ and $\bigsqcup \mathcal S$ both belong to $\mathcal S$.
  We show that $\f$ is Scott-continuous upwards
  and downwards,\textit{i.e.}~for all $s$, we have
  $\bigsqcup_{\Pi\in \mathcal S} \f(\Pi) =
  \f(\bigsqcup \mathcal S)$ and $\bigsqcap_{\Pi\in \mathcal S} \f(\Pi)=
  \f(\bigsqcap \mathcal S)$.

  Fix a state $s$. A window strategy $\stratW$ is in $\f(\bigsqcap \mathcal S)_s$
  if for all $s\xrightarrow{a}s'$ there exists
  $\stratW'\in \bigcap_{\Pi\in \mathcal S} \Pi_{s'}$ so that $\stratW$
  and $\stratW'$ are compatible w.r.t.~$s\xrightarrow{a}s'$.
  It follows that $\stratW$ is in $\bigcap_{\Pi\in \mathcal S} \f(\Pi)_s$,
  \textit{i.e.}~for all $\Pi\in \mathcal S$,
  for all $s\xrightarrow{a}s'$ there exists
  $\stratW'\in \Pi_{s'}$ so that $\stratW$
  and $\stratW'$ are compatible w.r.t.~$s\xrightarrow{a}s'$.
  Thus $\f(\bigsqcap \mathcal S)\subseteq \bigsqcap_{\Pi\in \mathcal S} \f(\Pi)$.
  Moreover, $\bigsqcap \mathcal S\in \mathcal S$ implies
  $\bigsqcap_{\Pi\in \mathcal S} \f(\Pi)\subseteq \f(\bigsqcap \mathcal S)$.

  Similarly, a window strategy $\stratW$ is in $\bigsqcup_{\Pi\in \mathcal S}
  \f(\Pi)$ if there exists $\Pi\in \mathcal S$ so that,
  for all $s\xrightarrow{a}s'$ there exists
  $\stratW'\in \Pi_{s'}$ so that $\stratW$
  and $\stratW'$ are compatible w.r.t.~$s\xrightarrow{a}s'$.
  It follows that $\stratW$ is in $\f(\bigsqcup \mathcal S)$,
  \textit{i.e.}~for all $s\xrightarrow{a}s'$ there exists
  $\stratW'\in \bigcup_{\Pi\in \mathcal S} \Pi_{s'}$ so that $\stratW$
  and $\stratW'$ are compatible w.r.t.~$s\xrightarrow{a}s'$.
  Moreover, $\bigsqcup \mathcal S\in \mathcal S$ implies
  $\f(\bigsqcup \mathcal S) \subseteq \bigsqcup_{\Pi\in \mathcal S}
  \f(\Pi)$.

  In particular, Scott-continuity implies that $\f$ is monotone:
  $\Pi\subseteq\Pi' \Rightarrow \f(\Pi)\subseteq\f(\Pi')$.
\end{proof}

\begin{proof}[Proof of Proposition~\ref{prop:fixed-point}]
  In order to prove this result we introduce extra notions and lemmas.
  A \emph{window strategy assignment} $\mathcal A$ maps
  every path $\rho$ in $\FPaths_\MDP$ to a window strategy for $s=\last(\rho)$,
  so that %
  for all transitions $s\xrightarrow{a}s'$, $\mathcal A(\rho)$
  and $\mathcal A(\rho\cdot a s')$ are compatible w.r.t.~$s\xrightarrow{a}s'$.
  Intuitively, it can be seen as an extended strategy, that states for a given
  history what the next $\Ell$ decisions will be. The compatibility
  assumption then ensures that these decisions are indeed carried out later on.

  A strategy $\strat$ naturally defines a window strategy assignment
  $\mathcal A^{\strat}$, where $\mathcal A^{\strat}(\rho)$ is
  the window strategy $\stratW_\rho$ defined by $\strat$
  for the fixed prefix $\rho$ and horizon $\Ell$.
  Conversely, a window strategy assignment $\mathcal A$ naturally defines
  a strategy $\strat$ so that for all $\rho$ in $\FPaths_\MDP$,
  the distribution $\strat(\rho)$ equals $\mathcal A(\rho)(\last(\rho))$,
  the first decision of $\mathcal A(\rho)$. Then,
  $\mathcal A=\mathcal A^\strat$.
  Note that a state $s_0$ satisfies $\Forall \GG \Phi$
  under $\sigma$ if and only if for every path $\rho\in\FPaths_\strat(s_0)$
  ending in a state $s$, it holds that $s\models_{\mathcal A(\rho)}\Phi$.
  In particular, there is no requirement for paths that are not reachable
  from $s_0$ when following $\strat$. %

  For a strategy $\strat$ and an initial state $s_0$,
  we define the portfolio $\Pi^{s_0,\strat}$,
  where $\Pi^{s_0,\strat}_s=\{\mathcal A^{\strat}(\rho) \mid
  \rho\in\FPaths_\strat(s_0), \last(\rho)=s\}$.
  In particular, $\Pi^{s_0,\strat}_s$ is empty on any state
  not reachable from $s_0$ under $\strat$, and is
  non-empty on all other states.
  Intuitively, $\Pi^{s_0,\strat}_s$ is the set of window strategies obtained
  for state $s$ and horizon $\Ell$ from fixed prefixes
  of non-zero probability in $\strat$.

\begin{lemma}\label{lm:start-to-fixed-point}
  Let $s_0$ be a state and let $\strat$ be a strategy.
  If $s_0 \models_{\strat} \Forall \GG \Phi$,
  then $\Pi^{s_0,\strat}$ is a fixed point of $\f$ in $Q^\Phi$
  so that $\Pi^{s_0,\strat}_{s_0}\neq\emptyset$.
\end{lemma}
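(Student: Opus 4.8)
The plan is to verify, in turn, the three parts of the statement: that $\Pi^{s_0,\strat} \in Q^\Phi$, that $\Pi^{s_0,\strat}_{s_0}\neq\emptyset$, and that $\Pi^{s_0,\strat}$ is a fixed point of $\f$. All three follow rather directly from the definition of $\Pi^{s_0,\strat}$, of the window strategy assignment $\mathcal A^{\strat}$ it is built from (a portfolio of window strategies of horizon $\Ell$), and of the operator $\f$.

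For the membership $\Pi^{s_0,\strat} \in Q^\Phi$, recall that every window strategy in $\Pi^{s_0,\strat}_s$ has the form $\mathcal A^{\strat}(\rho)$ for some $\rho \in \FPaths_\strat(s_0)$ with $\last(\rho) = s$. By the characterisation noted just above (which, via Lemma~\ref{lm:window-length}, equates $s_0 \models_{\strat}\Forall\GG\Phi$ with the property that $s \models_{\mathcal A^{\strat}(\rho)}\Phi$ for every such $\rho$), the assumption $s_0 \models_{\strat}\Forall\GG\Phi$ yields $s \models_{\mathcal A^{\strat}(\rho)}\Phi$. Hence every window strategy in $\Pi^{s_0,\strat}_s$ ensures $\Phi$ on $s$, which is exactly the defining condition of $Q^\Phi$. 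Non-emptiness at $s_0$ is immediate, since the length-$0$ path $s_0$ lies in $\FPaths_\strat(s_0)$, so $\mathcal A^{\strat}(s_0)\in\Pi^{s_0,\strat}_{s_0}$.

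For the fixed-point property, note first that $\f(\Pi^{s_0,\strat})\subseteq\Pi^{s_0,\strat}$ is automatic, as $\f$ only removes window strategies from a portfolio; so it remains to show $\Pi^{s_0,\strat}\subseteq\f(\Pi^{s_0,\strat})$. Fix $\stratW = \mathcal A^{\strat}(\rho)\in\Pi^{s_0,\strat}_s$ with $\rho\in\FPaths_\strat(s_0)$ and $\last(\rho)=s$, and fix a transition $s\xrightarrow{a}s'$. If $\stratW(s,a)=0$ then $\stratW$ is compatible with $\Pi^{s_0,\strat}_{s'}$ w.r.t.~$s\xrightarrow{a}s'$ by definition. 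Otherwise $\strat(\rho,a)=\stratW(s,a)>0$ and $\proba(s,a,s')>0$, so the path $\rho\cdot a s'$ has non-zero probability under $\strat$; therefore $\rho\cdot a s'\in\FPaths_\strat(s_0)$ and $\mathcal A^{\strat}(\rho\cdot a s')\in\Pi^{s_0,\strat}_{s'}$. By the defining compatibility requirement of the window strategy assignment $\mathcal A^{\strat}$, the window strategies $\mathcal A^{\strat}(\rho)$ and $\mathcal A^{\strat}(\rho\cdot a s')$ are compatible w.r.t.~$s\xrightarrow{a}s'$, so again $\stratW$ is compatible with $\Pi^{s_0,\strat}_{s'}$. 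As this holds for every outgoing transition of $s$, the window strategy $\stratW$ is kept by $\f$, proving $\Pi^{s_0,\strat}\subseteq\f(\Pi^{s_0,\strat})$. I do not expect a real obstacle here: the only delicate point is to observe, in the last step, that compatibility of $\stratW$ with the \emph{set} $\Pi^{s_0,\strat}_{s'}$ is witnessed exactly by the continuation $\mathcal A^{\strat}(\rho\cdot a s')$ that $\strat$ itself follows, which belongs to the portfolio precisely because the transition is taken with positive probability; everything else is an unfolding of the relevant definitions.
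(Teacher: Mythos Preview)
Your proposal is correct and follows essentially the same approach as the paper's proof: non-emptiness via $\mathcal A^{\strat}(s_0)$, membership in $Q^\Phi$ via Lemma~\ref{lm:window-length}, and the fixed-point property by observing that $\f(\Pi^{s_0,\strat})\subseteq\Pi^{s_0,\strat}$ holds by definition of $\f$ while the reverse inclusion is witnessed by the continuations $\mathcal A^{\strat}(\rho\cdot a s')$. Your write-up is somewhat more explicit about the case split on $\stratW(s,a)=0$, but the argument is the same.
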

\begin{proof}[Proof of Lemma~\ref{lm:start-to-fixed-point}]
  First, note that $\Pi^{s_0,\strat}_{s_0}$ cannot be empty,
  as it must contain at least $\mathcal A^\sigma(s_0)$.
  We show that $\Pi^{s_0,\strat}\in Q^\Phi$
  and $\f(\Pi^{s_0,\strat})=\Pi^{s_0,\strat}$.
  By definition of $\Forall \GG \Phi$ and Lemma~\ref{lm:window-length},
  every window strategy
  $\stratW_\rho=\mathcal A^{\strat}(\rho)$
  (defined from $\strat$ by a prefix $\rho$ of non-zero probability in $\strat$)
  satisfies $\Phi$, so that $\Pi^{s_0,\strat}\in Q^\Phi$.
  For every transition $s\xrightarrow{a}s'$
  so that $\stratW_\rho(s,a)>0$,
  we have that $\stratW_\rho$ and $\mathcal A^{\strat}(\rho\cdot as')\in\Pi^{s_0,\strat}_{s'}$
  are compatible w.r.t.~$s\xrightarrow{a}s'$,
  and hence $\stratW_\rho\in \f(\Pi^{s_0,\strat})$.
  Thus $\Pi^{s_0,\strat}\subseteq\f(\Pi^{s_0,\strat})$,
  and the other inclusion holds by definition of $\f$.
\end{proof}

\begin{lemma}\label{lm:fixed-point-to-strat}
  From every fixed point $\Pi$ of $\f$ in $Q^\Phi$ that is
  non-empty on a state $s_0$, we can extract a
  window strategy assignment $\mathcal A$ that defines a strategy $\sigma$
  so that $s_0 \models_{\strat} \Forall \GG \Phi$.
\end{lemma}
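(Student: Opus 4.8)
The plan is to construct $\mathcal A$ greedily along the (countable) tree $\FPaths_\MDP$ of finite paths, invoking dependent choice: at every step that is reached with positive probability from $s_0$ I will stay inside $\Pi$, using the equation $\f(\Pi)=\Pi$ to know this is possible, and at every other step I will extend $\mathcal A$ arbitrarily, using the fact that the compatibility condition defining a window strategy assignment is vacuous for moves of probability $0$.

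Concretely, I would define $\mathcal A$ by induction on path length. For a length-$0$ path, i.e.\ a state $t$: if $t=s_0$, pick $\mathcal A(t)$ to be any element of the nonempty set $\Pi_{s_0}$; otherwise let $\mathcal A(t)$ be an arbitrary window strategy for $t$. For the inductive step, assume $\mathcal A(\rho)$ is defined with $s:=\last(\rho)$, and fix a transition $s\xrightarrow{a}s'$. If $\mathcal A(\rho)\in\Pi_s$ and $\mathcal A(\rho)(s,a)>0$, then since $\Pi=\f(\Pi)$ the window strategy $\mathcal A(\rho)$ is compatible with $\Pi_{s'}$ w.r.t.\ $s\xrightarrow{a}s'$, and because $\mathcal A(\rho)(s,a)>0$ this yields some $\stratW'\in\Pi_{s'}$ compatible with $\mathcal A(\rho)$ w.r.t.\ $s\xrightarrow{a}s'$; set $\mathcal A(\rho\cdot as'):=\stratW'$. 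In every other case, set $\mathcal A(\rho\cdot as')$ to be any window strategy for $s'$ compatible with $\mathcal A(\rho)$ w.r.t.\ $s\xrightarrow{a}s'$ — such a strategy always exists, since if $\mathcal A(\rho)(s,a)=0$ compatibility holds for free, and otherwise it suffices to copy the finitely many relevant sub-decisions of $\mathcal A(\rho)$ and complete them arbitrarily. By construction $\mathcal A$ meets the compatibility requirement on every transition, hence is a window strategy assignment; let $\sigma$ be the strategy it defines, $\sigma(\rho)=\mathcal A(\rho)(\last(\rho))$.

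The key invariant, established by induction on $|\rho|$ together with the construction, is that every $\rho\in\FPaths_\MDP(s_0)$ with $\prod_{i<|\rho|}\mathcal A(\rho[{:}i])(\rho[i],a_i)>0$ satisfies $\mathcal A(\rho)\in\Pi_{\last(\rho)}$: this holds for $\rho=s_0$, and along such a path the extension $\rho\cdot as'$ has positive product only when additionally $\mathcal A(\rho)(\last(\rho),a)>0$, in which case (using $\mathcal A(\rho)\in\Pi_{\last(\rho)}$) the first case of the construction applies and keeps us in $\Pi$. Since $\Pi\in Q^\Phi$, the invariant gives $\last(\rho)\models_{\mathcal A(\rho)}\Phi$ for every such $\rho$. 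Now $\sigma(\rho)(a)=\mathcal A(\rho)(\last(\rho),a)$, so the finite paths of non-zero probability under $\sigma$ starting from $s_0$ — namely $\FPaths_\sigma(s_0)$ — are exactly the paths $\rho\in\FPaths_\MDP(s_0)$ with $\prod_{i<|\rho|}\mathcal A(\rho[{:}i])(\rho[i],a_i)>0$. Hence every $\rho\in\FPaths_\sigma(s_0)$ ending in a state $s$ satisfies $s\models_{\mathcal A(\rho)}\Phi$. Using the identity $\mathcal A=\mathcal A^\sigma$ and the characterisation of $\Forall\GG\Phi$ recorded earlier in this proof — $s_0\models_\sigma\Forall\GG\Phi$ iff every $\rho\in\FPaths_\sigma(s_0)$ ending in $s$ satisfies $s\models_{\mathcal A(\rho)}\Phi$ — I conclude $s_0\models_\sigma\Forall\GG\Phi$.

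The main obstacle is the bookkeeping around off-support paths. A window strategy assignment must assign compatible window strategies on \emph{every} transition of $\MDP$, including those leaving states $s$ with $\Pi_s=\emptyset$ or states unreachable from $s_0$, whereas the containment in $\Pi$ needed to derive satisfaction of $\Phi$ is only available on the support of $\sigma$ from $s_0$; these two requirements are reconciled exactly by the observation that compatibility w.r.t.\ a transition is automatic whenever the chosen move has probability $0$, so the arbitrary completions off-support cost nothing, together with the already-granted identity $\mathcal A=\mathcal A^\sigma$ that transfers the per-path satisfaction to the global window formula.
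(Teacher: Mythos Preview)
Your proof is correct and follows essentially the same route as the paper: build $\mathcal A$ by induction on path length, seed with an element of $\Pi_{s_0}$, use $\Pi=\f(\Pi)$ at each on-support step to pick a compatible successor in $\Pi_{s'}$, and conclude via $\Pi\in Q^\Phi$ together with the characterisation of $\Forall\GG\Phi$ through window strategy assignments. If anything, you are more careful than the paper on the off-support bookkeeping: the paper simply declares $\mathcal A$ to be ``arbitrarily defined on every other path, e.g.\ with a uniform strategy'' without checking the compatibility clause of a window strategy assignment, whereas you explicitly note that compatibility is vacuous when $\mathcal A(\rho)(s,a)=0$ and otherwise can be met by copying the relevant sub-decisions of $\mathcal A(\rho)$.
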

\begin{proof}[Proof of Lemma~\ref{lm:fixed-point-to-strat}]
  Assume $\Pi_{s_0}\neq \emptyset$. We construct
  for each path $\rho$ in $\MDP$ %
  a window strategy $\stratW_\rho$,
  and set $\mathcal A(\rho)=\stratW_\rho$.
  We proceed by induction on the length
  of $\rho$, and only define $\stratW_\rho$ for paths $\rho$
  starting from $s_0$ with non-zero probability in $\sigma$,
  the strategy defined by $\mathcal A$.
  The assignment $\mathcal A$ can be arbitrarily defined on every other path, %
  \textit{e.g.}~with an uniform strategy.
  For all paths $\rho$ from $s_0$ to a state $s$ of non-zero
  probability under $\sigma$,
  we ensure that $\stratW_\rho\in\Pi_{s}$.
  Then, $\stratW_{\rho}\in \Pi_{s}$ and $\Pi\in Q^\Phi$ implies
  $s \models_{\stratW_{\rho}} \Phi$.
  Overall, we obtain that $s_0$ satisfies $\Forall \GG \Phi$
  when following the assignment $\mathcal A$.
  Let us now describe the induction.
  For the first move, we pick $\stratW_{s_0}$ as
  any window strategy in $\Pi^\Phi_{s_0}\neq\emptyset$.
  Assume that $\stratW_\rho$ is defined for a path $\rho$ that ends in $s$,
  and that $\proba(s,a,s')>0$, and that $\stratW_\rho(s,a)>0$. We define
  $\stratW_{\rho \cdot a s'}$ as follows.
  As $\stratW_\rho\in \Pi_s = \f(\Pi)_s$ (fixed point),
  we know by definition of $\f$ that
  there exists $\stratW'\in\Pi_{s'}$ such that $\stratW_\rho$ and
  $\stratW'$ are compatible w.r.t.~$s\xrightarrow{a}s'$.
  We then let $\stratW_{\rho \cdot a s'} = \stratW'$.
\end{proof}

As a corollary of Lemmas~\ref{lm:start-to-fixed-point}
and \ref{lm:fixed-point-to-strat}, we obtain Proposition~\ref{prop:fixed-point}.
\end{proof}

\begin{proof}[Proof of Theorem~\ref{thm:coRE-easy}]
  Let $n=|\mathcal X_{s_0}|$.
  We start by showing that the limit of an infinite decreasing sequence of non-empty
  compact sets in $\R^n$ is non-empty.
  Consider a sequence $(\mathcal S_i)_{i\in\N}$ of non-empty
  compact sets in $\R^{n}$, so that for all $i\in\N$,
  $\mathcal S_{i+1}\subseteq \mathcal S_{i}$.
  Let us prove $\bigcap_{i\in\N} \mathcal S_i\neq\emptyset$.
  For each $i\in\N$, we can pick some point $v_i\in\R^{n}$ in $\mathcal S_i$,
  as it is not empty.
  Consider now the sequence $(v_i)_{i\in\N}$. Since $(\mathcal S_i)_{i\in\N}$
  is decreasing, every $v_i$ belongs to $\mathcal S_0$.
  Since $\mathcal S_0$ is a compact set, there exists a convergent subsequence
  $(v_{n_j})_{j\in\N}$ which converges to a point $v$ in $\mathcal S_0$.
  Let us show that $v\in\mathcal S_i$ for every $i\in\N$, so that
  $v\in\bigcap_{i\in\N} \mathcal S_i$. Fix $i\in\N$.
  For all $j\geq i$, it holds that $n_j\geq j\geq i$. Therefore, the subsequence $(v_{n_j})_{j\geq i}$ is in the compact set $\mathcal S_i$,
  and its limit $v$ must belong to $\mathcal S_i$.
  Thus, $\bigcap_{i\in\N} \mathcal S_i\neq\emptyset$.

  Therefore, if there is no strategy $\sigma$ so that
  $s \models_{\strat} \Forall \GG \Phi$ then
  there exists $i\in\N$ so that $\f^i(\bigsqcup Q^\Phi)_{s_0}=\emptyset$.
  As $\f^i(\bigsqcup Q^\Phi)_{s_0}=\emptyset$
  is expressible in the theory of the reals for all $i \in \N$,
  and is therefore decidable,
  we can enumerate $i$ and obtain a semi-algorithm for the problem asking
  if there is no strategy $\strat$ so that
  $s_0 \models_{\strat} \Forall \GG \Phi$. Then, the complement
  of the synthesis problem for flat, non-strict global window $L$-PCTL
  formulae is in $\RE$.
\end{proof}

\begin{proof}[Proof of Theorem~\ref{thm:satisfiability}]
  Given a bound $N$ and a global window formula $\Forall \GG \Phi$
  over atomic propositions $\AP$,
  we construct an MDP
  $\MDP=\zug{S,A,\initState, \proba, \AP\uplus\{\initState\}, L}$ and a window formula $\Phi'$
  so that there is a deterministic strategy in $\MDP$ that satisfies $\Forall \GG \Phi'$
  if and only if there exists an MC of granularity bounded by $N$ that satisfies $\Forall \GG \Phi$.

  We note that if an MC $M$ has granularity bounded by $N$ then
  all probabilities in $M$ are multiples of $\frac{1}{N!}$,
  so that the outdegree of $M$ is bounded by $N!$.
  Let $\Dist_N$ denote the finite set of distributions, on the set $2^\AP\times\{1,\dots,N!\}$, that satisfy the following constraints.
  If $d\in\Dist_N$, then every probability in $d$ must be of granularity
  bounded by $N$. Moreover, there exists $k\leq N!$ and $s_1\dots s_k\in 2^\AP$
  so that the support of $d$ equals $\{(s_1,1),\dots,(s_k,k)\}$, so that the support of $d$
  describes an ordered sequence of elements of $2^\AP$.
  Our MDP $\MDP$ has states $S=\{\initState\}\uplus (2^\AP\times\{1,\dots,N!\})$,
  initial state $\initState$, actions $A=\Dist_N\uplus (2^\AP\times\{1,\dots,N!\})$ and a labelling function
  that maps $\initState$ to $\initState$ and every state $(s,i)$ to $s$.
  We assume w.l.o.g. that the actions in $(2^\AP\times\{1,\dots,N!\})$ are available for $\initState$ only,
  while the actions in $\Dist_N$ are available for every other state.
  The transition relation is defined from the actions, so that for every state $(s,i)$ in $S$
  and every action $d$ in $\Dist_N$, $\proba((s,i),d)=d$.
  In the initial state the action picks a state $(s,i)$ to go to with probability $1$.

  On the one hand, every deterministic strategy in $\MDP$ defines an MC
  of granularity bounded by $N$.
  On the other hand, every MC $M$ of granularity bounded by $N$
  can be unfolded into a tree of states $(l,i)\in 2^\AP\times\{1,\dots,N!\}$,
  so that $l$ is the label of the corresponding state in $M$ and $i$
  is a "siblings index", so that the children $s_1\dots s_k$ of a node
  in $M$ become $(s_1,1),\dots,(s_k,k)$ in the unfolding.
  Then, for every MC $M$ of granularity bounded by $N$
  there exists a deterministic strategy in $\MDP$ that mimics it, in the sense that
  the first decision goes from $\initState$ to $(l,1)$
  with $l$ the label of the initial state of $M$, and the following decisions
  construct the unfolding of $M$ mentioned above by picking the appropriate distributions
  in $\Dist_N$.

  If we consider the formula $\Phi'$ defined as $\initState \vee \Phi$,
  it follows that there exists a deterministic $\sigma$ so that
  $\initState\models_\sigma \Forall\GG \Phi'$ in $\MDP$ if and only if
  there exists an MC $M$ of granularity bounded by $N$
  so that $M\models \Forall\GG \Phi$.
  We note that $|\MDP|$ is exponential in $|\Phi|$ and $N$, while $|\Phi'|=O(|\Phi|)$.
  Applying the procedure of Proposition~\ref{prop:HD-global-window} will then have complexity in $2^{|\MDP|^{\mathcal O(|\Phi'|)}}$, and therefore doubly-exponential in $|\Phi|$ and $N$.

  When we translate between strategies and MCs
  in this reduction, we can make the observation that finite MCs
  correspond to a particular kind of deterministic strategies: so called finite-memory strategies,
  whose behaviour can be described by a deterministic finite automaton.
  Moreover, the decision procedure from Proposition~\ref{prop:HD-global-window}
  always yield finite memory strategies, as it suffices to remember
  the fixed point $\Pi^\Phi$ and the last $\Ell$ transitions in order to determine
  what the next decision should be.
  Overall, our approach always produces finite MCs.
\end{proof}

\section{Probabilistic noninterference.} \label{app:noninterference}
Hyperproperties~\cite{CS10} consist of specifications that describe the correctness of a system with respect to multiple executions.
Probabilistic noninterference and observational determinism \cite{GM82,ZM03} are examples of hyperproperties that concerns information leaks through cohort channels.
We consider a specific such example from \cite{AB18} which states that a low-privileged user (e.g. an attacker) should not be able to distinguish two computations which differ in their inputs by high-privileged users.
The probabilistic noninterference property requires that given a Markov chain and two equivalent initial states as observed by a low-privileged user in terms of their information content, an equivalence in terms of information level is maintained throughout the computations starting from such initial states.

To formally express probabilistic noninterference, we consider an atomic proposition $l$ denoting low-information observability. Thus for some states, the proposition $l$ holds true, while for the rest, the proposition $l$ does not hold true.
Given a Markov chain $M$ with a set $S$ of states, a set ${\sf AP}=\{l\}$ of the single atomic proposition $l$, a labelling function $L: S \rightarrow 2^{\sf AP}$, to express probabilistic noninterference, we construct a new Markov chain $\widehat{M}$ which consists of the product of $M$ with itself, and a new initial state $\widehat{s}_{\sf init}$ with transitions to all other states of $\widehat{M}$ with equal probabilities.
The other states of $\widehat{M}$ are of the form $(s_i, s_j')$ for $s_i, s_j \in S$.
Let ${\sf AP}' = \{l'\}$, and $L' : S \rightarrow 2^{{\sf AP}'}$
is defined as $L'(s) = \{l'\}$ if $L(s) = l$, else $L'(s)=\emptyset$.
Now for a state $(s_i, s'_j)$ of $\widehat{M}$, we define the labelling function $\widehat{L}(s_i,s'_j) = L(s_i) \cup L'(s_j)$, and for the initial state $\widehat{s}_{\sf init}$, we have $\widehat{L}(\widehat{s}_{\sf init})=\emptyset$.
Thus probabilistic noninterference can be defined by the formula $\Phi_{\sf pni}$ which is as follows.
\[
\Phi_{\sf pni}= \PrBig{\XX \: [l \wedge l' \implies [\Pr{(\XX \: l)}=\Pr{(\XX \: l')] \:} }=1
\]
In the context of an MDP, one can ask the question if for all possible attack policies of an attacker, $\Phi_{\sf pni}$ is \emph{not} violated.
Stated otherwise, given an MDP $\MDP$, we ask if there exists a strategy $\sigma$ of the attacker such that for traces starting from some pair of initial states that are low-equivalent, the probabilistic noninterference property is violated, i.e. $\neg \Phi_{\sf pni}$ holds true.

\section{Complements of Section~\ref{sec:undec}}
\label{app:undecidability}

To each instruction, corresponds one gadget. We also need one additional gadget to initialize the values of the counters to 0. Overall, the reduction takes polynomial time in the number of instructions of the Minsky machine. Furthermore, the maximum window length used in all the reduction is 4, regardless of the Minsky machine taken as input.

In the following, when depicting gadgets, we will use the notations $\sigma_1(\cdot)$ or $\sigma_2(\cdot)$ to denote a choice of the strategy $\sigma$ to take a given action. Note that this is informal as a strategy $\strat$ has type $\strat: \FPaths_{\MDP} \to \Dist(A)$. In all occurrences of $\sigma_1(\cdot)$ or $\sigma_2(\cdot)$, the choice of the strategy depends on the complete history, not only on the current gadget it is in.

In the remainder of this section, we describe each gadget (besides the ones already presented in the main part of the paper) used in the reduction.

\subsection{Proof for flat non-strict formula}
\label{app:undecidability_flat_non_strict}
\subsubsection{Initial gadget}
Let us consider the initial gadget $\mathsf{Gd}_\mathsf{init}$ which is depicted in Figure~\ref{InitialGadget}. This gadget  initializes the values of the counters to 0. That is, its exiting probability is $\frac{5}{6}$. This is stated in the proposition below.

\begin{figure}
	\centering
	\includegraphics[scale=0.9]{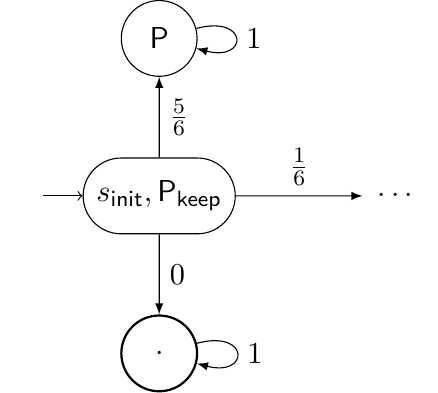}
	\caption{The initial gadget $\mathsf{Gd}_\mathsf{init}$.}
	\label{InitialGadget}
\end{figure}

\begin{proposition}[Exiting probability of the initial gadget]
	The exiting probability $p_{\mathsf{Gd}_{\mathsf{init}}}^{\mathsf{ex}}$ of the initial gadget $\mathsf{Gd}_\mathsf{init}$ is equal to $p_{\mathsf{Gd}_{\mathsf{init}}}^{\mathsf{ex}} = \frac{5}{6} = p(0,0)$.
	\label{prop:initial_gadget}
\end{proposition}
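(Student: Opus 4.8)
The plan is to read the exiting probability directly off the structure of $\mathsf{Gd}_\mathsf{init}$ in Figure~\ref{InitialGadget} and to check that it equals $\frac{5}{6}$, which by the definition $p(x_2,x_3)=\frac{5}{6}\cdot\frac{1}{2^{x_2}}\cdot\frac{1}{3^{x_3}}$ is exactly $p(0,0)$.

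Recall that $p_{\mathsf{Gd}_\mathsf{init}}^{\mathsf{ex}}$ is the probability $\mathbb{P}_s(\FF^{1}\mathsf{P})$ of seeing the predicate $\mathsf{P}$ in one step from the last state $s$ of the gadget (Figure~\ref{ExitGadget}). Unlike the instruction gadgets, $\mathsf{Gd}_\mathsf{init}$ has no predecessor gadget, so the statement is unconditional: there is no entering probability to assume, and the exiting probability is entirely determined by the hard-wired transition probabilities of $\mathsf{Gd}_\mathsf{init}$ (together with a local formula of the same flavour as $\Phi_{0,1}$, should the gadget contain a choice of $\sigma$ that needs to be forced). I would therefore follow the essentially unique run through $\mathsf{Gd}_\mathsf{init}$ from $\initState$ to $s$, multiply the transition probabilities encountered together with the last step into the $\mathsf{P}$-state, and verify that the product collapses to $\frac{5}{6}$. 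This is a finite computation on a fixed MDP fragment, independent of the Minsky machine $\mathsf{Msk}$.

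Since $\frac{5}{6}=p(0,0)$, this shows that $\mathsf{Gd}_\mathsf{init}$ correctly initialises both counters to $0$; combined with Proposition~\ref{prop:gadget_well_placed}, under any strategy $\sigma$ satisfying $\Forall\GG\,\Phi_\mathsf{keep}$ the first instruction gadget then starts with entering probability $p(0,0)$, as required by the gadget specifications (Propositions~\ref{prop:incrementing_gadget} and~\ref{prop:testing_gadget}). The only point requiring a little care, should $\mathsf{Gd}_\mathsf{init}$ involve a choice of $\sigma$, is to ensure that this choice is genuinely \emph{forced} rather than merely feasible, so that the exiting probability is necessarily $\frac{5}{6}$; this follows from the same argument already used for the increment and zero-test gadgets.
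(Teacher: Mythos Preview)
The paper provides no explicit proof of this proposition; it is stated as immediate from Figure~\ref{InitialGadget}, which hard-wires the exiting transition so that $p_{\mathsf{Gd}_\mathsf{init}}^{\mathsf{ex}}=\frac{5}{6}$. Your plan to read the exiting probability directly off the gadget and verify $\frac{5}{6}=p(0,0)$ is exactly this implicit argument; the extra caution about a possibly forced choice of $\sigma$ is harmless but unnecessary here, since the initial gadget contains no strategic choice.
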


Note that the gadget visited after the initial gadget is the one corresponding to the first instruction. The sequence of gadgets visited then exactly corresponds to the sequence of instructions seen in the Minsky machine.

\subsubsection{Gadget testing that the entering probability is 5/6}
Let us now consider the gadget $\mathsf{Gd}_{p = \frac{5}{6}}$ testing that the values of the counters is 0, i.e. that the entering probability is $\frac{5}{6}$. Note that this gadget is used to test that the value of counters is 0 or that it is positive. The gadget $\mathsf{Gd}_{p = \frac{5}{6}}$ is depicted in Figure~\ref{GadgetFiveSix}. We additionally define the formula $\Phi_{= 5/6} = \mathsf{P}_{5/6} \Rightarrow \mathbb{P}(\FF^{2} \mathsf{P}_{5/6}') = \frac{5}{6}$. These definitions ensure the following proposition:
\begin{proposition}[Gadget testing equality to $5/6$ Specification]
	The formula $\Forall \GG \Phi_{= 5/6}$ is satisfied in gadget $\mathsf{Gd}_{= \frac{5}{6}}$ if and only if the entering probability
	$p_{\mathsf{Gd}_{=\frac{5}{6}}}^{\mathsf{en}}$ of the gadget $\mathsf{Gd}_{= \frac{5}{6}}$ is equal to $\frac{5}{6}$.
	\label{prop:gadgetFiveSix}
\end{proposition}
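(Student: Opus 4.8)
The plan is to read off the claim from the shape of $\mathsf{Gd}_{=\frac56}$ (Figure~\ref{GadgetFiveSix}) and from Lemma~\ref{lm:global-window-def}, which turns $\Forall\GG\Phi_{=5/6}$ into the requirement that the window formula $\Phi_{=5/6}=\mathsf{P}_{5/6}\Rightarrow\mathbb{P}(\FF^{2}\mathsf{P}'_{5/6})=\frac56$ hold at \emph{every} reachable state. Since $\Phi_{=5/6}$ is a Boolean implication, at every reachable state whose label does not contain $\mathsf{P}_{5/6}$ it holds vacuously; by construction the proposition $\mathsf{P}_{5/6}$ labels a single state of the gadget, call it $s_{5/6}$, and it is reachable as soon as the gadget is entered (in particular whenever it is well-placed, so that its entering probability is defined). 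Hence $\Forall\GG\Phi_{=5/6}$ is equivalent to the single constraint $\mathbb{P}_{s_{5/6}}(\FF^{2}\mathsf{P}'_{5/6})=\frac56$.

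The second step is to compute $\mathbb{P}_{s_{5/6}}(\FF^{2}\mathsf{P}'_{5/6})$ from the geometry of the gadget. The two-step cone below $s_{5/6}$ is arranged so that the only paths reaching a $\mathsf{P}'_{5/6}$-state within two steps are those routed through the transition whose probability is the strategy choice $\sigma_1(a)$ realising the entering probability $p^{\mathsf{en}}_{\mathsf{Gd}_{=\frac56}}$, while the complementary branch never meets a $\mathsf{P}'_{5/6}$-state within the horizon; therefore $\mathbb{P}_{s_{5/6}}(\FF^{2}\mathsf{P}'_{5/6})=p^{\mathsf{en}}_{\mathsf{Gd}_{=\frac56}}$. (This is the analogue, with no rescaling factor, of the role the constant $6$ plays in $\Phi_{\mathsf{keep}}$, and the horizon $2$ rather than $1$ accounts for the dummy steps at the head of a well-placed gadget; in particular it respects the global bound of $4$ on window lengths claimed for the whole reduction.) Combining this with the previous paragraph, $\Forall\GG\Phi_{=5/6}$ is satisfied in $\mathsf{Gd}_{=\frac56}$ if and only if $p^{\mathsf{en}}_{\mathsf{Gd}_{=\frac56}}=\frac56$.

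Spelling out the two directions: for \enquote{only if} one instantiates $\Forall\GG\Phi_{=5/6}$ at $s_{5/6}$ and uses $\mathbb{P}_{s_{5/6}}(\FF^{2}\mathsf{P}'_{5/6})=p^{\mathsf{en}}_{\mathsf{Gd}_{=\frac56}}$; for \enquote{if}, one checks $\Phi_{=5/6}$ at every reachable state of the gadget — vacuously away from $s_{5/6}$, and by the computation above at $s_{5/6}$ when $p^{\mathsf{en}}_{\mathsf{Gd}_{=\frac56}}=\frac56$ — and concludes $s\models\Pr{\GG\Phi_{=5/6}}=1$ at the entry state $s$ via Lemma~\ref{lm:global-window-def}. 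I do not expect a genuine obstacle here: the only points needing care are (i) confirming from Figure~\ref{GadgetFiveSix} that $\mathsf{P}_{5/6}$ labels exactly one state and that no other reachable state accidentally triggers the implication, and (ii) checking that the $\FF^{2}$ horizon picks up precisely the entering-probability transition and no spurious path that would inflate $\mathbb{P}_{s_{5/6}}(\FF^{2}\mathsf{P}'_{5/6})$; both are immediate from the picture.
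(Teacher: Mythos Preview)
Your proposal is correct and matches the paper's approach: the paper does not give a detailed proof of this proposition, stating only that \enquote{these definitions ensure the following proposition}, so your argument is precisely the spelling-out that the paper omits. The two steps you identify---reducing $\Forall\GG\Phi_{=5/6}$ to the single constraint at the unique $\mathsf{P}_{5/6}$-labelled state, and then reading off $\mathbb{P}_{s_{5/6}}(\FF^{2}\mathsf{P}'_{5/6})=p^{\mathsf{en}}_{\mathsf{Gd}_{=\frac56}}$ from the gadget structure---are exactly what is intended, and your caveats (i) and (ii) are the right checks to make against Figure~\ref{GadgetFiveSix}.
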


\begin{figure}
	\centering
	\includegraphics[scale=0.9]{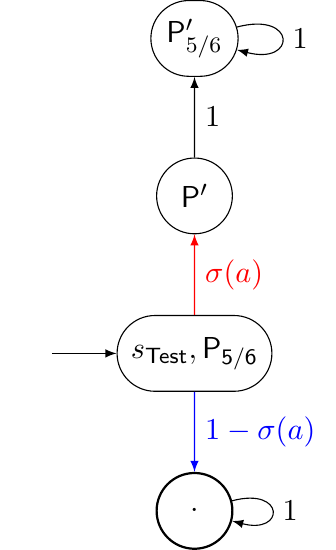}
	\caption{The gadget $\mathsf{Gd}_\mathsf{=5/6}$ testing the equality to $5/6$.}
	\label{GadgetFiveSix}
\end{figure}

\subsubsection{Gadget testing that a counter value is positive}
Consider now the gadget $\mathsf{Gd}_{c_2 > 0}$ testing that the value of the counter $c_2$ is positive. It is depicted in Figure~\ref{fig:GadgetTestNeq}.

\begin{figure}
	\centering
	\includegraphics[scale=1]{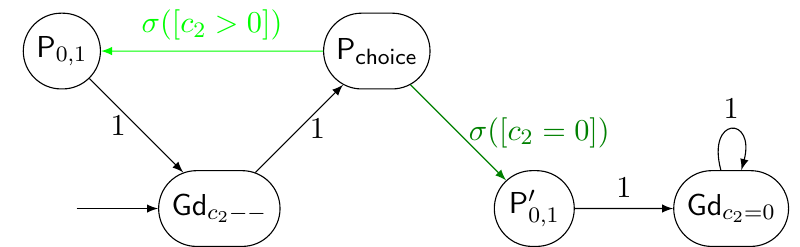}
	\caption{The gadget $\mathsf{Gd}_{c_2 > 0}$ for testing that the counter $c_2 > 0$.}
	\label{fig:GadgetTestNeq}
\end{figure}

The idea is the following: when the gadget is entered, the counter $c_2$ is decremented once. Then, when the predicate $\mathsf{P}_{\mathsf{choice}}$ is seen, the strategy $\sigma$ is given a deterministic (this is ensured by formula $\Phi_{0,1}$) choice: either it takes the edge on the left, and the counter $c_2$ is decremented again, or it takes the edge on the right to test that the value of the counter $c_2$ now is 0 (with the gadget $\mathsf{Gd}_{c_2 = 0}$). Hence, if the value of counter $c_2$ when entering the gadget is positive, it can be decremented to reach $0$, thus ensuring that no issue will arise when visiting the gadget $\mathsf{Gd}_{c_2 = 0}$. However, if this value is 0, since it is decremented once while entering the gadget, there is no way to visit the gadget $\mathsf{Gd}_{c_2 = 0}$ without raising an issue. The specification of this gadget is given below.

\begin{proposition}[Testing Positive Value Gadget Specification]
	Assume that the entering probability $p_{\mathsf{Gd}_{c_2 > 0}}^{\mathsf{en}}$ of the gadget $\mathsf{Gd}_{c_2 = 0}$ is equal to $p(x_2,x_3)$ for some $x_2,x_3 \in \mathbb{N}$. Then, there is a strategy $\sigma$ such that the formula $\Forall \GG [\Phi_{\mathsf{keep}} \wedge \Phi_{0,1} \wedge \Phi_{c_2--} \wedge \Phi_{c_3--} \wedge \Phi_{= 5/6}]$ is satisfied if and only if $x_2 > 0$.
	\label{prop:testing_gadget_neq}
\end{proposition}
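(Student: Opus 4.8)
The plan is to mirror the proof of Proposition~\ref{prop:testing_gadget}, treating the inner gadget $\mathsf{Gd}_{c_2=0}$ as a black box. I would first recall how $\mathsf{Gd}_{c_2>0}$ (Figure~\ref{fig:GadgetTestNeq}) behaves along a run satisfying the global window formula: on entry the counter $c_2$ is decremented once, an operation whose effect on the encoded probability---multiplication by $2$, \emph{i.e.}~$p(y_2,x_3)\mapsto p(y_2-1,x_3)$ when $y_2\geq 1$---is pinned down by $\Phi_{c_2--}$, exactly as Proposition~\ref{prop:incrementing_gadget} uses $\Phi_{c_2++}$. The run then reaches the predicate $\mathsf{P}_{\mathsf{choice}}$, where $\Phi_{0,1}$ forces a deterministic choice between (i) decrementing $c_2$ again and looping back to $\mathsf{P}_{\mathsf{choice}}$, and (ii) entering $\mathsf{Gd}_{c_2=0}$. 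By Proposition~\ref{prop:gadget_well_placed}, the entering probability of $\mathsf{Gd}_{c_2=0}$ then equals $2^{1+j}\,p(x_2,x_3)$, where $j\geq 0$ counts how many times branch~(i) is chosen.

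For the direction ``$x_2>0$ implies satisfiable'', I would build $\sigma$ so that at $\mathsf{P}_{\mathsf{choice}}$ it chooses branch~(i) exactly $x_2-1$ times and then branch~(ii). Since the entry decrement and the $x_2-1$ further decrements all act on a value of $c_2$ that is at least $1$, the formula $\Phi_{c_2--}$ is respected and the probability entering $\mathsf{Gd}_{c_2=0}$ is $2^{x_2}\,p(x_2,x_3)=p(0,x_3)$, a genuine encoding of a configuration whose $c_2$-value is $0$. I would then invoke Proposition~\ref{prop:testing_gadget} to extend $\sigma$ inside $\mathsf{Gd}_{c_2=0}$ so that $\Forall\GG[\Phi_{\mathsf{keep}}\wedge\Phi_{0,1}\wedge\Phi_{c_3--}\wedge\Phi_{=5/6}]$ holds there, and conclude that the extended $\sigma$ satisfies the full conjunction $\Forall\GG[\Phi_{\mathsf{keep}}\wedge\Phi_{0,1}\wedge\Phi_{c_2--}\wedge\Phi_{c_3--}\wedge\Phi_{=5/6}]$.

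For the converse I would argue by contraposition: assume $x_2=0$ and take any $\sigma$ satisfying the global window formula. The forced entry decrement, constrained by $\Phi_{c_2--}$ and transported by $\Phi_{\mathsf{keep}}$, multiplies the encoded probability by $2$, so the value that would be reached upon entering $\mathsf{Gd}_{c_2=0}$ is $2\,p(0,x_3)=\frac{5}{3^{x_3+1}}$---a number with no factor $2$ in its denominator (and already exceeding $1$ when $x_3=0$). Inside $\mathsf{Gd}_{c_2=0}$ the only arithmetic move available to $\sigma$ is to decrement $c_3$, each decrement multiplying the current probability by $3$; hence every probability reachable there, and in particular the one reaching the equality test $\mathsf{Gd}_{=5/6}$ if it is reached at all, is of the form $5\cdot 3^{m}$ with $m\in\mathbb{Z}$. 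By Proposition~\ref{prop:gadgetFiveSix} this value would have to equal $\frac{5}{6}=\frac{5}{2\cdot 3}$, which is impossible since $\frac{1}{6}$ is not an integer power of $3$. So no strategy works when $x_2=0$, which together with the previous paragraph yields the stated equivalence.

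The main obstacle is the converse direction: one must rule out that $\sigma$ escapes the argument through some unforeseen behaviour of the MDP, \emph{i.e.}~verify that inside $\mathsf{Gd}_{c_2=0}$ the encoded probability can genuinely only be rescaled by powers of $3$ (never recovering the spurious factor $2$ introduced by the forced entry decrement), and that $\Phi_{\mathsf{keep}}$, $\Phi_{c_2--}$ and $\Phi_{0,1}$ really do determine all the relevant probability transformations. This bookkeeping is essentially the same as in the proof of Proposition~\ref{prop:testing_gadget}, now combined with the elementary observation that a denominator of the shape $3^m$ can never be turned into $6$ by multiplying by powers of $3$.
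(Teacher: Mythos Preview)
Your proposal is correct and follows the same approach as the paper's informal argument: decrement $c_2$ down to $0$ when $x_2>0$ and invoke Proposition~\ref{prop:testing_gadget}, and for $x_2=0$ use unique factorisation to show the spurious factor $2$ introduced by the forced entry decrement can never be cancelled by the powers of $3$ available inside $\mathsf{Gd}_{c_2=0}$. One small point to tighten in the converse: you state the value entering $\mathsf{Gd}_{c_2=0}$ as $2\,p(0,x_3)$, but your own first paragraph correctly allows $2^{1+j}\,p(0,x_3)$ for any $j\geq 0$ (and you should also note that $j$ is forced to be finite since each loop doubles a probability bounded by~$1$); the impossibility argument via $2^{j+1}=3^{m}$ then covers all cases uniformly.
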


\subsubsection{Duplicating gadget}
Before considering the branching gadget, we need to be able to duplicate probabilities. Indeed, in the branching gadget, one probability is used to test that indeed the choice of the strategy is coherent with the values of the counters, the other is used to continue in the MDP. Hence, we need a gadget with two exiting probabilities that are both equal to the exiting probability of the previous gadget. This is done with the duplicating gadget $\mathsf{Gd}_\mathsf{dupl}$ depicted in Figure~\ref{fig:GadgetDupl}.

\begin{figure}
	\centering
	\includegraphics[scale=1]{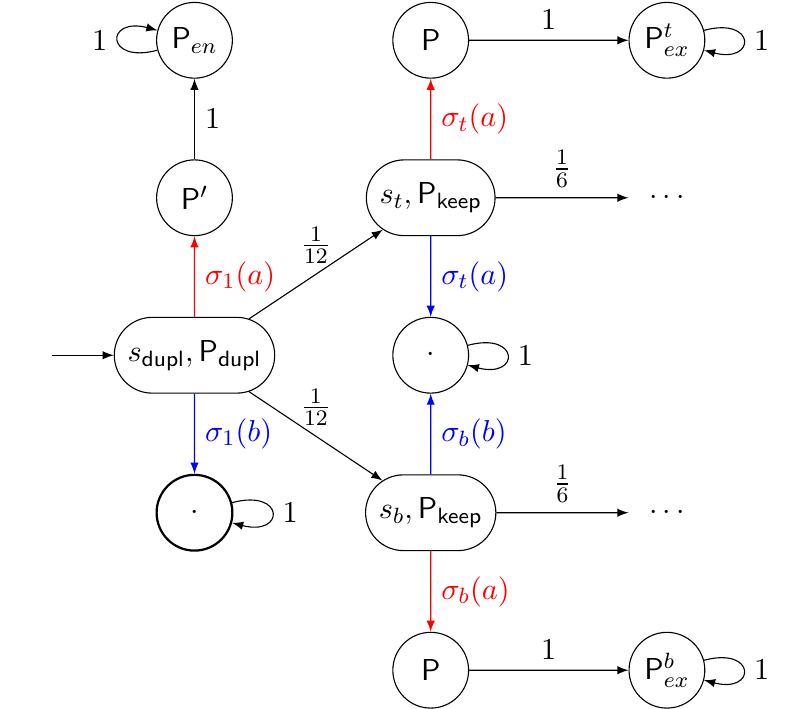}
	\caption{The gadget $\mathsf{Gd}_{\mathsf{dupl}}$ for duplicating probabilities.}
	\label{fig:GadgetDupl}
\end{figure}

We additionally need the following formula.
\begin{displaymath}
	\Phi_{\mathsf{dupl}} := \mathsf{P}_{\mathsf{dupl}} \Rightarrow (\mathbb{P}(\FF^{2} \mathsf{P}_{en}) = 12 \cdot \mathbb{P}(\FF^{3} \mathsf{P}^t_{ex}) \wedge \mathbb{P}(\FF^{2} \mathsf{P}_{en}) = 12 \cdot \mathbb{P}(\FF^{3} \mathsf{P}_{ex}^b)
\end{displaymath}

These definitions ensure the proposition below.
\begin{proposition}[Duplicating gadget Specification]
    The duplicating gadget $\mathsf{Gd}_\mathsf{dupl}$ has two exiting probabilities $p_{\mathsf{Gd}_\mathsf{dupl}}^{\mathsf{ex},t}$ (from state $s_t$) and $p_{\mathsf{Gd}_\mathsf{dupl}}^{\mathsf{ex},b}$ (from state $s_b$). Assume that this gadget $\mathsf{Gd}_\mathsf{dupl}$ is well-placed and follows a gadget $\mathsf{Gd}$. Then, for a strategy $\sigma$ such that the formula $\Forall \GG [\Phi_\mathsf{keep} \wedge \Phi_{\mathsf{dupl}}]$ is satisfied, then, the exiting probability of gadget $\mathsf{Gd}$ is equal to the exiting probabilities of gadget $\mathsf{Gd}_\mathsf{dupl}$:
	\begin{displaymath}
		p_{\mathsf{Gd}_\mathsf{dupl}}^{\mathsf{ex},t} = p_{\mathsf{Gd}_\mathsf{dupl}}^{\mathsf{ex},b} = p_{\mathsf{Gd}}^{\mathsf{ex}}
	\end{displaymath}
	\label{prop:gadget_dupl}
\end{proposition}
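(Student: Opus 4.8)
The plan is to reduce the stated double equality to the single internal identity $p^{\mathsf{ex},t}_{\mathsf{Gd}_\mathsf{dupl}} = p^{\mathsf{ex},b}_{\mathsf{Gd}_\mathsf{dupl}} = p^{\mathsf{en}}_{\mathsf{Gd}_\mathsf{dupl}}$, and then to supply the missing link to $\mathsf{Gd}$ from the well-placedness hypothesis. Since $\mathsf{Gd}_\mathsf{dupl}$ is well-placed and follows $\mathsf{Gd}$, and since $\sigma$ satisfies $\Forall \GG \Phi_\mathsf{keep}$, Proposition~\ref{prop:gadget_well_placed} already yields $p^{\mathsf{ex}}_{\mathsf{Gd}} = p^{\mathsf{en}}_{\mathsf{Gd}_\mathsf{dupl}}$. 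Thus it suffices to show that both exiting probabilities of $\mathsf{Gd}_\mathsf{dupl}$ coincide with its own entering probability.

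First I would translate the bounded-horizon probabilities appearing in $\Phi_\mathsf{dupl}$ into the entering and exiting probabilities, via structural identities read off from Figure~\ref{fig:GadgetDupl}. Writing $s_\mathsf{dupl}$ for the state carrying $\mathsf{P}_\mathsf{dupl}$, the quantity $\mathbb{P}(\FF^2 \mathsf{P}_{en})$ measured from $s_\mathsf{dupl}$ equals a fixed dilution constant $c$ times the entering probability $p^\mathsf{en}_{\mathsf{Gd}_\mathsf{dupl}}$, because reaching $\mathsf{P}_{en}$ within two steps factors as the (deterministic) path to the begin state followed by the one-step event defining $p^\mathsf{en}$. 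Symmetrically, reaching $\mathsf{P}^t_{ex}$ (resp.\ $\mathsf{P}^b_{ex}$) within three steps factors as the path into the top (resp.\ bottom) branch followed by the one-step exit event at $s_t$ (resp.\ $s_b$) that defines $p^{\mathsf{ex},t}$ (resp.\ $p^{\mathsf{ex},b}$). The gadget is built so that the incoming mass is split evenly between the two branches, which, combined with the extra transition ($\FF^3$ versus $\FF^2$), makes these equal to $\tfrac{c}{12}\, p^{\mathsf{ex},t}$ and $\tfrac{c}{12}\, p^{\mathsf{ex},b}$ respectively --- the constant $12$ being chosen precisely to absorb this even split together with the standard per-step dilution.

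Then, because $\Forall \GG \Phi_\mathsf{dupl}$ holds and the reachable state $s_\mathsf{dupl}$ satisfies $\mathsf{P}_\mathsf{dupl}$, the two conjuncts of $\Phi_\mathsf{dupl}$ fire, giving $\mathbb{P}(\FF^2 \mathsf{P}_{en}) = 12\cdot\mathbb{P}(\FF^3 \mathsf{P}^t_{ex})$ and $\mathbb{P}(\FF^2 \mathsf{P}_{en}) = 12\cdot\mathbb{P}(\FF^3 \mathsf{P}^b_{ex})$. Substituting the identities above, the factor $12$ cancels the $\tfrac{1}{12}$ dilution while the common constant $c$ cancels on both sides, leaving $p^\mathsf{en}_{\mathsf{Gd}_\mathsf{dupl}} = p^{\mathsf{ex},t}_{\mathsf{Gd}_\mathsf{dupl}}$ and $p^\mathsf{en}_{\mathsf{Gd}_\mathsf{dupl}} = p^{\mathsf{ex},b}_{\mathsf{Gd}_\mathsf{dupl}}$. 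Combined with $p^{\mathsf{ex}}_{\mathsf{Gd}} = p^{\mathsf{en}}_{\mathsf{Gd}_\mathsf{dupl}}$ from Proposition~\ref{prop:gadget_well_placed}, this yields the claimed chain of equalities.

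The main obstacle is the second step: pinning down the exact transition probabilities inside the duplicating gadget and verifying that a single constant $12$ simultaneously balances the top and the bottom branch. In particular one must confirm that the branch split is symmetric, so that the same dilution factor governs both $\mathsf{P}^t_{ex}$ and $\mathsf{P}^b_{ex}$, and that no strategy choice inside the gadget can break this symmetry --- otherwise the two equalities of $\Phi_\mathsf{dupl}$ would force incompatible relations. This is a bounded, finite computation once the figure is fixed, but it carries essentially all the arithmetic content of the proposition.
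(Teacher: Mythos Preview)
Your proposal is correct and follows the intended argument; the paper does not spell out a proof here, stating only that the definitions (the gadget of Figure~\ref{fig:GadgetDupl} together with $\Phi_\mathsf{dupl}$) directly ensure the proposition. Your decomposition via Proposition~\ref{prop:gadget_well_placed} followed by reading off the two conjuncts of $\Phi_\mathsf{dupl}$ at the $\mathsf{P}_\mathsf{dupl}$-state is exactly that direct verification, and your caveat about checking the constant $12$ against the figure is the only remaining bookkeeping.
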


\subsubsection{Branching Gadget}
We can now consider the branching gadget. Specifically, consider an instruction $\mathsf{Branch}_2(k,m)$. Then the corresponding gadget $\mathsf{Gd}_{\mathsf{B}(2,k,m)}$ is depicted in Figure~\ref{fig:GadgetBranching}. The idea is that, once a deterministic choice is made and we are either at state $s_m$ or $s_k$, we duplicate the probabilities to check that the choice was coherent with the values of the counters. Specifically, if $s_m$ is reached, then one must check that the value of counter $c_2$ is positive and if $s_k$ is reached, then one must check that the value of counter $c_2$ is 0.

\begin{figure}
	\centering
	\includegraphics[scale=1]{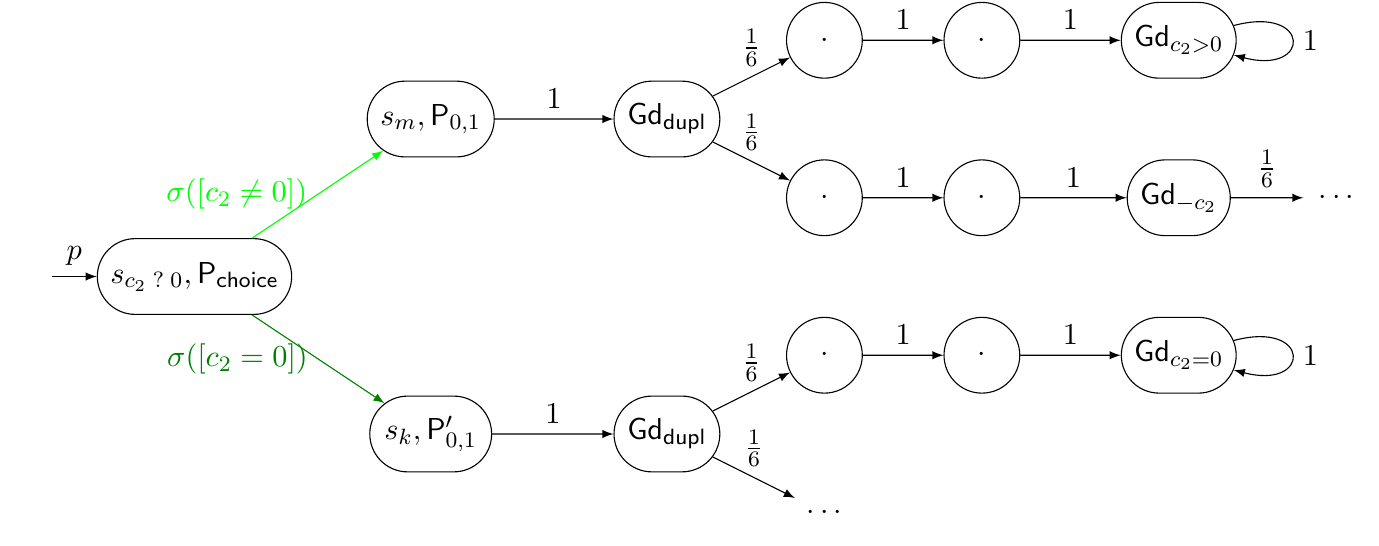}
	\caption{The gadget $\mathsf{Gd}_{\mathsf{dupl}}$ for duplicating probabilities.}
	\label{fig:GadgetBranching}
\end{figure}

This gadget ensures the following specification.
\begin{proposition}[Gadget $\mathsf{Gd}_{\mathsf{B}(2,k,m)}$ Specification]
		 Assume that the gadget $\mathsf{Gd}_{\mathsf{B}(2,k,m)}$ is well-placed and follows a gadget $\mathsf{Gd}$. If the exiting probability $p^{ex}_{\mathsf{Gd}}$ is equal to $p(x_2,x_3)$ for some $x_2,x_3 \in \N$, then for a strategy $\strat$ such that the formula $\Forall \GG [\Phi_\mathsf{keep} \wedge \Phi_{0,1} \wedge \Phi_{\mathsf{dupl}} \wedge \Phi_{c_2--} \wedge \Phi_{c_3--} \wedge \Phi_{= 5/6}]$ is satisfied, then:
		\begin{itemize}
			\item if $x_2 = 0$, then the state $s_k$ is reached and $p^{ex}_{\mathsf{Gd}_{\mathsf{B}(2,k,m)}} = p(x_2,x_3)$;
			\item if $x_2 > 0$, then the state $s_m$ is reached and $p^{ex}_{\mathsf{Gd}_{\mathsf{B}(2,k,m)}} = p(x_2-1,x_3)$.
		\end{itemize}
		\label{prop:mdp_test}
	\end{proposition}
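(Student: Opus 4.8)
The plan is to read off the behaviour of $\mathsf{Gd}_{\mathsf{B}(2,k,m)}$ from the specifications already established for its constituent sub-gadgets. Recall (Figure~\ref{fig:GadgetBranching}) that after the entry of the gadget a choice is offered to the strategy between a \enquote{zero} branch that leads to $s_k$ and a \enquote{non-zero} branch that leads to $s_m$; since the conjunct $\Phi_{0,1}$ is imposed on every reachable state by $\Forall\GG$, this choice must be made deterministically, so under any strategy $\sigma$ satisfying the hypothesised formula exactly one of $s_k$, $s_m$ is reached. On whichever branch is selected, the incoming probability is first duplicated by a copy of $\mathsf{Gd}_\mathsf{dupl}$: one copy feeds a consistency check, the other continues the simulation, possibly after a decrement.

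First I would pin down the entering probability: since $\mathsf{Gd}_{\mathsf{B}(2,k,m)}$ is well-placed and follows $\mathsf{Gd}$, Proposition~\ref{prop:gadget_well_placed} — which only needs $\Phi_\mathsf{keep}$ — gives $p^{\mathsf{en}}_{\mathsf{Gd}_{\mathsf{B}(2,k,m)}} = p^{\mathsf{ex}}_{\mathsf{Gd}} = p(x_2,x_3)$. Then, since the duplicating sub-gadget on the selected branch is itself well-placed, Proposition~\ref{prop:gadget_dupl} — which needs $\Phi_\mathsf{keep}\wedge\Phi_\mathsf{dupl}$ — yields that both of its exiting probabilities equal $p(x_2,x_3)$, so the test copy and the continuation copy both carry probability $p(x_2,x_3)$ as they enter the rest of their branch.

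Next I would split on the deterministic choice. If $\sigma$ selects the zero branch to $s_k$, the test copy enters $\mathsf{Gd}_{c_2=0}$ with entering probability $p(x_2,x_3)$, and by Proposition~\ref{prop:testing_gadget} the restricted formula $\Forall\GG[\Phi_\mathsf{keep}\wedge\Phi_{0,1}\wedge\Phi_{c_3--}\wedge\Phi_{= 5/6}]$ can be satisfied on this sub-gadget exactly when $x_2=0$; thus this branch is compatible with the hypothesised global formula only if $x_2=0$, and on it the continuation copy is left untouched, so $p^{\mathsf{ex}}_{\mathsf{Gd}_{\mathsf{B}(2,k,m)}}=p(x_2,x_3)$. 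Symmetrically, if $\sigma$ selects the non-zero branch to $s_m$, the test copy enters $\mathsf{Gd}_{c_2>0}$ and by Proposition~\ref{prop:testing_gadget_neq} the corresponding restricted formula is satisfiable there iff $x_2>0$; thus this branch is compatible only if $x_2>0$, and when $x_2>0$ the continuation copy passes through a decrementing gadget whose companion formula $\Phi_{c_2--}$ forces its probability to be multiplied by $2$, i.e.\ to become $p(x_2-1,x_3)$. Combining: exactly one branch is taken under $\sigma$ and only the \enquote{correct} one is consistent with the global formula, so if $x_2=0$ then $s_k$ is reached with $p^{\mathsf{ex}}=p(x_2,x_3)$, and if $x_2>0$ then $s_m$ is reached with $p^{\mathsf{ex}}=p(x_2-1,x_3)$; conversely, in each case a witnessing strategy is obtained by gluing, along the correct branch, the witnesses supplied by the sub-gadget propositions.

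The step I expect to be the main obstacle is the bookkeeping that links the global modality to the sub-gadgets: one must check that each sub-gadget ($\mathsf{Gd}_\mathsf{dupl}$, $\mathsf{Gd}_{c_2=0}$, $\mathsf{Gd}_{c_2>0}$, and the decrementing gadget) is genuinely well-placed inside the layout of Figure~\ref{fig:GadgetBranching} so that its specification applies, and argue a compositionality statement — a strategy satisfies $\Forall\GG[\cdots]$ on the whole branching gadget iff it satisfies the matching restricted formulae on each sub-gadget, which holds because each gadget carries a distinguishing predicate $\mathsf{P}_\cdot$ that triggers only its own implication — and in particular that a violation forced inside the \enquote{wrong} test sub-gadget cannot be repaired by later choices, precisely because $\Forall\GG$ ranges over \emph{every} reachable state. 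Once this plumbing is in place, the assembly of the sub-gadget guarantees is routine.
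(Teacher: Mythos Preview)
Your proposal is correct and follows the approach the paper intends: the paper does not give an explicit proof of this proposition, presenting it instead as an immediate consequence of the sub-gadget specifications (Propositions~\ref{prop:gadget_well_placed}, \ref{prop:gadget_dupl}, \ref{prop:testing_gadget}, \ref{prop:testing_gadget_neq}) together with the layout of Figure~\ref{fig:GadgetBranching}, and your argument is precisely the assembly of those pieces. The compositionality concern you flag---that each conjunct $\Phi_\cdot$ is triggered only by its own predicate $\mathsf{P}_\cdot$, so the global $\Forall\GG$ restricts correctly to each sub-gadget---is indeed the only nontrivial bookkeeping, and you have identified it accurately.
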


\subsubsection{Halting gadget}
Finally, the instruction $\mathsf{Halt}$ is translated into a gadget in which the formula $\Forall \GG \Phi_{\mathsf{Keep}}$ cannot be satisfied, regardless of the strategy $\sigma$ considered. Such a gadget is depicted in Figure~\ref{GadgetHalt}.
\begin{figure}
	\centering
	\includegraphics{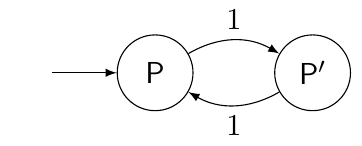}
	\caption{The gadget $\mathsf{Gd}_\mathsf{Halt}$.}
	\label{GadgetHalt}
\end{figure}

\begin{proposition}[Halting Gadget Specification]
	There is no strategy $\strat$ such that the formula $\Forall \GG \Phi_\mathsf{keep}$ is satisfied in the gadget $\mathsf{Gd}_\mathsf{Halt}$.
	\label{prop:halt}
\end{proposition}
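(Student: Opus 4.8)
The plan is to show that the gadget $\mathsf{Gd}_\mathsf{Halt}$ contains a reachable state on which $\Phi_\mathsf{keep}$ is falsified by \emph{every} strategy, which immediately yields the claim since $\Forall \GG \Phi_\mathsf{keep}$ requires $\Phi_\mathsf{keep}$ to hold on all reachable states. Recall that $\Phi_\mathsf{keep} = \mathsf{P}_\mathsf{keep} \Rightarrow (\mathbb{P}(\FF^{1} \mathsf{P}) = 6 \cdot \mathbb{P}(\FF^{4} \mathsf{P}'))$, so on a state $s$ where the predicate $\mathsf{P}_\mathsf{keep}$ holds it reduces to the numerical identity $\mathbb{P}_s(\FF^{1} \mathsf{P}) = 6 \cdot \mathbb{P}_s(\FF^{4} \mathsf{P}')$. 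In all other gadgets this identity is used to transfer the exiting probability of one gadget to the entering probability of the next (Proposition~\ref{prop:gadget_well_placed}); the point of $\mathsf{Gd}_\mathsf{Halt}$ is that it is built so that this identity has \emph{no} solution, whatever the strategy does.

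Concretely, I would proceed as follows. First, locate in Figure~\ref{GadgetHalt} the state $s$ carrying $\mathsf{P}_\mathsf{keep}$ and check that it is reached with positive probability as soon as the gadget is entered, so it is a reachable state of $\MDPstrat$ for every $\strat$, hence $\Forall \GG \Phi_\mathsf{keep}$ must impose $\Phi_\mathsf{keep}$ on it. Second, read off the one-step and four-step behaviour of $s$: by construction $s$ has a successor satisfying $\mathsf{P}$ that is reached with a fixed positive probability independent of the strategy, so $\mathbb{P}_s(\FF^{1} \mathsf{P})$ is a strictly positive constant, whereas no state satisfying $\mathsf{P}'$ lies within four steps of $s$ regardless of the actions played, so $\mathbb{P}_s(\FF^{4} \mathsf{P}') = 0$ for every $\strat$. (If the figure instead pins both quantities to fixed constants with a ratio different from $6$, the same reasoning applies verbatim.) Third, conclude: $\Phi_\mathsf{keep}$ on $s$ would force the positive constant $\mathbb{P}_s(\FF^{1} \mathsf{P})$ to equal $6 \cdot 0 = 0$, a contradiction; hence no strategy $\strat$ satisfies $\Forall \GG \Phi_\mathsf{keep}$ in $\mathsf{Gd}_\mathsf{Halt}$.

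The only delicate point is the second step: one must verify from Figure~\ref{GadgetHalt} that the contradiction is robust to all of the strategy's choices — that $\mathsf{P}_\mathsf{keep}$ unavoidably labels a reachable state, that the probability of reaching a $\mathsf{P}$-state in one step cannot be pushed down to $0$, and that no sequence of four actions reaches a $\mathsf{P}'$-state. This is a finite case analysis over the actions available in the constant-size gadget, and, unlike the previous gadgets where the strategy could pick a probability that makes the specification hold, here the falsification of $\Phi_\mathsf{keep}$ is unconditional, which is exactly what makes $\mathsf{Gd}_\mathsf{Halt}$ play the role of a ``sink'' in the reduction.
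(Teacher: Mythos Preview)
Your approach is correct and matches the paper's intent: the paper does not spell out a proof beyond pointing to Figure~\ref{GadgetHalt} and stating that $\Forall \GG \Phi_{\mathsf{keep}}$ cannot be satisfied there, so your argument is a faithful reconstruction of what inspection of the figure is meant to yield. The key mechanism you identify --- a reachable $\mathsf{P}_\mathsf{keep}$-state at which the identity $\mathbb{P}(\FF^{1} \mathsf{P}) = 6 \cdot \mathbb{P}(\FF^{4} \mathsf{P}')$ is unsatisfiable regardless of the strategy's choices --- is exactly the design principle of the halting gadget, and your hedge covering the variant where both sides are fixed constants with the wrong ratio is prudent given that the precise wiring depends on the figure.
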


\subsection{Proof for flat formulae}
\label{app:undecidability_flat}
If we consider arbitrary flat formulae, then the synthesis problem for global window $L$-PCTL is
$\Sigma_1^1$-hard. Indeed, we reduce the problem of existence of an (infinite) execution in a non-deterministic two-counter Minsky machine that visits infinitely often the first instruction, denoted $\mathsf{MinskyVisit}_\infty$, which is highly undecidable, as stated in the theorem below.

\begin{theorem}[\cite{AH94}]
	\label{thm:minsky_sigma_one_one_hard}
	$\mathsf{MinskyVisit}_\infty$ is $\Sigma_1^1$-hard.
\end{theorem}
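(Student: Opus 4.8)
The plan is to prove $\Sigma_1^1$-hardness by a computable reduction from the \emph{recurrence problem} for nondeterministic Turing machines, which is the canonical $\Sigma_1^1$-complete problem: given a nondeterministic Turing machine $\mathcal{T}$ with a distinguished state $q_r$, decide whether $\mathcal{T}$ has an infinite computation entering $q_r$ infinitely often. Its $\Sigma_1^1$-hardness is the standard fact that the existence of an infinite branch visiting a designated recursive set infinitely often, in a recursive finitely-branching tree, is $\Sigma_1^1$-complete: such a branch is an infinite object $\rho:\N\to\text{Configurations}$ quantified existentially, and "$\rho$ is a valid run of $\mathcal{T}$ visiting $q_r$ infinitely often" is arithmetical ($\Pi^0_2$). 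Note that the B\"uchi-style "infinitely often" condition is essential here — it is what pushes the problem beyond K\"onig's lemma (and beyond the arithmetical hierarchy), and it is also why the target model must be \emph{nondeterministic}, matching the statement of $\mathsf{MinskyVisit}_\infty$. It thus suffices to map each instance $(\mathcal{T},q_r)$ to a nondeterministic two-counter machine $\mathsf{Msk}$ such that $\mathcal{T}$ admits a $q_r$-recurrent infinite run if and only if $\mathsf{Msk}\in\mathsf{MinskyVisit}_\infty$.

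First I would invoke Minsky's simulation of Turing machines by counter machines. The tape of $\mathcal{T}$ is Gödel-encoded, storing the contents on either side of the head as two naturals read in base $|\Sigma|+1$, so that a machine with a fixed number $k$ of counters simulates $\mathcal{T}$: head moves and symbol rewrites become multiplications, divisions, and divisibility tests on the counters, all realizable from the $\mathsf{Inc}$ and $\mathsf{Branch}$ primitives. The nondeterministic transitions of $\mathcal{T}$ are reproduced by the branching instructions. I would then collapse the $k$ counters to two by the standard prime-exponent packing: the tuple $(n_1,\dots,n_k)$ is stored as $2^{n_1}3^{n_2}\cdots$ in one counter, with the second counter serving as scratch for the multiply/divide-and-test subroutines. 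This yields the desired two-counter machine $\mathsf{Msk}$, and the whole construction is a uniform effective transformation, hence a many-one reduction.

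The crucial point is to transfer the recurrence condition exactly. Since the packing subroutines expand one step of $\mathcal{T}$ into many steps of $\mathsf{Msk}$, I would designate $l_1$ (the first instruction) as the \emph{unique} entry point of the block that simulates a transition taken from state $q_r$, and route control so that $l_1$ is never reached inside any auxiliary packing/unpacking loop. Then a visit to $l_1$ corresponds precisely to one discrete step of $\mathcal{T}$ out of $q_r$. Consequently, a $q_r$-recurrent infinite run of $\mathcal{T}$ induces an infinite run of $\mathsf{Msk}$ visiting $l_1$ infinitely often, and conversely any infinite $l_1$-recurrent run of $\mathsf{Msk}$ decodes, through the Gödel encoding, into an infinite run of $\mathcal{T}$ hitting $q_r$ infinitely often.

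I expect the main obstacle to be exactly this bookkeeping around the $\omega$-condition. The two-counter simulation is correct for reachability and halting essentially out of the box, but transferring liveness requires ruling out \emph{spurious} infinite behaviour confined to the encoding gadgets — for instance an infinite divide-by-prime loop that never returns to $l_1$ — and guaranteeing that each genuine TM step through $q_r$ produces exactly one clean visit to $l_1$. I would therefore make every gadget total and loop-free, terminating after a number of steps bounded by the current counter values, so that between consecutive potential visits to $l_1$ only finitely many steps elapse. This forces finiteness and infiniteness of runs to match and ties the recurrence property of $\mathcal{T}$ to that of $\mathsf{Msk}$, completing the reduction and establishing $\Sigma_1^1$-hardness of $\mathsf{MinskyVisit}_\infty$, as stated in \cite{AH94}.
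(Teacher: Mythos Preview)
The paper does not actually prove this theorem: it is stated with a citation to \cite{AH94} and used as a black box in the reduction that follows. There is therefore no ``paper's own proof'' to compare against; your proposal stands on its own as a reconstruction of the classical argument.

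That said, your sketch is the right one. The recurrence problem for nondeterministic Turing machines is the canonical $\Sigma_1^1$-complete problem, and Minsky's Turing-to-two-counter simulation, adapted so that the distinguished instruction $l_1$ is hit exactly once per simulated visit to $q_r$ and never inside an auxiliary loop, gives the required many-one reduction. Your identification of the only real pitfall --- ruling out spurious infinite runs trapped in the packing/unpacking gadgets, so that liveness (not just reachability) transfers --- is exactly the point that needs care, and your fix (making every gadget terminate in a number of steps bounded by the current counter values) is the standard way to handle it.
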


In a non-deterministic Minsky machine, the halt instruction is replaced by a non-deterministic choice: $\mathsf{Choice}(k,m): goto \{ k,m \}$. We adapt our reduction to take into account this new instruction and semantics. %

Consider a non-deterministic Minsky machine $\mathsf{Msk} = l_1:\mathsf{ins}_1,\ldots,l_n:\mathsf{ins}_n$ on two counters $c_2$ and $c_3$. First, we need to be able to encode the new instruction $\mathsf{Choice}(k,m)$. This is handled straightforwardly as it only consists in giving the strategy a choice between two branches. The choice needs to be deterministic (i.e. with probability 0,1). This kind of issue is already handled in the gadget $\mathsf{Gd}_{c_2 = 0}$ from Figure~\ref{GadgetTestEq0}. In addition, we have to handle the new specification of the Minsky machine: a positive instance is not one that does not halt but one that visits infinitely often the first instruction. The idea is that, whenever the first instruction is visited, the strategy is given a choice that will correspond to the (finite) number of steps necessary to see the first instruction again. The number of steps $x$ is encoded in a new counter $c_5$ with a probability $p_5 = \frac{1}{5^x}$. Since the number of steps needed to see the first instruction again may be arbitrarily high, the probability $p_5$ may be arbitrarily close to 0, hence, we need to consider a non-strict formula to specify that $p_5 > 0$. Then, each time an instruction that is not the first one is seen, this counter is decremented and an issue arises if such a decrement is not possible.

The addition of this new counter $c_5$ changes a few things. For instance, when testing the equality of a counter to 0, we have to also give the possibility to decrement the counter down to 0 (just like it is done for counter $c_3$ in gadget $\mathsf{Gd}_{c_2++}$). However, note that these changes are minor and do not change the idea of the construction.

We give here the gadget to initialize the value of the counter to an arbitrary high (but still finite) value. The gadget $\mathsf{Gd}_{c_5}$ initializing this counter is depicted in Figure~\ref{fig:NewGadget}.
\begin{figure}
	\centering
	\includegraphics{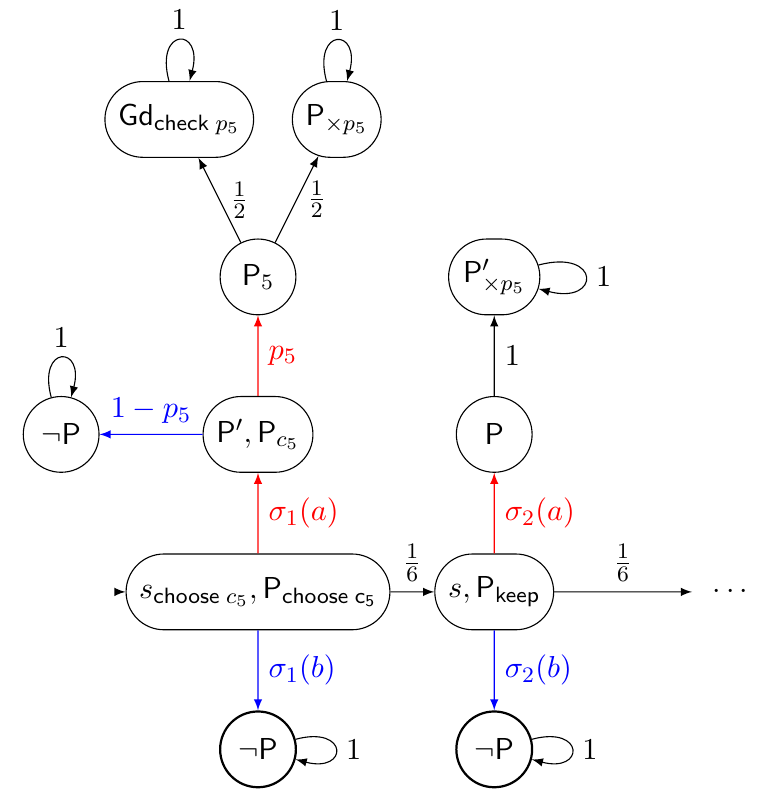}
	\caption{The gadget $\mathsf{Gd}_{c_5}$ for initializing the counter $c_5$.}
	\label{fig:NewGadget}
\end{figure}

The gadget $\mathsf{Gd}_{\mathsf{check \; c_5}}$ is here to check that $p_5 = \frac{1}{5^x}$ for some $x > 0$. Note this is done similarly to the gadget $\mathsf{Gd}_{c_2 = 0}$ from Figure~\ref{GadgetTestEq0} by considering the probability to go from a state ensuring $\mathsf{P}_{c_5}$ to a state ensuring $\mathsf{P}_5$. Then, we consider the formula $\Phi_{c_5}$ below which ensures that the new running probability to see $\mathsf{P}$ has been multiplied by $p_5$:
\begin{displaymath}
\Phi_{c_5} := \mathsf{P}_{\mathsf{choose} \; c_5} \Rightarrow (\mathbb{P}(\FF^3 \mathsf{P}_{\times \; p_5}) > 0 \wedge \mathbb{P}(\FF^3 \mathsf{P}_{\times \; p_5}) = 3 \cdot \mathbb{P}(\FF^3 \mathsf{P}_{\times \; p_5}'))
\end{displaymath}

Note that this formula uses a strict comparison of probabilities. These definitions ensure proposition below.
\begin{proposition}[Gadget Initializing $c_5$ Specification]
	The values that the exiting probability $p_{\mathsf{Gd}_{c_5}}^{\mathsf{ex}}$ can take under the condition that the formula $\Forall \GG \Phi$ is satisfied (where $\Phi$ is a conjunction of the formulae defined so far including $\Phi_{c_5}$) are exactly $\{ p_{\mathsf{Gd}_{c_5}}^{\mathsf{en}}\cdot \frac{1}{5^n} \mid n \in \mathbb{N} \}$.
	\label{prop:init_c5_gadget}
\end{proposition}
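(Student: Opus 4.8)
The plan is to read the gadget $\mathsf{Gd}_{c_5}$ of Figure~\ref{fig:NewGadget} as a loop that the strategy may traverse an arbitrary \emph{finite} number of times, each traversal scaling the running probability by $\frac{1}{5}$, and to use the sub-gadget $\mathsf{Gd}_{\mathsf{check}\,c_5}$ together with the formula $\Phi_{c_5}$ to certify both that the number of traversals is finite and that the exiting probability equals the entering probability, written $q := p^{\mathsf{en}}_{\mathsf{Gd}_{c_5}}$, scaled by the corresponding power of $\frac{1}{5}$. Among the conjuncts of $\Phi$, only those whose triggering predicates occur inside $\mathsf{Gd}_{c_5}$ are relevant, namely $\Phi_{c_5}$, $\Phi_{\mathsf{keep}}$, the determinism constraint $\Phi_{0,1}$, and the formula checked by $\mathsf{Gd}_{\mathsf{check}\,c_5}$ (which is built as the $\mathsf{Gd}_{c_2=0}$ gadget of Figure~\ref{GadgetTestEq0}); the remaining ones are vacuously satisfied here. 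We then prove the two inclusions between the set of attainable exiting probabilities and $\{q\cdot\frac{1}{5^n}\mid n\in\N\}$ separately.

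For the inclusion $\supseteq$, fix $n$ and consider the strategy that at each loop-entry state takes the \enquote{re-enter} branch for the first $n$ iterations and the \enquote{exit} branch afterwards. By induction on the number $k$ of traversals already performed, using the local-consistency equalities imposed by $\Phi_{c_5}$ and $\Phi_{\mathsf{keep}}$ that relate the probabilities of seeing $\mathsf{P}_{\times p_5}$, $\mathsf{P}'_{\times p_5}$ and the generic gadget predicate $\mathsf{P}$ on consecutive states, one shows that after $k$ traversals the running probability is $q\cdot\frac{1}{5^k}$, so that the value $p_5 = \frac{1}{5^n}$ committed on exit is of the required shape and passes the test of $\mathsf{Gd}_{\mathsf{check}\,c_5}$. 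Hence this strategy satisfies $\Forall\GG\Phi$ on every state reachable inside the gadget, with exiting probability $q\cdot\frac{1}{5^n}$.

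For the inclusion $\subseteq$, let $\sigma$ satisfy $\Forall\GG\Phi$. The determinism constraint $\Phi_{0,1}$ forces the choice at each loop-entry state to be deterministic, so the run of $\sigma$ through the gadget is a deterministic succession of traversals that either exits after finitely many steps or never exits. The infinite case is impossible: it would leave $\mathsf{P}_{\times p_5}$ unreachable, equivalently it would force $p_5$ to be $0$, contradicting the \emph{strict} inequality $\mathbb{P}(\FF^3 \mathsf{P}_{\times p_5}) > 0$ in $\Phi_{c_5}$ — this is precisely the place, as remarked before the proposition, where the construction must step outside the flat non-strict fragment. Once the loop is exited after some finite $n$, the same chain of equalities as above, now read as forced constraints, pins the exiting probability to $q\cdot\frac{1}{5^n}$, and $\mathsf{Gd}_{\mathsf{check}\,c_5}$ imposes no constraint excluding any particular $n\in\N$.

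The step I expect to be the main obstacle is the $\subseteq$ direction, and within it the verification that an infinite traversal is genuinely incompatible with $\Forall\GG\Phi$ while every finite $n$ is compatible: this requires matching the literal constants ($3$, $6$, $12$, \dots) appearing in $\Phi_{c_5}$, $\Phi_{\mathsf{keep}}$ and the $\mathsf{Gd}_{\mathsf{check}\,c_5}$ formula with the branching probabilities hard-wired in Figure~\ref{fig:NewGadget}, and checking that the invariant \enquote{running probability $= q\cdot 5^{-k}$ after $k$ traversals} is preserved both across a single traversal and across the gadget boundary. Given the close analogy with the incrementing gadget (Proposition~\ref{prop:incrementing_gadget}) and with the testing gadget of Proposition~\ref{prop:testing_gadget}, the remaining arguments are routine bookkeeping.
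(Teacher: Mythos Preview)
The paper does not actually supply a proof of this proposition: it is stated immediately after the definition of $\mathsf{Gd}_{c_5}$ and $\Phi_{c_5}$ as a direct consequence of the construction, in the same spirit as the other gadget specifications (Propositions~\ref{prop:incrementing_gadget}, \ref{prop:testing_gadget}, \ref{prop:gadget_dupl}, etc.), none of which are proved in detail either. Your sketch is therefore not competing with an existing argument but filling in an intentionally omitted one, and the overall shape you propose---two inclusions, the strict conjunct $\mathbb{P}(\FF^3\mathsf{P}_{\times p_5})>0$ ruling out $p_5=0$, and the sub-gadget $\mathsf{Gd}_{\mathsf{check}\,c_5}$ forcing $p_5$ to be a negative power of $5$---is exactly what the paper's surrounding text suggests.

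One point to adjust: you describe $\mathsf{Gd}_{c_5}$ itself as a loop traversed $n$ times with a deterministic $\Phi_{0,1}$-choice at each pass. The paper's text indicates a different decomposition: at the state labelled $\mathsf{P}_{\mathsf{choose}\,c_5}$ the strategy picks a \emph{single} real value $p_5$ (this is where the strict inequality bites, directly forcing $p_5>0$), the equality $\mathbb{P}(\FF^3\mathsf{P}_{\times p_5})=3\cdot\mathbb{P}(\FF^3\mathsf{P}'_{\times p_5})$ propagates the running probability multiplied by $p_5$, and it is the \emph{check} sub-gadget $\mathsf{Gd}_{\mathsf{check}\,c_5}$---built like $\mathsf{Gd}_{c_2=0}$---that contains the loop (successive multiplications by~$5$ followed by an equality test) pinning $p_5$ to the set $\{5^{-n}\mid n\in\N\}$. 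So the determinism constraint $\Phi_{0,1}$ and the inductive \enquote{$q\cdot 5^{-k}$ after $k$ traversals} invariant live inside $\mathsf{Gd}_{\mathsf{check}\,c_5}$, not at the top level of $\mathsf{Gd}_{c_5}$. With that relocation your argument goes through, and the analogy you draw with Propositions~\ref{prop:incrementing_gadget} and~\ref{prop:testing_gadget} is the right one.
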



\begin{thebibliography}{10}

\bibitem{ABBD20}
E.~{\'{A}}brah{\'{a}}m, E.~Bartocci, B.~Bonakdarpour, and O.~Dobe.
\newblock Probabilistic hyperproperties with nondeterminism.
\newblock In {\em ATVA}, pages 518--534. Springer, 2020.

\bibitem{AB18}
E.~{\'{A}}brah{\'{a}}m and B.~Bonakdarpour.
\newblock Hyperpctl: {A} temporal logic for probabilistic hyperproperties.
\newblock In {\em QEST}, pages 20--35. Springer, 2018.

\bibitem{AH94}
Rajeev Alur and Thomas~A. Henzinger.
\newblock A really temporal logic.
\newblock {\em J. {ACM}}, 41(1):181--204, 1994.

\bibitem{ACJK21}
R.~Andriushchenko, M.~Ceska, S.~Junges, and J{-}P Katoen.
\newblock Inductive synthesis for probabilistic programs reaches new horizons.
\newblock In {\em TACAS, Part {I}}, pages 191--209. Springer, 2021.

\bibitem{10.1007/1-4020-8141-3_38}
C.~Baier, M.~Gr{\"o}{\ss}er, M.~Leucker, B.~Bollig, and F.~Ciesinski.
\newblock Controller synthesis for probabilistic systems.
\newblock In {\em Exploring New Frontiers of Theoretical Informatics}, pages
  493--506. Springer US, 2004.

\bibitem{BK08}
C.~Baier and J{-}P. Katoen.
\newblock {\em Principles of model checking}.
\newblock {MIT} Press, 2008.

\bibitem{DBLP:conf/lics/BrazdilFKK08}
T.~Br{\'{a}}zdil, V.~Forejt, J.~Kret{\'{\i}}nsk{\'{y}}, and A.~Kucera.
\newblock The satisfiability problem for probabilistic {CTL}.
\newblock In {\em LICS}, pages 391--402. {IEEE} Computer Society, 2008.

\bibitem{10.1145/62212.62257}
John Canny.
\newblock Some algebraic and geometric computations in pspace.
\newblock In {\em STOC}, pages 460--467, 1988.

\bibitem{CK16}
S.~Chakraborty and J-P Katoen.
\newblock On the satisfiability of some simple probabilistic logics.
\newblock In {\em LICS}, pages 56--–65. Association for Computing Machinery,
  2016.

\bibitem{CFKMRS14}
M.~R. Clarkson, B.~Finkbeiner, M.~Koleini, K.~K. Micinski, M.~N. Rabe, and
  C.~S{\'{a}}nchez.
\newblock Temporal logics for hyperproperties.
\newblock In {\em POST}, pages 265--284. Springer, 2014.

\bibitem{CS10}
M.~R. Clarkson and F.~B. Schneider.
\newblock Hyperproperties.
\newblock {\em J. Comput. Secur.}, 18(6):1157--1210, 2010.

\bibitem{DFT20}
R.~Dimitrova, B.~Finkbeiner, and H.~Torfah.
\newblock Probabilistic hyperproperties of markov decision processes.
\newblock In {\em ATVA}, pages 484--500. Springer, 2020.

\bibitem{FH10}
N.~Fijalkow and F.~Horn.
\newblock The surprizing complexity of reachability games.
\newblock {\em CoRR}, abs/1010.2420, 2010.

\bibitem{GM82}
J.~A. Goguen and J.~Meseguer.
\newblock Security policies and security models.
\newblock In {\em {IEEE} Symposium on Security and Privacy}, pages 11--20.
  {IEEE} Computer Society, 1982.

\bibitem{Gray92}
J.~W.~Gray III.
\newblock Toward a mathematical foundation for information flow security.
\newblock {\em J. Comput. Secur.}, 1(3-4):255--294, 1992.

\bibitem{KR18}
Jan Kret{\'{\i}}nsk{\'{y}} and Alexej Rotar.
\newblock The satisfiability problem for unbounded fragments of probabilistic
  {CTL}.
\newblock In {\em CONCUR}, pages 32:1--32:16. Schloss Dagstuhl -
  Leibniz-Zentrum f{\"{u}}r Informatik, 2018.

\bibitem{10.1109/LICS.2006.48}
A.~Kucera, V.~Forejt, V.~Brozek, and T.~Brazdil.
\newblock Stochastic games with branching-time winning objectives.
\newblock In {\em LICS}, pages 349--358. IEEE Computer Society, 2006.

\bibitem{10.1007/11590156_44}
A.~Ku{\v{c}}era and O.~Stra{\v{z}}ovsk{\'y}.
\newblock On the controller synthesis for finite-state markov decision
  processes.
\newblock In {\em FSTTCS}, pages 541--552. Springer Berlin Heidelberg, 2005.

\bibitem{Minsky67}
Marvin~L. Minsky.
\newblock {\em Computation: Finite and Infinite Machines}.
\newblock Prentice-Hall, 1967.

\bibitem{RENEGAR1992329}
J.~Renegar.
\newblock On the computational complexity and geometry of the first-order
  theory of the reals. part iii: Quantifier elimination.
\newblock {\em Journal of Symbolic Computation}, 13(3):329--352, 1992.

\bibitem{ZM03}
S.~Zdancewic and A.~C. Myers.
\newblock Observational determinism for concurrent program security.
\newblock In {\em 16th {IEEE} Computer Security Foundations Workshop}. {IEEE}
  Computer Society, 2003.

\end{thebibliography}
\end{document}